\def\qu#1 {\fbox {\footnote {\ }}\ \footnotetext { From Qu: {\color{red}#1}}}
\def\hqu#1 {}
\def\kq#1 {\fbox {\footnote {\ }}\ \footnotetext { From KangQuan: {\color{blue}#1}}}
\def\hkq#1 {}
\newtheorem{Th}{Theorem}[section]
\newtheorem{Lemma}[Th]{Lemma}
\newtheorem{Def}[Th]{Definition}
\newtheorem{Rem}[Th]{Remark}
\newcommand{\tr}{{\rm Tr}}
\newcommand{\rank}{{\rm rank}}
\newcommand{\gf}{{\mathbb F}}
\newcommand{\supp}{{\rm Supp}}
\newcommand{\bu}{{\bf u}}
\newcommand{\bv}{{\bf v}}
\newcommand{\wt}{{\rm wt}}
\newcommand{\figcaption}{\def\@captype{figure}\caption}
\newcommand{\tabcaption}{\def\@captype{table}\caption}
\begin{document}
	\title{ New constructions of $2$-to-$1$ mappings over $\gf_{2^n}$ and their applications to binary linear codes}
	\author{{Yaqin Li, Kangquan Li, Qiancheng Zhang}\\
	\thanks{  Yaqin Li, Kangquan Li and Qiancheng Zhang are with the College of Science,
		National University of Defense Technology, Changsha, 410073, China.
		E-mail:   liyaqin20@nudt.edu.cn, likangquan11@nudt.edu.cn, zhangqiancheng20@nudt.edu.cn.
		This work is supported by the National Key R\&D Program of China (No. 2024YFA1013000) and the National Natural Science Foundation of China  (No. 62202476 and 62172427). 
		{\emph{(Corresponding author: Kangquan Li)}}}
	}
	\maketitle{}

\begin{abstract}
	The $2$-to-$1$ mapping over finite fields has a wide range of applications, including combinatorial mathematics and coding theory. Thus, constructions of $2$-to-$1$ mappings have attracted considerable attention recently. Based on summarizing the existing construction results of all $2$-to-$1$ mappings over finite fields with even characteristic, this article first applies the generalized switching method to the study of $2$-to-$1$ mappings, that is, to construct $2$-to-$1$ mappings over the finite field $\mathbb{F}_{q^l}$ with $F(x)=G(x)+{\rm Tr}_{q^l/q}(R(x))$, where $G$ is a monomial and $R$ is a monomial or binomial. Using the properties of Dickson polynomial theory and the complete characterization of low-degree equations, we construct a total of $16$ new classes of $2$-to-$1$ mappings, which are not QM-equivalent to any existing $2$-to-$1$ polynomials. Among these, $9$ classes are of the form $cx + {\rm Tr}_{q^l/q}(x^d)$, and $7$ classes have the form $cx + {\rm Tr}_{q^l/q}(x^{d_1} + x^{d_2})$. These new infinite classes explain most of numerical results by MAGMA under the conditions that $q=2^k$, $k>1$, $kl<14$ and $c \in \gf_{q^l}^*$. Finally, we construct some binary linear codes using the newly proposed $2$-to-$1$ mappings of the form $cx + {\rm Tr}_{q^l/q}(x^d)$. The weight distributions of these codes are also determined. Interestingly, our codes are self-orthogonal, minimal, and have few weights.
\end{abstract}

\begin{IEEEkeywords}
	 $2$-to-$1$ mapping, switching method, low degree equation, binary linear code, weight distribution.
\end{IEEEkeywords}

\section{Introduction}
 Let $f$ be a mapping from $\gf_{2^n}$ to $\gf_{2^n}$. $f$ is said to be a $2$-to-$1$ mapping over $\gf_{2^n}$ if and only if for any $a\in\gf_{2^n}$, $a$ has either $0$ or $2$ preimages of $f$ \cite{mesnager2019two}. The $2$-to-$1$ mapping has a wide range of applications in combinatorial mathematics \cite{kolsch2024classifications,cesmelioglu2015bent,carlet2011dillonʼs}, coding \cite{li2021binary,mesnager2014several, berger2003application,delsarte1998association,helleseth2005error},   cryptography \cite{zhang2025balancedbooleanfunctionsfewvalued,yuan2021twoandinvolution,idrisova2019algorithm,qu2025parametric,nyberg1993differentially}, and other fields. For example, any hyperoval in the projective plane  $\mathrm{PG}(2,2^n)$ over $\gf_{2^n}$ can be generated by the so-called o-polynomials \cite{cherowitzo1988hyperovals}, which are polynomials $f$ satisfying that $f(x)+\beta x$ is $2$-to-$1$ for any $\beta\in\gf_{2^n}^{*}$ \cite{carlet2011dillonʼs}. 
 In 2019, Mesnager and Qu \cite{mesnager2019two}  provided the first systematic study of $2$-to-$1$ mappings over finite fields. Since then, many constructions and applications of $2$-to-$1$ mappings have been proposed, e.g. \cite{li2021furtherstudy,mesnager2023several,yuan2021twoandinvolution}. Note that the definition of $2$-to-$1$ mappings is suitable for finite fields with any characteristic, and has been generalized into $m$-to-$1$ mappings for any positive integer $m$, e.g. \cite{niu2023characterizations,zheng2024many}. This paper only considers $2$-to-$1$ mappings over finite fields with even characteristic.   In the following,  we first summarize the existing $2$-to-$1$ mappings over $\gf_{2^n}$ from polynomials with few terms, low-degree polynomials, polynomials of the form  $(x^{2^k}+x+\delta)^s+cx$ and polynomials of the form   $(x^{2^k}+x+\delta)^{s_1}+(x^{2^k}+x+\delta)^{s_2}+cx$.
 
\subsection{Polynomials with few terms}
Firstly, it is trivial that there is no $2$-to-$1$ monomial over $\gf_{2^n}$. For $2$-to-$1$ binomials, in 2019, Mesnager and Qu \cite{mesnager2019two} obtained some infinite classes from the so-called o-monomials; see Table \ref{1-1}. Furthermore, recently,   K\"olsch and Kyureghyan \cite{kolsch2024classifications} showed that the classification of $2$-to-$1$ binomials over $\gf_{2^n}$ is equivalent to that of o-monomials, which is a well-studied and elusive problem in finite geometry. That is, all $2$-to-$1$ binomials over $\gf_{2^n}$ are induced by o-monomials. 

\begin{table}[H]
	\centering
	\caption{$2$-to-$1$ polynomials of the form $x^k+x$ over $\gf_{2^n}$}
	\label{1-1}
	\begin{tabular}{cc}
		\toprule[1pt]
		 $n$& $k$\\
		\midrule
		 \multirow{4}*{positive odd}& 2\\
		 ~& 6\\
		 ~& $2^{\delta}+2^\pi$, where $\delta=\frac{n+1}{2}$, $4\pi\equiv 1\pmod{n}$\\
		 ~& $3 \cdot 2^{\delta}+4$, where $\delta=\frac{n+1}{2}$\\
         \midrule
		 $n\leq 7$, even& $2^n-2$\\
		\bottomrule
	\end{tabular}
\end{table}

Regarding polynomials with more terms, Li et al. \cite{li2021furtherstudy} obtained some $2$-to-$1$ trinomials and quadrinomials, see Tables \ref{1-3} and \ref{1-4}, respectively.  It is worth mentioning that these classes in Table \ref{1-3} comprehensively account for all instances of $2$-to-$1$ trinomials of the form $x^k+\beta x^l+\alpha x \in \gf_{2^n}[x]$ where $n \leq 7$.

In 2022, Mesnager et al. \cite{mesnager2022more} studied $2$-to-$1$ mappings without fixed points and exhibited the involutions derived from them, where six classes of the form $x^k+x^l+x^s+x^3+x^2+x$ over $\gf_{2^n}$ were presented, and a conjecture was offered about a class of $2$-to-$1$ quadrinomials. Furthermore, in 2024, Kölsch and Kyureghyan \cite{kolsch2024classifications} proved this conjecture in a more general form (hexanomials) by showing that it is induced by the Glynn I o-monomial. All $2$-to-$1$ hexanomials are summarized in Table \ref{1-6}.

\begin{table}[H]
	\centering
	\caption{$2$-to-$1$ trinomials of the form $x^k+\beta x^l+\alpha x$ over $\gf_{2^n}$}
	\label{1-3}
	\begin{tabular}{ccccc}
		\toprule
		 $k$& $\beta$& $l$& $\alpha$& $n$\\
		\midrule
		 $2^n-2^m$& 1& $2^n-2^m-1$& $\alpha^{2^m}+\alpha+1=0$& $2m$\\
		 $\frac{2^{n-1}+2^m-1}{3}$& 1& $2^m$& $\gf_{2^2}\verb|\|\gf_{2}$& $2m$, $m$ odd\\
		\bottomrule
	\end{tabular}
\end{table}

\begin{table}[H]
	\centering
	\caption{$2$-to-$1$ quadrinomials of the form $x^k+x^l+x^d+x^s$ over $\gf_{2^n}$}
	\label{1-4}
	\begin{tabular}{ccccc}
		\toprule
		$s$& $k$& $l$& $d$& $n$\\
		\midrule
		\multirow{12}*{1}& $2^{m+1}+2$& $2^{m+1}$& $2$& \multirow{10}{*}{$2m+1$}\\
		~& $2^{m+1}+2$& $2^{m+1}+1$& $2$& ~\\
		~& $2^{m+1}+4$& $2^{m+1}+2$& $2$& ~\\
		~& $2^n-2^{m+1}+2$& $2^{m+1}$& $2$& ~\\
		~& $2^{n}-2$& $2^n-2^{m+1}$& $2^n-2^{m+1}-2$& ~\\
		~& $2^{n}-2$& $2^n-2^{m+1}$& $2^{m+1}-1$& ~\\
		~& $2^{n}-2$& $2^{n-1}+1$& $2^{n-1}-2$& ~\\
		~& $2^{n}-2$& $2^n-4$& 3& ~\\
		~& 6& 4& 3& ~\\
		~& 6& 5& 3& ~\\
		\cmidrule{2-5}
		~& $2^{2m}+2^m$& $2^{2m}+1$& $2^m+1$& $3m$\\
		~& $2^{2m}+1$& $2^{m+1}$& $2^m+1$& $3m$, $m \not \equiv 1 \pmod{3}$\\
		\midrule
		3& $2^{2m+1}+1$& $2^{m+1}+1$& 4& $3m$, $m$ odd\\
		\bottomrule
	\end{tabular}
\end{table}

\begin{table}[H]
	\centering
	\caption{$2$-to-$1$ hexanomials over $\gf_{2^{n}}$}
	\label{1-6}
	\begin{tabular}{ccc}
		\toprule
		$n$& function& condition\\
		\midrule
		\multirow{6}{*}{$n=2m+1$}& $x^{16}+x^{12}+x^8+x^6+x^4+x^3$& $m \not \equiv 1 \pmod{3}$\\ 
		\cmidrule{2-3}
		~& $x^{2^{m+1}+2^m}+x^{2^{m+1}}+x^{2^m}+x^3+x^2+x$& \multirow{4}{*}{$m\in \mathbb{N}$}\\
		~& $x^{2^{m+1}+2}+x^{2^{m+1}+1}+x^{2^{m+2}}+x^3+x^2+x$& ~\\
		~& $x^{2^m+2}+x^{2^m+1}+x^{2^m}+x^3+x^2+x$& ~\\
		~& $x^{2^{m+2}+2}+x^{2^{m+2}+1}+x^4+x^3+x^2+x$& ~\\
		\cmidrule{2-3}
		~& $x^{2^{m+1}+2}+x^{2^{m+1}}+ax^3+(a+1)x^2+ax$& $a \in \gf_{2^n}^*$\\
		\midrule
		$n=2m$& $x^{2^m+2}+x^{2^{m+1}+1}+x^{2^{m+1}+2^m}+x^3+x^2+x$& $m$ even\\
		\bottomrule
	\end{tabular}
\end{table}
\subsection{Low-degree polynomials (degree $\leq 5$)}
Classifications of $2$-to-$1$ polynomials with low degrees are interesting. 
Let $f(x)=\sum_{i=0}^{d}a_ix^i\in \gf_{2^n}[x]$. It is clear that $f$ is a $2$-to-$1$ mapping if and only if so is $f_1(x) = bf(x+c)+d$, where $b,c,d\in\gf_q$ with $b\neq0$. Hence, W.L.O.G., we consider $f$ with normalized form, i.e., $f(x)$ is monic, $f(0)=0$, and when $d$ is odd, the coefficient of $x^{d-1}$ is $0$. In 2019, Mesnager and Qu \cite{mesnager2019two} determined all $2$-to-$1$ polynomials of degree $\leq 4$ over $\gf_{2^n}$, see Table \ref{1-2}, where $\tr_{2^n}(\cdot)$ denotes the absolute trace function over $\gf_{2^n}$.
\begin{table}[H]
	\centering
	\caption{$2$-to-$1$ low-degree polynomials of the form $f(x)=\sum_{i=0}^{d}a_i x^i$ over $\gf_{2^n}$}
	\label{1-2}
	\begin{tabular}{cc}
		\toprule
		function& condition\\
		\midrule
        $x^2+ax$ & $a\neq0$ \\
		 $x^4+a_2 x^2$& $a_2\ne 0$ \\
		 $x^4+a_2x^2+a_1x$& $a_1\ne 0$, $\tr_{2^n}\left(\frac{a_2^3}{a_1^2}\right)\ne \tr_{2^n}(1)$ \\
		 $x^4+a_3x^3+a_2x^2+a_1x$& $a_3 \ne 0$, $a_2^2=a_1a_3$, $m$ odd\\
		\bottomrule
	\end{tabular}
\end{table}

Furthermore, utilizing the well-known Hasse-Weil bound, Li et al. \cite{li2021furtherstudy}  completely determined $2$-to-$1$ polynomials with degree $5$ by proving that when $n>9$, $f(x)=x^5+a_3x^3+a_2x^2+a_1x$ is not $2$-to-$1$ over $\gf_{2^n}$, where $a_3, a_2, a_1 \in \gf_{2^n}$. As for $3 \leq n\leq 9$, they obtained some $2$-to-$1$ mappings by MAGMA of the form $f(x)=x^5+a_3x^3+a_2x^2+a_1x$ over $\gf_{2^n}$, see Table \ref{n=3}, where $\gamma$ is a primitive element of $\gf_{2^3}$.

	\begin{table}[H]
		\caption{$\left(a_3,a_2,a_1\right)\in\gf_{2^3}^{3}$ such that $f(x)$ is $2$-to-$1$} 		\label{n=3}
		\centering
		\begin{tabular}{c c|c c|c c}	
			\toprule
			No. & $\left(a_3,a_2,a_1\right)$ & No.  &  $\left(a_3,a_2,a_1\right)$ & No. & $\left(a_3,a_2,a_1\right)$ \\
			\midrule
			$1$ & $\left(1,\gamma,\gamma^5\right)$ & $2$ & $\left(1,\gamma^2,\gamma^3\right)$ & $3$ & $\left(1,\gamma^4,\gamma^6\right)$ \\
			$4$ & $\left(\gamma,1,\gamma^5\right)$ & $5$ & $\left(\gamma,\gamma^2,\gamma\right)$ & $6$ & $\left(\gamma,\gamma^6,1\right)$ \\
			$7$ & $\left(\gamma^2,1,\gamma^3\right)$ & $8$ & $\left(\gamma^2,\gamma^4,\gamma^2\right)$ & $9$ & $\left(\gamma^2,\gamma^5,1\right)$ \\
			$10$ & $\left(\gamma^3,\gamma^2,\gamma^4\right)$ & $11$ & $\left(\gamma^3,\gamma^3,\gamma^2\right)$ & $12$ & 
			$\left(\gamma^3,\gamma^5,\gamma^5\right)$ \\
			$13$ & $\left(\gamma^4,1,\gamma^6\right)$ & $14$ & $\left(\gamma^4,\gamma,\gamma^4\right)$ & $15$ & 
			$\left(\gamma^4,\gamma^3,1\right)$ \\
			$16$ & $\left(\gamma^5,\gamma,\gamma^2\right)$ & $17$ & $\left(\gamma^5,\gamma^5,\gamma\right)$ & $18$ & $\left(\gamma^5,\gamma^5,\gamma^6\right)$ \\
			$19$ & $\left(\gamma^6,\gamma^3,\gamma^3\right)$ & $20$ & $\left(\gamma^6,\gamma^4,\gamma\right)$ & $21$ &
			$\left(\gamma^6,\gamma^6,\gamma^4\right)$\\
			$22$ & $\left(a_3,0,0\right)$, $a_3\in\gf_{2^3}^{*}$ & $23$ & $\left(0,a_2,0\right)$, $a_2\in\gf_{2^3}^{*}$  & & \\
			\bottomrule
		\end{tabular}
	\end{table}

\subsection{Polynomials of the Kloosterman form}

In 2021, Yuan \cite{yuan2021twoandinvolution} investigated the relationship between $2$-to-$1$ mappings and involutions. Using AGW-like criterion,   they constructed $8$ classes of $2$-to-$1$ polynomials of the Kloosterman form $(x^{2^k}+x+\delta)^s+cx$ in $\gf_{2^{2m}}$, see Table \ref{1-5}, where $\tr_{2^{2m}/2^m}(\cdot)$ denotes the relative trace function from $\gf_{2^{2m}}$ to $\gf_{2^m}$.  Furthermore, in \cite{mesnager2022more}, Yuan et al. proposed $3$ new classes of $2$-to-$1$ mappings of the form $(x^{2^k}+x+\delta)^{s_1}+(x^{2^k}+x+\delta)^{s_2}+cx$ over $\gf_{2^{2m}}$ and derived their involutions without fix points, see Table \ref{1-7}.

\begin{table}[H]
	\centering
	\caption{$2$-to-$1$ polynomials of the form $(x^{2^k}+x+\delta)^s+cx$ over $\gf_{2^{2m}}$}
	\label{1-5}
	\begin{tabular}{cccc}
		\toprule
		$k$& $m$& $s$& condition\\
		\midrule
		\multirow{4}*{1}& \multirow{4}*{even}& $2^m+1$& \multirow{4}*{$\tr_{2^{2m}}(\delta)=1,c=1$}\\
		~& ~& $2^{2m-2}+2^{m-2}$& ~\\
		~& ~& $2^{2m-1}+2^{m-1}$& ~\\
		~& ~& $\frac{2^{2m}+2^m+1}{3}$& ~\\
		\midrule
		\multirow{3}*{$m$}& $\mathrm{gcd}(m,i)=1$& $2^i+1$& $\tr_{2^{2m}/2^m}(\delta^2+c^{2^{m-i}}\delta)\ne 0$\\
		~& $m \in \mathbb{N}$& $2^m+2^i+1$& $\tr_{2^{2m}/2^m}(\delta)^{2^i+2}+c\tr_{2^{2m}/2^m}(\delta)\ne 0$\\
		~& $ m \in \mathbb{N}$& $2^{2m-2}+2^m-2^{m-2}$& $\delta \in \gf_{2^{2m}}\verb|\|\gf_{2^m}$, $c \in \gf_{2^m}^*$, $\tr_{2^m}(c^{-1}+1)=1$\\
		\midrule
		$2m$& $ m \in \mathbb{N}$& $(2^{2m-1}-2^{m-1}+1)(2^{2m}-1)+1$& $ \delta \in \gf_{2^{4m}}\verb|\|\gf_{2^{2m}}$, $c \in \gf_{2^{2m}}\verb|\|\gf_{2^m}$\\
		\bottomrule
	\end{tabular}
\end{table}


\begin{table}[H]
	\centering
	\caption{$2$-to-$1$ polynomials of the form $(x^{2^k}+x+\delta)^{s_1}+(x^{2^k}+x+\delta)^{s_2}+cx$ over $\gf_{2^{2m}}$}
	\label{1-7}
	\begin{tabular}{cccccc}
		\toprule
		$k$& $s_1$& $s_2$& $c$& $m$& condition\\
		\midrule
		2& $2^m+1$& $2^m+1$& 1& even& $\tr_{2^{2m}}(\delta)=1,\delta \in \gf_{2^{2m}}$\\
		2& $\frac{2^{2m}+2^m+1}{3}$& $\frac{2^{2m+1}-2^m-1}{3}$& 1& even& $\tr_{2^{2m}}(\delta)=1,\delta \in \gf_{2^{2m}}$\\
		$2^m$& $2^m+2$& $2^m+3$& $c \in \gf_{2^m}^*$& $m\in \mathbb{N}$& \makecell[c]{$ \tr_{2^{2m}/2^m}(\delta)^3+\tr_{2^{2m}/2^m}(\delta)^2+c \ne 0 $\\$ \delta \in \gf_{2^{2m}},\tr_{2^{2m}/2^m}(\delta) \notin \gf_2$}\\
		\bottomrule
	\end{tabular}
\end{table}

Now we recall the switching method. The fundamental idea behind the switching method is to explore whether a small modification of a function $G(x)$ with certain cryptographic properties can preserve those properties. 
Suppose $G(x)$ and $R(x)$ are polynomials  over the finite field $\gf_{q^l}$ and let $F(x)=G(x)+\tr_{q^l/q}(R(x))$. The function $F(x)$ can be viewed as a relatively ``small" modification of $G(x)$. This method was first introduced by Dillon \cite{dillon2006apn} to study APN functions and was further developed by Edel and Pott \cite{edel2008new}. In 2009, Charpin and Kyureghyan \cite{charpin2009does} introduced the switching method into the study of permutation polynomials. They proposed that when $G(x)$ is a permutation over $\gf_{p^n}$, where $p$ is a prime. In 2014, Qu et al. \cite{qu2013constructing} extended this method to the study of $4$-uniform permutations. In 2022, Li et al.\cite{li2022dillon} generalized Dillon's switching method to characterize the exact $c$-differential uniformity functions. In summary, the switching method is compelling in constructing cryptographic functions, e.g., functions with a low ($c$-)differential uniformity  \cite{xu2016constructing,wu2021new,qu2013constructing,chen2017equivalent,peng2017new,li2022dillon}, permutation polynomials \cite{kyureghyan2011constructing, ShaJiang2025New,charpin2009does, li2018permutationandtrinomial, Li2017NewConstructions,li2019compositional,akbary2011constructing,rai2025permutation}, planar functions \cite{pott2010switching}, bent functions \cite{mesnager2014several, xie2022new, xu2016constructing}. Therefore, utilizing this method to investigate $2$-to-$1$ mappings is a natural extension. 

In the process of delving into the exploration of $2$-to-$1$ polynomials, we notice that Charpin \cite{charpin2009does} obtained several results about constructing $2$-to-$1$ polynomials of the form $cx+\tr_{p^n}(H(x))$. To enrich the construction outcomes of $2$-to-$1$ mappings, this article introduces the generalized switching method\footnote{Note that in this paper, we only consider the case where $G$ is a monomial, which is not $2$-to-$1$, and thus we call the method the generalized switching method.} to construct $2$-to-$1$ mappings of the form $cx+\tr_{q^l/q}(H(x))$ over $\gf_{q^k}$, where $q=2^k$. In this paper, we first obtain some numerical results, including all $2$-to-$1$ mappings of the form $cx^{d_1}+\tr_{q^l/q}(x^{d_2})$ and  $cx^d+\tr_{q^l/q}(x^{d_1}+x^{d_2})$, where $q=2^k$, $k>1$, $kl<14$ and $c \in \gf_{q^l}^*$. On this basis, $16$ new infinite classes of $2$-to-$1$ polynomials, including $9$ classes of the form $cx+\tr_{q^l/q}(x^d)$ and $7$ classes of the form $cx+\tr_{q^l/q}(x^{d_1}+x^{d_2})$, are proved. These constructed $2$-to-$1$ polynomials account for the majority of numerical results. We mainly use two methods to prove these infinite classes. Taking the first construction form as an example, we elaborate on the detailed procedures of these two methods. When the Hamming weight $d$ is small, we can use the first method, which is called the elementary approach. For any $a \in \gf_{q^l}$, let $f(x)=cx+\tr_{q^l/q}(x^d)=a$ and $u=cx+a$. Then $u=\tr_{q^l/q}(x^d)\in \gf_{q}$ and $x=c^{-1}(u+a)$. Plugging $x=c^{-1}(u+a)$ into $f(x)=a$, we transform the equations of high degree over $\gf_{q^l}$ into the equations of low degree over $\gf_{q}$. Consequently, we use the characterization conditions of low-degree equations over finite fields to demonstrate that such low-degree equations possess zero or exactly two solutions in $\gf_{q}$. In the second method, we prove that for any $a\in \gf_{q^l}$, let $f(x+a)+f(a)=0$, there are $2$ solutions in $\gf_{q^l}$. By expanding the equation $f(x+a) + f(a) = 0$, we obtain $cx=\tr_{q^l/q}((x+a)^d+a^d)\in \gf_{q}$, and the original equation can be transformed into the form of a Dickson polynomial. According to the theory of Dickson polynomials, it is deduced that the original equation has two solutions. In the process of some proofs, the well-known Dobbertin's multivariate method \cite{dobbertin2002uniformly} is used. 
Finally, to show that our newly $2$-to-$1$ mappings are useful, we applied some newly $2$-to-$1$ mappings of the form $cx+\tr_{q^l/q}(x^d)$ in this paper to construct linear codes with few weights. Moreover, the weight distributions of these codes are also determined. In addition, these newly linear codes are self-orthogonal, minimal, and with few weights, which have nice applications in many areas, including quantum codes \cite{calderbank1997quantum}, lattices \cite{wan1998characteristic}, secret sharing \cite{massey1993minimal}, etc. 

The rest of this paper is organized as follows. Section \ref{preliminaries} introduces preknowledge and some useful lemmas. Sections  \ref{monomial} and \ref{binomial}  contain $2$-to-$1$ polynomials of the forms $cx+\tr_{q^l/q}(x^d)$  and   $cx+\tr_{q^l/q}(x^{d_1}+x^{d_2})$, respectively. Furthermore, in Section \ref{linear}, we give some results of applying $2$-to-$1$ mappings of the form $cx+\tr_{q^l/q}(x^d)$ to construct linear codes. Finally, Section \ref{conclusion} concludes the work of this paper. 

\section{Preliminaries}\label{preliminaries}
In this section, we introduce the basic notions about this paper, including finite fields, $2$-to-$1$ mappings, and linear codes.
\subsection{Finite fields}
$\mathbb{F}_{2^n}$ denotes the finite field with $2^n$ elements. $\gf_{2^n}^*$ denotes all nonzero elements in $\gf_{2^n}$. The relative trace function from $\gf_{q^l}$ to $\gf_{q}$ is defined as $\tr_{q^l/q}(\alpha)=\alpha+\alpha^q+\cdots+\alpha^{q^{l-1}}$ for any $a\in\gf_{q^l}$. When $q$ is prime, it becomes an absolute trace function and is denoted by $\tr_{q^l}(\alpha)$ for simplicity. The trace function plays a fundamental role in finite field theory. Its basic properties are as follows.
\begin{Lemma}\cite{lidl1997finite}
	\label{trace}
	The trace function $\tr_{q^l/q}(\cdot)$ from $\gf_{q^l}$ to $\gf_{q}$ satisfies the following properties:
	\begin{enumerate}[(1)]
		\item $\tr_{q^l/q}(\alpha+\beta)=\tr_{q^l/q}(\alpha)+\tr_{q^l/q}(\beta)$ for all $\alpha,\beta\in \gf_{q^l}$;
		\item  $\tr_{q^l/q}(c\alpha)=c\tr_{q^l/q}(\alpha)$, for all $c\in \gf_{q}, \alpha\in \gf_{q^l}$;
		\item $\tr_{q^l/q}(\cdot)$ is a linear transformation from $\gf_{q^l}$ to $\gf_{q}$, where both $\gf_{q^l}$ and $\gf_{q}$ are viewed as vector spaces over $\gf_{q}$;
		\item $\tr_{q^l/q}(a)=la$ for all $a \in \gf_{q}$;
		\item  $\tr_{q^l/q}(\alpha^q)=\tr_{q^l/q}(\alpha)$ for all $\alpha \in \gf_{q^l}$.
	\end{enumerate}
\end{Lemma}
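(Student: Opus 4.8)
The plan is to argue everything directly from the definition $\tr_{q^l/q}(\alpha)=\sum_{i=0}^{l-1}\alpha^{q^i}$, exploiting three elementary facts about the Frobenius map $\phi\colon x\mapsto x^q$ on $\gf_{q^l}$: it is additive (the freshman's dream in characteristic $2$, and more generally in any prime characteristic), it is multiplicative, and its fixed field is exactly $\gf_q$, so that $c^{q}=c$ for every $c\in\gf_q$ while $\alpha^{q^l}=\alpha$ for every $\alpha\in\gf_{q^l}$. These are the only ingredients needed, and each property of the lemma then falls out by a short computation on the defining sum.

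First I would prove (1): applying $\phi$ repeatedly and using additivity of $\phi$ gives $(\alpha+\beta)^{q^i}=\alpha^{q^i}+\beta^{q^i}$ for each $i$, so summing over $i=0,\dots,l-1$ yields $\tr_{q^l/q}(\alpha+\beta)=\tr_{q^l/q}(\alpha)+\tr_{q^l/q}(\beta)$. For (2), since $c\in\gf_q$ satisfies $c^{q^i}=c$ for every $i$, multiplicativity of $\phi$ gives $(c\alpha)^{q^i}=c\,\alpha^{q^i}$, and summing produces $\tr_{q^l/q}(c\alpha)=c\,\tr_{q^l/q}(\alpha)$. Property (4) is even quicker: every $a\in\gf_q$ has $a^{q^i}=a$, so the defining sum collapses to $\tr_{q^l/q}(a)=la$. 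For (5) I would use the cyclic-shift identity $\tr_{q^l/q}(\alpha^q)=\sum_{i=0}^{l-1}\alpha^{q^{i+1}}=\sum_{j=0}^{l-1}\alpha^{q^{j}}=\tr_{q^l/q}(\alpha)$, where the re-indexing relies on $\alpha^{q^l}=\alpha$.

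A small point that must be settled before (2) and (3) are even meaningful is that $\tr_{q^l/q}$ actually takes values in $\gf_q$, since otherwise the factor $c$ on the right-hand side of (2) and the target space of (3) are not justified. I would establish this by the same cyclic-shift computation, applied this time to the trace itself: $\tr_{q^l/q}(\alpha)^q=\sum_{i=0}^{l-1}\alpha^{q^{i+1}}=\sum_{j=0}^{l-1}\alpha^{q^{j}}=\tr_{q^l/q}(\alpha)$, using $\alpha^{q^l}=\alpha$ once more. Hence $\tr_{q^l/q}(\alpha)$ is fixed by $\phi$ and therefore lies in the fixed field $\gf_q$. Property (3) is then immediate, since (1) and (2) say precisely that $\tr_{q^l/q}$ is an $\gf_q$-linear map from $\gf_{q^l}$ to $\gf_q$.

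I do not expect any genuine obstacle, as this is a standard textbook fact whose proof is essentially bookkeeping on the Frobenius orbit of $\alpha$. If anything, the one step deserving a moment's care is the value-in-$\gf_q$ claim underpinning (2) and (3): it is what makes the homogeneity statement and the codomain legitimate, and it rests entirely on the characterizing identity $\alpha^{q^l}=\alpha$ for membership in $\gf_{q^l}$, together with the fact that $\gf_q$ is exactly the set of fixed points of $\phi$.
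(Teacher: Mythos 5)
Your proof is correct and complete; the paper itself offers no proof of this lemma (it is quoted from Lidl and Niederreiter), and your argument --- direct computation on the defining sum $\sum_{i=0}^{l-1}\alpha^{q^i}$ using additivity and multiplicativity of the Frobenius map together with the fact that $\gf_q$ is its fixed field --- is exactly the standard textbook proof found in that reference. Your extra step verifying that $\tr_{q^l/q}(\alpha)^q=\tr_{q^l/q}(\alpha)$, so that the trace genuinely takes values in $\gf_q$, is the one point that is often glossed over and is indeed what makes parts (2) and (3) meaningful.
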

Since a finite field is a finite set, any mapping or function over finite fields could be expressed as a polynomial using Lagrange interpolation. In this paper, unless otherwise specified, mappings, functions, and polynomials can be regarded as the same research object. The study of important functions on finite fields, such as APN functions, permutation polynomials, and $2$-to-$1$ mappings, ultimately boils down to determining the number of solutions to specific equations. The following four lemmas characterize the number of solutions to low-degree equations on finite fields.
\begin{Lemma}\cite{lidl1997finite}
	Let $u,v \in \gf_{2^n}$, $u \ne 0$. The quadratic equation $x^2+ux+v=0$ has solutions in $\gf_{2^n}$ if and only if $\tr_{2^n}(\frac{v}{u^2})=0$.
\end{Lemma}
The cubic polynomial $f$ over the finite field $\gf_{2^n}$ is denoted as $f=(1,1,1)$ if $f$ can be decomposed into the product of three affine polynomials. If it is decomposed into the product of an affine polynomial and a quadratic irreducible polynomial, it is $f=(1,2)$. If $f$ is irreducible, then $f = (3)$.
\begin{Lemma}\label{cubic}\cite{berlekamp1967solution, williams1975note}
	Let $a,b \in \gf_{2^n}^*$, $f(x)=x^3+ax+b$, $h(y)=y^2+by+a^3$, $y_1,y_2$ are 2 roots of $h(y)$. Then
	\begin{enumerate}[(i)]
		\item $f=(1,1,1)$ if and only if $\tr_{2^n}(\frac{a^3}{b^2})=\tr_{2^n}(1)$, when n is even (odd), $y_1,y_2$ are cubic elements in $\gf_{2^n}$($\gf_{2^{2n}}$);
		\item $f=(1,2)$ if and only if $\tr_{2^n}(\frac{a^3}{b^2})\ne\tr_{2^n}(1)$;
		\item $f=(3)$ if and only if $\tr_{2^n}(\frac{a^3}{b^2})=\tr_{2^n}(1)$, and when n is even (odd), $y_1,y_2$ are not cubic elements in $\gf_{2^n}$($\gf_{2^{2n}}$).
	\end{enumerate}
\end{Lemma}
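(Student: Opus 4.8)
The plan is to reduce $f(x)=x^3+ax+b$ to the resolvent quadratic $h$ appearing in the statement via a Cardano-type substitution, and then read off the factorization type from the action of Frobenius on the three roots. First I would record that $f$ is separable: in characteristic $2$ we have $f'(x)=x^2+a$, so a repeated root would have to be $\sqrt a$, and $f(\sqrt a)=a^{3/2}+a^{3/2}+b=b\neq 0$. Hence $f$ has three distinct roots in $\overline{\gf_2}$, and exactly one of the types $(1,1,1)$, $(1,2)$, $(3)$ occurs according as $f$ has $3$, $1$, or $0$ roots in $\gf_{2^n}$. Next I set up the resolvent: writing a root as $x=u+v$ and imposing $uv=a$, the identity $(u+v)^3=u^3+v^3+uv(u+v)$ gives $f(u+v)=u^3+v^3+(uv+a)(u+v)+b=u^3+v^3+b$, since $uv+a=0$ in characteristic $2$. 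Thus $u^3+v^3=b$ and $u^3v^3=(uv)^3=a^3$, so $u^3,v^3$ are precisely the roots $y_1,y_2$ of $h(y)=y^2+by+a^3$. Conversely, fixing a cube root $u$ of $y_1$ and setting $v=a/u$, the three roots of $f$ are $x_j=\omega^j u+\omega^{-j}v$ for $j=0,1,2$, where $\omega\in\gf_4$ is a primitive cube root of unity. By the quadratic criterion of the previous lemma applied to $h$, the roots $y_1,y_2$ lie in $\gf_{2^n}$ iff $\tr_{2^n}(a^3/b^2)=0$, and otherwise are conjugate in $\gf_{2^{2n}}\setminus\gf_{2^n}$; note also $\tr_{2^n}(1)=n\bmod 2$ and $\omega\in\gf_{2^n}$ iff $n$ is even, which is where the parity of $n$ enters.

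The heart of the argument is to compute the Frobenius map $\phi=(\cdot)^{2^n}$ as a permutation of $\{x_0,x_1,x_2\}$, splitting into four cases by the parity of $n$ (equivalently $\phi(\omega)=\omega$ versus $\phi(\omega)=\omega^2$) and by whether $h$ splits. In each case $\phi(u)$ is again a cube root of $\phi(u^3)=\phi(y_1)$, so $\phi(u)=\omega^i u$ (when $y_1\in\gf_{2^n}$) or $\phi(u)=\omega^k v$ (when $y_1\notin\gf_{2^n}$), for a shift exponent that one checks is invariant under the choice of cube root. A short computation then shows $\phi$ acts as $x_j\mapsto x_{j+i}$ (respectively $x_j\mapsto x_{j-k}$), which is the identity, a $3$-cycle, or a transposition, yielding type $(1,1,1)$, $(3)$, or $(1,2)$. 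Crucially, the shift vanishes — so $\phi$ fixes every root and the type is $(1,1,1)$ — exactly when $u\in\gf_{2^n}$ for $n$ even, and exactly when $u\in\gf_{2^{2n}}$ (i.e. $\phi^2(u)=u$, forcing $\omega^k=1$) for $n$ odd; that is, exactly when $y_1$ (equivalently $y_2$, since $y_1y_2=a^3$ is a cube) is a cubic element of $\gf_{2^n}$, respectively $\gf_{2^{2n}}$.

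Assembling the four cases proves the trichotomy. The two mixed-parity cases — $n$ even with $h$ irreducible, and $n$ odd with $h$ split — both force $\phi$ to be a transposition, hence type $(1,2)$; these are precisely the cases $\tr_{2^n}(a^3/b^2)\neq\tr_{2^n}(1)$, which establishes (ii). The remaining two cases are exactly $\tr_{2^n}(a^3/b^2)=\tr_{2^n}(1)$, and there $\phi$ is either the identity or a $3$-cycle, giving (i) or (iii) according to whether the cube condition holds in $\gf_{2^n}$ ($n$ even) or in $\gf_{2^{2n}}$ ($n$ odd). (As a side check on (ii), one can also compute the Berlekamp invariant $\sum_{i<j}\alpha_i\alpha_j/(\alpha_i+\alpha_j)^2=a^3/b^2+1$ using $e_1=0,e_2=a,e_3=b$, whose trace controls the parity of the number of irreducible factors.)

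I expect the main obstacle to be the subtle interaction between the parity of $n$ and the cube condition. The slippery point is that for $n$ odd every element of $\gf_{2^n}^*$ is a cube (since $3\nmid 2^n-1$), yet type $(1,1,1)$ still fails whenever $h$ splits, because $\omega\notin\gf_{2^n}$ pushes two of the roots into $\gf_{2^{2n}}$; this is exactly why the cube condition must be tested in $\gf_{2^{2n}}$ rather than $\gf_{2^n}$ in the odd case. Getting the bookkeeping of $\phi(\omega)$ right, and verifying that the shift exponent is genuinely an invariant of $(y_1,y_2,a)$ independent of the chosen cube root, is where the care is required.
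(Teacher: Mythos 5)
The paper does not actually prove this lemma: it is imported verbatim from \cite{berlekamp1967solution,williams1975note}, so the only meaningful comparison is with the classical arguments in those sources, and your proposal is essentially a correct reconstruction of them. The Cardano-type substitution $x=u+v$ with $uv=a$, the identification of $u^3,v^3$ with the roots $y_1,y_2$ of the resolvent $h$, and the reading of the factorization type from the cycle structure of the Frobenius $\phi$ on $x_j=\omega^j u+\omega^{-j}v$ is exactly the standard route; your four-case assembly (matched parity gives identity or a $3$-cycle according to the cube condition, mixed parity forces a transposition, and $\tr_{2^n}(1)=n\bmod 2$ translates this into the trace criterion) reaches the correct conclusion in every case, including the subtle point that for odd $n$ the cube condition must be tested in $\gf_{2^{2n}}$. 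Two pieces of bookkeeping are stated more strongly than what the computation gives, though neither affects the result. First, the formulas $x_j\mapsto x_{j+i}$ and $x_j\mapsto x_{j-k}$ hold only in the matched-parity cases ($n$ even with $h$ split, $n$ odd with $h$ irreducible); in the mixed-parity cases the action is $j\mapsto -j+\mathrm{const}$, which is precisely how the transposition arises --- your assembly paragraph treats these correctly, but the displayed formulas do not cover them. Second, the shift exponent is invariant under replacing $u$ by $\omega u$ only in the matched-parity cases (for $n$ even and $h$ split one gets $\phi(\omega u)=\omega^{i}(\omega u)$, whereas for $n$ odd and $h$ split the exponent moves by $1$); fortunately invariance is only needed where the exponent decides between identity and $3$-cycle, which is exactly the matched-parity cases. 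Finally, you implicitly use that $x_0,x_1,x_2$ are pairwise distinct; this deserves one line: $x_i=x_j$ with $i\ne j$ forces $v=\omega^{\pm 1}u$, hence $y_1=y_2$ and $b=y_1+y_2=0$, contradicting $b\in\gf_{2^n}^*$.
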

The following lemma introduces a method for finding solutions of cubic equations.
\begin{Lemma}\cite{williams1975note}\label{cubics soulution}
	Let $f(x)=x^3+ax+b \in \gf_{2^n}[x]$ and $b\ne 0$. If $y_1$ is a solution to the equation $y^2+by+a^3=0$ and $\epsilon$ is a solution to the equation $x^3=y_1$, then $r=\epsilon+\frac{a}{\epsilon}$ is a solution to the equation $f(x)=0$.
\end{Lemma}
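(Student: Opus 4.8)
The plan is to verify directly that $r=\epsilon+a\epsilon^{-1}$ is a root of $f$ by computing $r^3$ and collapsing the result with the two defining relations $\epsilon^3=y_1$ and $y_1^2+by_1+a^3=0$. Since the formula divides by $\epsilon$, I would first dispose of degeneracies: if $a=0$ the term $a\epsilon^{-1}$ vanishes and $f(x)=x^3+b$, for which the claim is immediate, so I may assume $a\neq 0$. Then, because the product of the two roots of $y^2+by+a^3$ equals its constant term $a^3\neq 0$, we get $y_1\neq 0$, hence $\epsilon^3=y_1\neq 0$ forces $\epsilon\neq 0$, and every inverse appearing below is well-defined.

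First I would expand the cube in characteristic $2$, using the identity $(u+v)^3=u^3+v^3+uv(u+v)$ (valid since $3\equiv 1$) with $u=\epsilon$ and $v=a\epsilon^{-1}$. Noting that $uv=a$ and $u+v=r$, this gives
\[
r^3=\epsilon^3+a^3\epsilon^{-3}+ar.
\]
Adding $ar$ to both sides and cancelling the repeated term (in characteristic $2$, $ar+ar=0$) yields the key reduction
\[
r^3+ar=\epsilon^3+a^3\epsilon^{-3}.
\]

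Next I would eliminate $\epsilon$ in favour of $y_1$. Substituting $\epsilon^3=y_1$ gives
\[
r^3+ar=y_1+\frac{a^3}{y_1}=\frac{y_1^2+a^3}{y_1}.
\]
Finally, the relation $y_1^2+by_1+a^3=0$ rearranges to $y_1^2+a^3=by_1$, so the right-hand side collapses to $by_1/y_1=b$. Hence $r^3+ar=b$, and therefore $f(r)=r^3+ar+b=b+b=0$, as claimed.

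I do not expect a genuine obstacle here: the argument is a short, self-contained computation. Its only subtleties are ensuring $\epsilon\neq 0$, which is handled in the opening step via the nonvanishing of $y_1$, and using the characteristic-$2$ cube expansion rather than the characteristic-zero binomial formula. The decisive algebraic point is the cancellation of the mixed term $ar$, which is precisely what makes the substitution $r=\epsilon+a\epsilon^{-1}$ reduce the cubic $f$ to the already-solved quadratic $y^2+by+a^3$.
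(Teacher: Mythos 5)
Your computation is correct. The paper does not actually prove this lemma---it is quoted verbatim from Williams \cite{williams1975note}---so there is no internal proof to compare against; your direct verification (the characteristic-$2$ analogue of Vieta's substitution, which reduces the cubic to the resolvent quadratic via $x=\epsilon+a\epsilon^{-1}$, with the mixed term $ar$ cancelling) is precisely the classical argument underlying the cited result, and all the algebraic steps check out: $(u+v)^3=u^3+v^3+uv(u+v)$ in characteristic $2$, hence $r^3+ar=(y_1^2+a^3)/y_1=b$.

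One point worth tightening is your dispatch of $a=0$: there the resolvent becomes $y^2+by=0$, which has the root $y_1=0$ in addition to $y_1=b$, and if $y_1=0$ is chosen then $\epsilon=0$ and the expression $\epsilon+a\epsilon^{-1}$ is undefined rather than ``vanishing.'' So you should say explicitly that the formula presupposes $\epsilon\neq0$, which in the case $a=0$ forces $y_1=b$; then $r=\epsilon$, $r^3=b$, and $f(r)=b+b=0$ is indeed immediate. This is an edge case the lemma statement itself glosses over (and which never arises in the paper's application, where $a\neq 0$), so it is a presentational fix rather than a genuine gap.
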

The following lemma introduces the characterization of the factorization of quartic polynomials over the finite field $\gf_{2^n}$.
\begin{Lemma}\label{quartic}\cite{leonard1972quartics}
	Let $f(x)=x^4+a_2x^2+a_1x+a_0$, $a_i\in \gf_{2^n}$, $a_0a_1\ne 0$ and $f_1(y)=y^3+a_2y+a_1$. The solutions of $f_1(y)=0$ in $\gf_{2^n}$ are denoted as $r_1,r_2,r_3$(if they exist) and we use the notation $w_i=a_0\frac{r_i^2}{a_1^2}$, then
	\begin{enumerate}[(i)]
		\item $f=(1,1,1,1)$ if and only if $f_1=(1,1,1)$ and  $\tr_{2^n}(w_1)=\tr_{2^n}(w_2)=\tr_{2^n}(w_3)=0$;
		\item $f=(2,2)$ if and only if $f_1=(1,1,1)$, $\tr_{2^n}(w_1)=0$ and $\tr_{2^n}(w_2)=\tr_{2^n}(w_3)=1$;
		\item $f=(1,3)$ if and only if $f_1=(3)$;
		\item $f=(1,1,2)$ if and only if $f_1(1,2)$ and $ \tr_{2^n}(w_1)=0$;
		\item $f=(4)$ if and only if $f_1=(1,2)$ and $ \tr_{2^n}(w_1)=1$.
	\end{enumerate}
\end{Lemma}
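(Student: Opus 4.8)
The plan is to recover the whole table from the classical \emph{resolvent cubic} $f_1$ together with the binary quadratic-solvability criterion stated above (the equation $z^2+uz+v=0$, $u\neq0$, is solvable in $\gf_{2^n}$ iff $\tr_{2^n}(v/u^2)=0$). First I would record two separability facts that make the five/three factorization types exhaustive: since $a_1\neq0$, the formal derivative of $f$ in characteristic $2$ is the nonzero constant $a_1$, so $\gcd(f,f')=1$ and $f$ has four distinct roots $\alpha_1,\alpha_2,\alpha_3,\alpha_4$ in $\ov{\gf}_{2^n}$ with $\sum_i\alpha_i=0$; an analogous check (using $f_1(\sqrt{a_2})=a_1\neq0$) shows $f_1$ is separable. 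Hence $f$ has exactly one of the types $(1,1,1,1),(1,1,2),(2,2),(1,3),(4)$ and $f_1$ exactly one of $(1,1,1),(1,2),(3)$.

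The key algebraic step is a dictionary between quadratic factorizations of $f$ and roots of $f_1$. Matching coefficients in $f=(x^2+sx+t)(x^2+s'x+t')$ forces $s'=s$ (the $x^3$-coefficient must vanish in characteristic $2$) and then $t+t'=a_1/s$, $tt'=a_0$, $s^3+a_2s+a_1=0$; so $s$ is a root of $f_1$. Reading this through the roots of $f$, each of the three pairings of $\{\alpha_1,\dots,\alpha_4\}$ into two complementary pairs yields a root $s_i$ of $f_1$, namely the common sum of a pair, and every root of $f_1$ arises this way. For a root $r_i\in\gf_{2^n}$ of $f_1$, the constant terms $t,t'$ of the associated quadratics are the roots of $z^2+(a_1/r_i)z+a_0=0$, which lie in $\gf_{2^n}$ exactly when $\tr_{2^n}(a_0r_i^2/a_1^2)=\tr_{2^n}(w_i)=0$. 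Thus $r_i$ produces a genuine pair of $\gf_{2^n}$-rational quadratic factors of $f$ iff $\tr_{2^n}(w_i)=0$.

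With this dictionary I would then run a Frobenius bookkeeping: for each of the five types of $f$, describe the orbit structure of $\alpha_1,\dots,\alpha_4$ under $x\mapsto x^{2^n}$, track how Frobenius permutes $s_1,s_2,s_3$, and read off the type of $f_1$ and the values $\tr_{2^n}(w_i)$. For example, when $f=(2,2)$ all three sums are Frobenius-fixed, giving $f_1=(1,1,1)$, while only the pairing realizing the true factorization has both root-products in $\gf_{2^n}$, yielding one vanishing trace and two traces equal to $1$; when $f=(4)$ the Frobenius is a $4$-cycle that fixes exactly one sum and transposes the other two, forcing $f_1=(1,2)$ with the single rational $w_i$ having trace $1$. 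The analogous counts for $(1,1,1,1)$, $(1,1,2)$ and $(1,3)$ produce a map from the five types of $f$ onto the five (type of $f_1$, trace data) combinations in the statement.

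The final move is to note this map is a bijection onto the listed combinations, which turns each forward implication into the stated biconditional. The hard part is exactly this injectivity: $(1,1,1,1)$ and $(2,2)$ both give $f_1=(1,1,1)$, and $(1,1,2)$ and $(4)$ both give $f_1=(1,2)$, so the type of $f_1$ alone cannot distinguish them --- the trace conditions must. The delicate verification is that in the two ``irreducible constant term'' subcases the two paired products are genuinely distinct and hence non-rational, forcing the trace to equal $1$; this is exactly where $a_1\neq0$ is used, since it gives $t+t'=a_1/r_i\neq0$ and therefore $t\neq t'$.
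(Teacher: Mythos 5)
The paper does not prove this lemma at all: it is quoted verbatim from the cited reference \cite{leonard1972quartics} (Leonard--Williams) and used as a black box, so there is no in-paper argument to compare against. Judged on its own, your reconstruction is correct, and it is essentially the classical resolvent-cubic proof that underlies the cited result. The key computations all check out: in characteristic $2$ the derivative $f'=a_1\neq 0$ gives separability of $f$, and $f_1(\sqrt{a_2})=a_1\neq0$ gives separability of $f_1$; matching coefficients in $f=(x^2+sx+t)(x^2+sx+t')$ forces $s(t+t')=a_1$ (so $s\neq0$), $tt'=a_0$, and $s^3+a_2s+a_1=0$, so the three pair-sums $\alpha_i+\alpha_j=\alpha_k+\alpha_l$ are exactly the three (distinct) roots of $f_1$; and for a rational root $r_i$ the constant terms are the roots of $z^2+(a_1/r_i)z+a_0$, whose solvability is governed by $\tr_{2^n}(w_i)$. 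Your Frobenius bookkeeping is also right --- in particular the subtle case $(2,2)$, where all three sums are Frobenius-fixed (via $\alpha_1+\alpha_3=\alpha_2+\alpha_4$ from the vanishing $x^3$-coefficient) but only the true factorization's products are rational, and the case $(4)$, where exactly one sum is fixed. You correctly isolate the one point where $a_1\neq0$ is indispensable beyond separability: $t+t'=a_1/r_i\neq0$ forces $t\neq t'$, so a Frobenius-swapped pair of products cannot accidentally be rational, which is what makes the five (type of $f_1$, trace data) outcomes pairwise distinct and turns the forward implications into the stated equivalences. The only caveat is that the orbit analysis for the remaining types $(1,1,1,1)$, $(1,1,2)$, $(1,3)$ is sketched rather than written out, but those verifications are routine and your plan handles them correctly.
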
 
Dickson polynomials play a crucial role in studying the solutions of polynomials over finite fields. The Dickson polynomials of the first kind of degree $r$ in $\gf_{q}$ is defined as $$D_r(x,a)=\sum_{i=0}^{\lfloor\frac{r}{2}\rfloor}\frac{r}{r-i}\binom{r-i}{i}(-a)^ix^{r-2i}.$$
\begin{Lemma}\label{gcd}\cite{lidl1993theory}
	$D_r(x,a)$ is permutation in $\gf_{q}$ if and only if $\mathrm{gcd}(r,q^2-1)=1$.
\end{Lemma}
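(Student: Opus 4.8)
The plan is to reduce everything to the single functional identity
$$D_r\!\left(u+\tfrac{a}{u},\,a\right)=u^r+\frac{a^r}{u^r},\qquad u\in\gf_{q^2}^*,$$
which I would first establish by induction from the recurrence $D_0(x,a)=2$, $D_1(x,a)=x$, $D_r(x,a)=xD_{r-1}(x,a)-aD_{r-2}(x,a)$; the base cases are immediate and the inductive step is a one-line computation (it holds as a polynomial identity in every characteristic). I assume $a\neq 0$, the relevant case. The point of the identity is that it parametrizes $\gf_q$ by $u$: any $x\in\gf_q$ is a root of the split quadratic $T^2+xT+a$, so $x=u+a/u$ for some $u\in\gf_{q^2}^*$, with $u\neq 0$ since the product of the two roots is $a\neq 0$. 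The two admissible parameters for a given $x$ are $u$ and $a/u$.

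For the forward direction, assume $\gcd(r,q^2-1)=1$, so that $v\mapsto v^r$ is a bijection of the cyclic group $\gf_{q^2}^*$. I would prove injectivity of $D_r$ on $\gf_q$: if $D_r(x_1,a)=D_r(x_2,a)$, writing $x_i=u_i+a/u_i$ shows that the pairs $\{u_1^r,a^r/u_1^r\}$ and $\{u_2^r,a^r/u_2^r\}$ share the same sum and the same product $a^r$, hence coincide as multisets. Therefore $u_1^r=u_2^r$ or $u_1^r=(a/u_2)^r$; since $r$-th powering is injective, $u_1=u_2$ or $u_1=a/u_2$, and in both cases $x_1=u_1+a/u_1=x_2$. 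An injective self-map of the finite set $\gf_q$ is a permutation.

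For the converse I would argue contrapositively: if $d=\gcd(r,q^2-1)>1$, I construct a collision. First I would determine which parameters yield $x\in\gf_q$: since $a^q=a$, the condition $(u+a/u)^q=u+a/u$ forces either $u\in\gf_q^*$ or $u^{q+1}=a$, so the admissible set is $S=\gf_q^*\cup\{u:u^{q+1}=a\}$. Because $q$ is even, $q-1$ and $q+1$ are coprime, so writing $d_1=\gcd(d,q-1)$ and $d_2=\gcd(d,q+1)$ gives $d=d_1d_2$ with $d_1\mid q-1$, $d_2\mid q+1$, and $d>1$ forces one factor, say $d_1>1$ (the other case is symmetric). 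I then take $\zeta\in\gf_q^*$ of order $d_1$ and $u\in\gf_q^*$; since $d_1\mid d\mid r$ we have $\zeta^r=1$, so $(\zeta u)^r=u^r$ and hence $D_r(\zeta u+a/(\zeta u),a)=D_r(u+a/u,a)$, while $\zeta u\neq u$ and, for all but the unique $u$ solving $\zeta u^2=a$, also $\zeta u\neq a/u$. Choosing $u$ to avoid that one exception yields $x_1\neq x_2$ with equal image, so $D_r$ is not a permutation; the symmetric case uses a $\zeta$ of order $d_2$ inside the norm-one subgroup $\{u:u^{q+1}=1\}$ acting on the $(q+1)$-element coset $\{u:u^{q+1}=a\}$.

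I expect the converse to be the main obstacle: the delicate point is not the collision $(\zeta u)^r=u^r$ itself but verifying that the perturbed parameter $\zeta u$ still lies in the admissible set $S$ and that it produces a genuinely different value $x=u+a/u$. This is exactly where the coprimality of $q-1$ and $q+1$ in characteristic two is used, since it lets $d$ be factored so that a root of unity of the required order can be chosen inside whichever of the two parametrizing subgroups is needed.
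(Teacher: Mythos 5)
Your proof is correct, but there is nothing in the paper to compare it against: the paper states this lemma as a quotation from Lidl's monograph \cite{lidl1993theory} and gives no proof at all, so what you have produced is a self-contained reconstruction of the classical result (N\"obauer's theorem), and your route is essentially the standard textbook one. The functional identity $D_r\left(u+\frac{a}{u},a\right)=u^r+\frac{a^r}{u^r}$, the parametrization of $\gf_q$ by the admissible set $S=\gf_q^*\cup\{u\in\gf_{q^2}^*: u^{q+1}=a\}$, and the multiset (same sum, same product) argument for injectivity are exactly the classical steps. Your converse is a clean characteristic-two specialization: both the factorization $d=\gcd(d,q-1)\cdot\gcd(d,q+1)$ and the uniqueness of square roots use that $q$ is even, which is legitimate since the paper only ever takes $q=2^k$; the general proof must work slightly harder in odd characteristic, where $\gcd(q-1,q+1)=2$. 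Two minor points. First, the phrase ``any $x\in\gf_q$ is a root of the split quadratic $T^2+xT+a$'' is a slip --- you mean $x$ is the \emph{sum} of the two roots $u$ and $a/u$ of that quadratic, which is how you actually use it, and the quadratic splits over $\gf_{q^2}$, not necessarily over $\gf_q$. Second, your standing assumption $a\neq 0$ deserves emphasis: as literally stated in the paper the equivalence is false for $a=0$, since $D_r(x,0)=x^r$ permutes $\gf_q$ if and only if $\gcd(r,q-1)=1$; this matters because in the paper's applications (e.g.\ Theorems \ref{3.3} and \ref{4.4}) the Dickson parameter $\tr_{q^l/q}(a^2)$ can vanish. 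Fortunately only the ``if'' direction is invoked there, and that direction survives $a=0$ because $\gcd(r,q^2-1)=1$ implies $\gcd(r,q-1)=1$.
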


\subsection{$2$-to-$1$ mappings over finite fields}
In 2019, the definition of $2$-to-$1$ mappings was given by Mesnager and Qu \cite{mesnager2019two}.
\begin{Def}
	Let $f$ be a mapping from $\gf_{2^n}$ to itself. $f$ is said to be a $2$-to-$1$ mapping if and only if for all $a\in \gf_{2^n}$, $\#f^{-1}(a)\in \{0,2\}$.		
\end{Def}

According to the above definition, the following lemma is clear.

\begin{Lemma}\label{$2$-to-$1$} 
	Let $f$ be a function over $\gf_{2^n}$. Then the following statements are equivalent.
    \begin{enumerate}[(1)]
        \item $f$ is $2$-to-$1$ over $\gf_{2^n}$;
        \item for any $a\in\gf_{2^n}$, the equation $f(x)=a$ has 0 or 2 solutions in $\gf_{2^n}$;
        \item for any $a\in \gf_{2^n}$, equation $f(x+a)+f(a)=0$ only has 2 solutions in $\gf_{2^n}$.
    \end{enumerate}
\end{Lemma}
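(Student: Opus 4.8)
The plan is to prove the three-way equivalence by first observing that (1) $\Leftrightarrow$ (2) is nothing more than a restatement of the definition, and then spending all the real effort on the equivalence (2) $\Leftrightarrow$ (3). The device I would use throughout is the substitution $y = x + a$: because the characteristic is $2$ this shift is its own inverse (i.e.\ $x = y + a$), so it sets up a bijection between the solution set of $f(x+a) + f(a) = 0$ in the variable $x$ and the preimage set $f^{-1}(f(a))$ in the variable $y = x+a$, after rewriting $f(x+a)+f(a)=0$ as $f(x+a)=f(a)$. I would also record the trivial but essential remark that $x = 0$ always lies in the solution set of $f(x+a)+f(a)=0$, since $f(a)+f(a)=0$. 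For (1) $\Leftrightarrow$ (2): by definition $f$ is $2$-to-$1$ exactly when $\#f^{-1}(a)\in\{0,2\}$ for every $a$, and $\#f^{-1}(a)$ is precisely the number of solutions of $f(x)=a$; hence (1) and (2) are literally the same assertion and need no argument.

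To show (2) $\Rightarrow$ (3), I would fix $a\in\gf_{2^n}$ and set $b=f(a)$. Since $a$ is itself a preimage of $b$, the equation $f(x)=b$ has at least one solution, so by (2) it has exactly two, say $a$ and $a'$ with $a'\ne a$. Applying the substitution $y=x+a$ to $f(x+a)+f(a)=0 \Leftrightarrow f(y)=b$, the solutions $y\in\{a,a'\}$ correspond to $x\in\{0,\,a+a'\}$, which are two distinct elements because $a'\ne a$. Thus $f(x+a)+f(a)=0$ has exactly two solutions, which is (3).

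For (3) $\Rightarrow$ (2), I would take any $b\in\gf_{2^n}$. If $f(x)=b$ has no solution there is nothing to prove, so I assume there is some $a$ with $f(a)=b$ and apply (3) to this $a$: the equation $f(x+a)+f(a)=0$ has exactly two solutions. One of them is $x=0$ by the remark above, and the other, say $x_0$, satisfies $x_0\ne 0$. Translating back via $y=x+a$ shows that the preimages of $b$ are exactly $y=a$ and $y=a+x_0$, and these are distinct since $x_0\ne 0$; no further preimage can exist, as it would produce a third solution of $f(x+a)+f(a)=0$, contradicting (3). Hence $\#f^{-1}(b)=2$, which gives (2).

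I do not expect a genuine obstacle here, since the whole argument is an elementary bijection-and-counting computation. The only points that demand care are the use of characteristic $2$ to make the shift $y=x+a$ an involution and to replace $f(x+a)+f(a)=0$ by $f(x+a)=f(a)$, together with the bookkeeping that $x=0$ is an automatic solution of the equation in (3), so that ``exactly two solutions'' there corresponds precisely to ``exactly two preimages'' in (2).
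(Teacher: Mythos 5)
Your proof is correct. The paper itself offers no argument for this lemma --- it simply states that ``according to the above definition, the following lemma is clear'' --- so your write-up is precisely the elementary justification the authors left implicit: the equivalence (1)~$\Leftrightarrow$~(2) is definitional, and the translation $y = x+a$ (together with the characteristic-$2$ rewriting of $f(x+a)+f(a)=0$ as $f(x+a)=f(a)$, and the observation that $x=0$ is always a solution) gives the bijection between solutions of the shifted equation and the fibre $f^{-1}(f(a))$. There is no gap, and nothing in your argument deviates from what the paper takes for granted.
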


Permutation polynomials play a crucial role in the construction of $2$-to-$1$ polynomials. Below is a description of the method for constructing $2$-to-$1$ mappings over finite fields through permutations.
\begin{Lemma} \cite{mesnager2019two}
	Let $G:\gf_{2^n}\rightarrow\gf_{2^n}$ be a permutation, $H:\gf_{2^n}\rightarrow\gf_{2^n}$ be $2$-to-$1$. Then $F_1(x)=H(G(x))$ and $F_2(x)=G(H(x))$ are $2$-to-$1$ in $\gf_{2^n}$.
\end{Lemma}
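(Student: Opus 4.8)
The plan is to reduce both claims to the preimage-counting characterization of $2$-to-$1$ mappings recorded just above, namely that a function $f$ over $\gf_{2^n}$ is $2$-to-$1$ if and only if for every $a\in\gf_{2^n}$ the equation $f(x)=a$ has either $0$ or $2$ solutions. Thus for each fixed $a$ I only need to count $\#F_i^{-1}(a)$ and show it lies in $\{0,2\}$, using as input that $\#H^{-1}(b)\in\{0,2\}$ for all $b\in\gf_{2^n}$ and that $G$ is a bijection of $\gf_{2^n}$.

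First I would handle $F_1(x)=H(G(x))$. Fix $a\in\gf_{2^n}$. Since $H(G(x))=a$ is equivalent to $G(x)\in H^{-1}(a)$, and since $G$ is a permutation of $\gf_{2^n}$ (hence both injective and surjective), the map $x\mapsto G(x)$ restricts to a bijection between $F_1^{-1}(a)=\{x: G(x)\in H^{-1}(a)\}$ and the set $H^{-1}(a)$. Consequently $\#F_1^{-1}(a)=\#H^{-1}(a)\in\{0,2\}$, so $F_1$ is $2$-to-$1$.

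Next I would treat $F_2(x)=G(H(x))$. Again fix $a$. Because $G$ is a permutation it admits an inverse $G^{-1}$, so $G(H(x))=a$ holds if and only if $H(x)=G^{-1}(a)$. Writing $b=G^{-1}(a)$, a single well-defined element of $\gf_{2^n}$, we obtain $F_2^{-1}(a)=H^{-1}(b)$, whence $\#F_2^{-1}(a)=\#H^{-1}(b)\in\{0,2\}$. Therefore $F_2$ is $2$-to-$1$ as well.

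There is no genuine obstacle in this argument: its entire content is that composing with a bijection either relabels the domain (in the $F_1$ case) or relabels the target value (in the $F_2$ case) without altering any fibre size. The only point requiring care is to apply the bijection on the correct side in each case — on the inner argument for $F_1$, via the injectivity that makes $x\mapsto G(x)$ a bijection onto $H^{-1}(a)$, and on the outer value for $F_2$, via invertibility that converts $a$ into the single preimage $G^{-1}(a)$ — so that the $0/2$ fibre sizes of $H$ are transported intact.
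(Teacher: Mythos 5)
Your proof is correct: the fibre-counting argument (transporting $\#H^{-1}(a)$ through the bijection $G$ on the inner side for $F_1$, and replacing $a$ by $G^{-1}(a)$ on the outer side for $F_2$) is exactly the standard argument; the paper itself only cites this lemma from \cite{mesnager2019two} without reproving it, and your reasoning matches the canonical proof given there. No gaps.
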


To determine whether a monomial is a permutation polynomial, the following lemma can be applied.

\begin{Lemma}
\cite{lidl1997finite}
Let $d$ be a positive integer. Then $x^d$ is a permutation in $\gf_{q}$ if and only if $\mathrm{gcd}(d,q-1)=1$.
\end{Lemma}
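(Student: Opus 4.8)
The plan is to reduce the claim to a statement about the cyclic multiplicative group $\gf_{q}^*$. Since $0^d = 0$, the point $0$ is always a fixed point of $x \mapsto x^d$, so $x^d$ permutes $\gf_{q}$ if and only if it permutes the nonzero elements $\gf_{q}^*$. First I would recall that $\gf_{q}^*$ is cyclic of order $q-1$; fixing a primitive element $g$, every $x \in \gf_{q}^*$ can be written uniquely as $x = g^i$ with $0 \le i \le q-2$. Then $x^d = g^{id}$, so under the isomorphism $\gf_{q}^* \cong \mathbb{Z}/(q-1)\mathbb{Z}$ the map $x \mapsto x^d$ corresponds to the map $i \mapsto di \bmod (q-1)$. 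Thus it suffices to decide when multiplication by $d$ permutes $\mathbb{Z}/(q-1)\mathbb{Z}$.

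For the direction ($\Leftarrow$), suppose $\gcd(d, q-1) = 1$. Then $d$ is a unit modulo $q-1$, so there is an integer $e$ with $de \equiv 1 \pmod{q-1}$. Using $x^{q-1} = 1$ for every $x \in \gf_{q}^*$, I would verify that $x \mapsto x^e$ is a two-sided inverse of $x \mapsto x^d$ on $\gf_{q}^*$, which forces the latter to be a bijection; together with $0 \mapsto 0$ this makes the monomial a permutation of $\gf_{q}$. For the converse ($\Rightarrow$), I would argue by contrapositive: if $s := \gcd(d, q-1) > 1$, then the number of solutions of $x^d = 1$ in $\gf_{q}^*$ equals the size of the kernel of $i \mapsto di$, which is exactly $s > 1$. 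Hence $x \mapsto x^d$ is not injective on $\gf_{q}^*$, and so not a permutation of $\gf_{q}$.

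Since $\gf_{q}$ is finite, one may alternatively package both directions by noting that a self-map of a finite set is a bijection if and only if it is injective, and then counting the fibre over $1$. There is no genuine obstacle here: the single point requiring care is the standard fact that the number of $d$-th roots of unity in $\gf_{q}^*$ equals $\gcd(d, q-1)$, which I would justify directly from the generator $g$ by solving $di \equiv 0 \pmod{q-1}$.
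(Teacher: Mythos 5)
Your proof is correct and complete: reducing to the cyclic group $\gf_{q}^*$ via a primitive element, inverting the exponent when $\gcd(d,q-1)=1$, and counting the $\gcd(d,q-1)$ roots of $x^d=1$ for the converse is exactly the standard argument. The paper itself gives no proof of this lemma --- it is quoted from Lidl and Niederreiter --- and your argument is precisely the textbook proof found in that reference, so there is nothing to add or compare.
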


Therefore, we have the following definition.
\begin{Def}\label{QM}
	Let $ f $ and $ g $ be two polynomials over the finite field $ \mathbb{F}_q $, satisfying $ f(x) = a g(bx^d) $, where $ a, b \in \mathbb{F}_q^\ast $ and $ 1 \le d \le q-1 $ is an integer coprime to $ q-1 $. Then $f$ and $g$ are said to be quasi-multiplicative (QM for short) equivalent, and $f$ is $2$-to-$1$ if and only if $g$ is also $2$-to-$1$.
\end{Def}

\subsection{Cyclotomic equivalence}
When conducting computational searches for $2$-to-$1$ mappings using MAGMA, careful consideration of mapping equivalence relations is essential. To optimize the search process, we systematically exclude equivalent function classes, thereby significantly reducing the search space. As a concrete illustration, we focus on $2$-to-$1$ mappings of the form $f(x) = cx + \tr_{q^l/q}(x^d)$ over the finite field $\gf_{q^l}$. This approach leverages fundamental properties of the trace function, particularly its invariance under Frobenius automorphism: for any $a \in \gf_{q^l}$, $\tr_{q^l/q}(a^q) = \tr_{q^l/q}(a)$. Before proceeding with the search algorithm, we first formalize the notion of cyclotomic equivalence as follows:

\begin{Def}
	Let $d_1$ and $d_2$ be two positive integers less than $q^l-1$. If there exists a positive integer $i$ such that $d_1\equiv d_2\cdot q^i \pmod {q^l-1}$, then $d_1$, $d_2$ are said to be cyclotomically equivalent.
\end{Def}

Together with the above definition and the QM-equivalence, we obtain some experimental results about $2$-to-$1$ mappings of the form $cx^{d_1}+\tr_{q^l/q}(x^{d_2})$ and $cx^d+\tr_{q^l/q}(x^{d_1}+x^{d_2})$, see Tables \ref{3-1} and \ref{4-1}, respectively.

\subsection{Walsh transform}

Let $f$ be a mapping from $\gf_{2^n}$ to itself with $f(0)=0$. The Walsh transform of $f$ at $(a,b)\in \gf_{2^{n}}\times\gf_{2^{n}}$ is defined as  
\begin{eqnarray}\label{walsh}
 W_f(a,b)=\sum_{x \in \gf_{2^{n}}}(-1)^{\tr_{2^n}(ax+bf(x))}.
 \end{eqnarray}
The multiset $\{W_f(a,b):a,b \in \gf_{2^n}\}$ is called the Walsh spectrum of $f$.  

\begin{Lemma}\label{qua}\cite{li2021binary}
Let $Q(x)$ be a quadratic function over $\gf_{2^n}$ and $\varphi_{a,b}(x)=\tr_{2^n}(ax+bQ(x))$. The bilinear form $B_{\varphi_{a,b}}(x,y)$ of $\varphi_{a,b}(x)$ is given by $$B_{\varphi_{a,b}}(x,y)=\varphi_{a,b}(x+y)+\varphi_{a,b}(x)+\varphi_{a,b}(y).$$
The kernel $V_{\varphi_{a,b}}$ of $B_{\varphi_{a,b}}$ is defined as $$V_{\varphi_{a,b}}=\{y \in \gf_{2^n}|B_{\varphi_{a,b}}(x,y)=0, \forall x \in \gf_{2^n}\}.$$ The rank of $\varphi_{a,b}$ is $\rank(\varphi_{a,b})=n-d$, where $d = \dim_{\gf_2}(V_{\varphi_{a,b}})$ is the dimension of $V_{\varphi_{a,b}}$ as an $\gf_2$-vector space.
The Walsh transform of $Q(x)$ at $(a,b)\in \gf_{2^n} \times \gf_{2^n}$ is defined as $$W_Q(a,b)=\sum_{x \in \gf_{2^{n}}}(-1)^{\tr_{2^n}(ax+bQ(x))}=
\begin{cases}
\pm2^{\frac{n+d}{2}}, & \text{if $\varphi_{a,b}$ vanishes on $V_{\varphi_{a,b}}$}, \\
0, & \text{otherwise}.
\end{cases}$$
\end{Lemma}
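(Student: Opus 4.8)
The plan is to treat $\varphi := \varphi_{a,b}$ as a quadratic Boolean function on the $\gf_2$-vector space $\gf_{2^n}$ and to evaluate its Walsh transform $W_Q(a,b)=\sum_{x\in\gf_{2^n}}(-1)^{\varphi(x)}$ by the standard squaring trick. First I would compute
$$W_Q(a,b)^2=\sum_{x,z\in\gf_{2^n}}(-1)^{\varphi(x)+\varphi(z)}=\sum_{x,y\in\gf_{2^n}}(-1)^{\varphi(x)+\varphi(x+y)},$$
after the substitution $z=x+y$. Using the defining identity of the associated bilinear form, namely $\varphi(x+y)=\varphi(x)+\varphi(y)+B_{\varphi_{a,b}}(x,y)$, the exponent collapses to $\varphi(y)+B_{\varphi_{a,b}}(x,y)$, so that
$$W_Q(a,b)^2=\sum_{y\in\gf_{2^n}}(-1)^{\varphi(y)}\sum_{x\in\gf_{2^n}}(-1)^{B_{\varphi_{a,b}}(x,y)}.$$

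The key observation is that for each fixed $y$ the map $x\mapsto B_{\varphi_{a,b}}(x,y)$ is an $\gf_2$-linear form on $\gf_{2^n}$ (the linear part of $\varphi$ cancels in $B_{\varphi_{a,b}}$, leaving only the bilinear contribution of the quadratic part). Hence the inner character sum equals $2^n$ when this form is identically zero---that is, exactly when $y\in V_{\varphi_{a,b}}$---and vanishes otherwise. This reduces the computation to
$$W_Q(a,b)^2=2^n\sum_{y\in V_{\varphi_{a,b}}}(-1)^{\varphi(y)}.$$
It then remains to analyze $\varphi$ restricted to the kernel $V_{\varphi_{a,b}}$. For $y_1,y_2\in V_{\varphi_{a,b}}$ one has $B_{\varphi_{a,b}}(y_1,y_2)=0$, whence $\varphi(y_1+y_2)=\varphi(y_1)+\varphi(y_2)$; together with $\varphi(0)=0$ this shows that $\varphi|_{V_{\varphi_{a,b}}}$ is $\gf_2$-linear. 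Therefore either $\varphi$ vanishes identically on $V_{\varphi_{a,b}}$, in which case the remaining sum equals $|V_{\varphi_{a,b}}|=2^d$ and $W_Q(a,b)^2=2^{n+d}$, giving $W_Q(a,b)=\pm 2^{(n+d)/2}$; or $\varphi|_{V_{\varphi_{a,b}}}$ is a nonzero linear functional, which takes each value in $\gf_2$ on exactly half of $V_{\varphi_{a,b}}$, forcing the sum---and hence $W_Q(a,b)$---to be zero. This is precisely the claimed dichotomy.

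Finally, to see that the exponent $\tfrac{n+d}{2}$ is an integer, I would note that $B_{\varphi_{a,b}}$ is an alternating form in characteristic two, since $B_{\varphi_{a,b}}(x,x)=\varphi(0)=0$; an alternating bilinear form over a field always has even rank, so $\rank(\varphi_{a,b})=n-d$ is even and consequently $n+d$ is even as well. The only genuinely delicate points are the $\gf_2$-linearity of the restriction $\varphi|_{V_{\varphi_{a,b}}}$ and this even-rank property: both are elementary, but they are exactly what guarantees that $\pm 2^{(n+d)/2}$ is well defined and that the nonzero values come in the stated form. I expect these two structural facts to be the main hinge of the argument, rather than any lengthy computation.
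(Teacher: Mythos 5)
Your argument is correct: the squaring trick, the reduction of $W_Q(a,b)^2$ to $2^n\sum_{y\in V_{\varphi_{a,b}}}(-1)^{\varphi_{a,b}(y)}$, the linearity of $\varphi_{a,b}$ restricted to the kernel, and the even-rank property of the alternating form $B_{\varphi_{a,b}}$ are exactly the standard proof of this classical fact. Note that the paper itself gives no proof here---the lemma is quoted from \cite{li2021binary}---and your argument coincides with the usual one found in that literature, so there is nothing to flag beyond the minor observation that evenness of $n+d$ also follows for free from $W_Q(a,b)$ being an integer whose square is $2^{n+d}$.
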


\subsection{Binary linear codes from functions}

In this subsection, we first recall some basic knowledge about binary linear codes, which can be found in \cite{huffman2010fundamentals}. A non-empty subset $C$  of $\gf_2^n$ is called a binary $[n,k]$-linear code if $C$ is a subspace with dimension $k$. Any element of $C$ is called a codeword. For any codeword $\bu\in C$, its (Hamming) weight $\wt(\bu)$ is the cardinality of its support, i.e., $\supp(\bu) = \{ i: 1 \le i\le n ~|~ u_i =1 \}$. For any binary $[n,k]$-linear code $C$, the minimum (Hamming) distance of $C$ is $d=\min_{\bu\neq0}\{\wt(\bu): \bu\in C\}$. If the minimum distance $d$ of $C$ is known, then $C$ can be called a binary $[n,k,d]$-linear code. Moreover, let $A_i$ denote the number of codewords with weight $i$ in $C$. The weight enumerator of $C$ is defined as $1+A_1z+A_2z^2+\cdots+A_nz^n$. The sequence $(1, A_1, A_2,\ldots, A_n)$ is called the weight distribution of $C$. Moreover, $C$ is said to be a $t$-weight code or a code with $t$ weights if the number of nonzero $A_i$ in the sequence $(1, A_1, A_2,\ldots, A_n)$ is equal to $t$. 

For a binary linear code $C$, its dual code $C^{\perp}$ is defined as: 
 $$C^{\perp} = \left\{  \bv: \bv \in C ~|~ \bu \cdot \bv = 0 ~\text{for all}~ \bu\in C \right\}.$$
$C$ is called  self-orthogonal if $C \subseteq C^{\perp}$. It is well-known that if for any $\bu\in C$, $\wt(\bu)\equiv 0\pmod 4$, then $C$ is self-orthogonal. Moreover, we say a codeword $\bu\in C$ covers a codeword $\bv\in C$ if $\supp(\bu)$ contains $\supp(\bv)$.  A codeword $\bu\in C$ is called minimal if $\bu$ covers only the codewords $\bu$ and $\mathbf{0}$, but no other codewords in $C$. $C$ is said to be minimal if every codeword in $C$ is minimal. Furthermore, Aschikhmin and Barg \cite{ashikhmin2002minimal} proved that if $\frac{w_{\min}}{w_{\max}}> \frac{1}{2}$, where $w_{\min}$ and $w_{\max}$ denote the minimum and maximum nonzero weights in $C$, respectively, then $C$ is minimal.

We now recall a general approach to constructing linear codes from functions, see \cite{carlet2005linear}.  Given a function $f: \gf_{2^{n}}\rightarrow \gf_{2^{n}}$, a binary linear code $C_f$ can be defined as  
\begin{eqnarray}\label{cf}
C_f=\{c_{a,b}=(\tr_{2^n}(ax+bf(x)))_{x\in \gf_{2^{n}}^*}:a, b \in \gf_{2^{n}}\}.  
\end{eqnarray}
The code length is $2^n-1$, and the dimension is at most $2n$. The dimension of the linear code $C_f$ equals $2n-d_{K}$, where $K=\{(a,b)\in \gf_{2^{n}}^2:W_f(a,b)=2^n\}$ and $d_{K}$ is the dimension of $K$. For this construction method, the Hamming weight of a codeword in the linear code is given by  
\begin{eqnarray}
    \label{wt}
wt(c_{a,b})&=&\#\{x \in \gf_{2^n}^*:\tr_{2^n}(ax+bf(x))=1\} \nonumber \\
&=&2^{n-1}-\frac{1}{2}\sum_{x\in \gf_{2^n}}(-1)^{\tr_{2^n}(ax+bf(x))} \nonumber\\
&=&2^{n-1}-\frac{1}{2}W_f(a,b).
\end{eqnarray} 
Thus, the weight distribution of the linear code is entirely determined by the Walsh spectrum of $f$.

When  $W_f(a,b) = v_i $ occurs $ N_i $ times, the corresponding codeword weight is $wt(c_{a,b}) = 2^{n-1} - \frac{v_i}{2}$ with frequency $ \frac{N_i}{2^{d_{K}}} $. This is because the code $ C_f $ is linear, and the value of a codeword depends on the coset of $(a,b)$ in the quotient space $ \gf_{2^n}^2/K$. If $(a,b)$ and $(a',b')$ belong to the same coset, their corresponding codewords are identical, i.e., $ c_{a,b} = c_{a',b'} $. Thus, each coset corresponds to a unique codeword.  

The total number of pairs $(a,b) \in \gf_{2^n}^2 $ is $ 2^{2n} $, and the size of $ K_1 $ is $ 2^{d_{K}} $. Therefore, the total number of cosets is $ \frac{2^{2n}}{2^{d_{K}}} = 2^{2n - d_{K}} $, which equals the number of codewords in the linear code $ C_f $. This can be viewed as a $(2n - d_{K})$-dimensional linear space over $ \gf_2 $.  

When there are $ N_i $ pairs $(a,b)$ such that $ W_f(a,b) = v_i $, each coset contains $ 2^{d_{K}} $ such pairs, leading to a codeword weight frequency of $ \frac{N_i}{2^{d_{K}}} $.  

For all $ a, b \in \gf_{2^n}$, the Walsh transform $ W_f(a,b) $ takes only three non-trivial values (i.e., $ \neq 2^n $), denoted as $ v_1, v_2, v_3 $, then the Walsh value distribution can be determined by the following system of equations:  
$$
\begin{cases}  
X_0 + X_1 + X_2 + X_3 = 2^{2n}, \\  
v_0X_0 + v_1X_1 + v_2X_2 + v_3X_3 = 2^{2n}, \\  
v_0^2X_0 + v_1^2X_1 + v_2^2X_2 + v_3^2X_3 = 2^{3n},  
\end{cases}  
$$
where $ X_i $ denotes the number of occurrences of $ v_i $ ($ i = 0,1,2,3 $) in the Walsh spectrum of $ f $. Note that $ (X_0, v_0) = (2^{d_{K}}, 2^n) $. Consequently, the weight distribution of the linear code $C_f$ can be fully determined.

\section{$2$-to-$1$ mappings of the form $cx+\tr_{q^l/q}(x^d)$}\label{monomial}
In this section, we construct several $2$-to-$1$ mappings of the form $cx^{d_1}+\tr_{q^l/q}(x^{d_2})$ over $\gf_{q^l}$. We first obtain some numerical results by MAGMA, see Table \ref{3-1}. The scope we search is as follows: $q=2^k$, $d_1,d_2 \in[1,q^l-2]$, $kl<14$, $c \in \gf_{q^l}^*$, $k,l \in \mathbb{N}^+$.
\begin{table}[H]
	\centering
	\caption{$2$-to-$1$ polynomials of the form $cx+\tr_{q^l/q}(x^d)$ in $\gf_{q^l}$, where $q=2^k$, $kl<14$}
	\label{3-1}
	\begin{tabular}{cccc }
		\toprule
		$(k,l)$& $d$& $c$& Reference\\
		\midrule
		(2,3)& 5& $\gf_4^*$& Theorem \ref{3.2}(\ref{q+1})\\
		\midrule
		\multirow{5}{*}{(3,3)}& 6& $\gf_8^*$& Theorem 3.3(\ref{x^6})\\
		~& 9& $\gf_8^*$& Theorem \ref{3.2}(\ref{q+1})\\
		~& 18& $\gf_8^*$& Theorem \ref{3.2}(\ref{2q+2})\\
		~& 20& $\gf_8^*$& Theorem \ref{3.3}(\ref{2q+4})\\
		~& 34& $\gf_8^*$& Theorem \ref{3.3}(\ref{4q+2})\\
		\midrule
		\multirow{2}{*}{(2,5)}& 5& $\gf_4^*$& Theorem \ref{3.2}(\ref{q+1})\\
		~& 17& $\gf_4^*$& Theorem \ref{3.2}(\ref{q^2+1})\\
		\midrule
		(3,2)& 20& $c^{q-1}\in \gf_4 \backslash \gf_2$&  Theorem \ref{l=2,2q+4}\\
		\midrule
		\multirow{3}{*}{(5,2)}& 10&  $c^{q-1}\in \gf_4\verb|\|\gf_2$& Theorem \ref{Conj1}\\
		~& 68& $c^{q-1}\in \gf_4\verb|\|\gf_2$& Theorem \ref{l=2,2q+4}\\
		~& 272& $c^{q-1}\in \gf_4\verb|\|\gf_2$& Theorem \ref{conj2}\\
		\bottomrule
	\end{tabular}
\end{table}

To narrow the search space, the following three additional conditions are imposed:
\begin{enumerate}[(1)]
	\item since \cite{charpin2009does} has already investigated the case where $k=1$, to avoid obtaining redundant results, it is assumed that $k>1$. 
	\item $d_2$ is the representative element of all positive integers less than $q^l-1$ under the cyclotomic equivalence, and $\tr_{q^l/q}(x^{d_2}) \ne 0$.
	\item Regarding the range of $d_1$, only cases where $d_1$
	within the range $[1,q^l-2]$ is a divisor of $q^l-1$ are considered based on the following lemma.
\end{enumerate} 
\begin{Lemma}\label{divisor}
	Let $\gcd(d_1,q^l-1)=s$. $f(x)=cx^{d_1}+\tr_{q^l/q}(x^{d_2})$ is $2$-to-$1$ if and only if $g(x)=f(x^t)=x^s+\tr_{q^l/q}(x^{d_2t})$ is $2$-to-$1$ where $d_1t\equiv s \pmod {q^l-1}$, $s$ is a divisor of $q^l-1$.
\end{Lemma}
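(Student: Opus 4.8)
The plan is to realize $g$ as the composition of $f$ with a suitable power permutation, so that the equivalence of the $2$-to-$1$ property transfers through the composition lemma (the statement that $H\circ G$ and $G\circ H$ are $2$-to-$1$ whenever $G$ is a permutation and $H$ is $2$-to-$1$). Concretely, I want to exhibit an exponent $t$ with $\gcd(t,q^l-1)=1$ satisfying $d_1 t\equiv s\pmod{q^l-1}$; then $\phi(x)=x^t$ permutes $\gf_{q^l}$ by the monomial-permutation criterion stated earlier, and $g(x)=f(x^t)=f(\phi(x))=(f\circ\phi)(x)$.

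The first and main obstacle is securing such a $t$. Since $s=\gcd(d_1,q^l-1)$, I would write $q^l-1=sN$ and $d_1=sm$ with $\gcd(m,N)=1$. The congruence $d_1t\equiv s\pmod{q^l-1}$ is then equivalent to $mt\equiv 1\pmod N$, which forces $t\equiv m^{-1}\pmod N$ and in particular yields $\gcd(t,N)=1$ for free. The remaining requirement $\gcd(t,s)=1$ I would arrange by the Chinese Remainder Theorem: for each prime $p\mid s$ with $p\mid N$, the condition $\gcd(t,N)=1$ already gives $p\nmid t$; for each prime $p\mid s$ with $p\nmid N$, I am free to prescribe $t\not\equiv 0\pmod p$ independently of the residue of $t$ modulo $N$. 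Splicing these congruences together produces a $t$ coprime to $sN=q^l-1$ with $d_1t\equiv s\pmod{q^l-1}$, and a concrete small example (e.g.\ $q^l-1=12$, $d_1=8$, giving $s=4$, $t=5$) can be used as a sanity check.

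With $t$ fixed, $\phi(x)=x^t$ is a permutation of $\gf_{q^l}$. If $f$ is $2$-to-$1$, then $g=f\circ\phi$ is $2$-to-$1$ as the composition of a $2$-to-$1$ map with a permutation. Conversely, if $g$ is $2$-to-$1$, I would invert $\phi$: choosing $t'$ with $tt'\equiv 1\pmod{q^l-1}$ makes $\phi^{-1}(x)=x^{t'}$ again a power permutation, whence $f=g\circ\phi^{-1}$ is $2$-to-$1$ by the same lemma. This establishes the biconditional. Throughout I would keep in mind that $\phi$ fixes $0$ and $f(0)=g(0)=0$, so the behaviour at the origin causes no trouble.

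Finally I would simplify the exponents in $g$. Since $d_1t\equiv s\pmod{q^l-1}$, one has $x^{d_1t}=x^{s}$ for every $x\in\gf_{q^l}$, so $g(x)=f(x^t)=cx^{d_1t}+\tr_{q^l/q}(x^{d_2t})=cx^{s}+\tr_{q^l/q}(x^{d_2t})$, the reduced form with exponent the divisor $s\mid q^l-1$ (the coefficient $c$ being carried along). This is exactly the reduction that legitimizes restricting the MAGMA search over $d_1$ to divisors of $q^l-1$.
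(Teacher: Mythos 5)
Your proposal is correct, and at bottom it runs on the same engine as the paper's proof: both arguments rest on the fact that $x\mapsto x^t$ permutes $\gf_{q^l}$ precisely when $\gcd(t,q^l-1)=1$. The packaging differs only slightly: the paper takes such a $t$, writes $f(x^t)=cx^{td_1}+\tr_{q^l/q}(x^{d_2t})=cx^s+\tr_{q^l/q}(x^{d_2t})$, and closes by citing QM-equivalence (Definition \ref{QM}), whereas you run the composition lemma of \cite{mesnager2019two} in both directions with $\phi(x)=x^t$ and $\phi^{-1}(x)=x^{t'}$; these are the same mechanism made explicit. What you add that the paper genuinely lacks is the existence argument for $t$: the paper's proof only observes that $\gcd(td_1,q^l-1)=s$ for any unit $t$, but then silently uses the stronger congruence $td_1\equiv s\pmod{q^l-1}$ to replace $x^{td_1}$ by $x^s$, never verifying that some $t$ coprime to $q^l-1$ actually satisfies that congruence. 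Your reduction of the congruence to $mt\equiv 1\pmod N$ (writing $d_1=sm$, $q^l-1=sN$ with $\gcd(m,N)=1$, so that $\gcd(t,N)=1$ comes for free) together with the CRT splicing to avoid the primes dividing $s$ but not $N$ is exactly the missing justification, and it is correct. A minor further point in your favour: you retain the coefficient $c$ in $g(x)=cx^s+\tr_{q^l/q}(x^{d_2t})$, which is what the substitution actually yields; the paper's statement of the lemma drops the $c$ (harmless in context, but your form is the accurate one).
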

\begin{proof}
	Let $(d_1,q^l-1)=s$, $(t,q^l-1)=1$. Then $(td_1,q^l-1)=s$, $$f\left(x^t\right)=cx^{td_1}+\tr_{q^l/q}\left(x^{d_2t}\right)=cx^s+\tr_{q^l/q}\left(x^{d_2t}\right).$$ Since $g(x)=f(x^t)$, $f(x)$ is QM-equivalent to $g(x)$. Due to Definition \ref{QM}, $f$ is $2$-to-$1$ if and only if $g$ is $2$-to-$1$.
\end{proof}

Then we obtain all $2$-to-$1$ polynomials over $\gf_{q^l}$ of the form $cx^{d_1}+\tr_{q^l/q}(x^{d_2})$ with $q=2^k$, $d_1,d_2 \in[1,q^l-2]$, $kl<14$, $c \in \gf_{q^l}^*$ and $k,l \in \mathbb{N}^+$ in Table \ref{3-1}. The ``Reference" column in the table indicates that this dataset served as an illustrative example for the theorem cited in the ``Reference". The data illustrated that there does not exist a $2$-to-$1$ mapping of the form $cx^{d_1}+\tr_{q^l/q}(x^{d_2})$ when $d_1 \ne 1$. Therefore, we consider the form $cx+\tr_{q^l/q}(x^d)$. All examples in Table \ref{3-1} have been generalized into infinite classes.

\begin{Th}
    \label{3.2}
	Let $l$ be an odd positive integer, $q=2^k$, $k \in \mathbb{N}$, $c\in\mathbb{F}_q^\ast$. Then
	\begin{enumerate}[1)]
		\item $f(x)=cx+\tr_{q^l/q}(x^{q+1})$;\label{q+1}
		\item $f(x)=cx+\tr_{q^l/q}(x^{2q+2})$, where $k$ is odd;\label{2q+2} 
		\item $f(x)=cx+\tr_{q^l/q}(x^{q^2+1})$ \label{q^2+1}
	\end{enumerate}
	are $2$-to-$1$ over $\mathbb{F}_{q^l}$. 
\end{Th}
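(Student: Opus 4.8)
The plan is to run the elementary approach described in the introduction uniformly across all three cases. Fix $a\in\gf_{q^l}$ and count the solutions of $f(x)=a$. Setting $u=\tr_{q^l/q}(x^{d})\in\gf_{q}$ (where $d=q+1$, $2q+2$, or $q^2+1$), the equation $f(x)=a$ forces $cx=a+u$, and since $c\in\gf_q^\ast$ this yields $x=c^{-1}(a+u)$. The problem then reduces to counting the $u\in\gf_q$ consistent with this substitution, i.e.\ those satisfying $\tr_{q^l/q}\bigl((c^{-1}(a+u))^{d}\bigr)=u$; because $u\mapsto c^{-1}(a+u)$ is a bijection, the number of preimages of $a$ under $f$ equals the number of admissible $u$, so it suffices to show that this count is always $0$ or $2$.

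The key simplification is that both $c$ and $u$ lie in the subfield $\gf_q$, so every Frobenius power $q^i$ fixes them. Expanding and using $c^{q^i}=c$, $u^{q^i}=u$, I expect each case to collapse to a trinomial in $u$: for $d=q+1$ one has $(a+u)^{q+1}=a^{q+1}+u(a+a^q)+u^2$, and the analogous identity with $a^{q^2}$ for $d=q^2+1$. Applying $\tr_{q^l/q}$ and invoking two facts — that the cross term dies because $\tr_{q^l/q}(a+a^{q^i})=\tr_{q^l/q}(a)+\tr_{q^l/q}(a^{q^i})=0$ in characteristic $2$ (Lemma \ref{trace}(5)), and that $\tr_{q^l/q}(1)=l=1$ in $\gf_q$ precisely because $l$ is odd (Lemma \ref{trace}(4)) — the defining relation becomes, after multiplying by $c^2$,
\begin{equation*}
u^2+c^2u+\tr_{q^l/q}(a^{q+1})=0
\end{equation*}
in case 1) and $u^2+c^2u+\tr_{q^l/q}(a^{q^2+1})=0$ in case 3). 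For case 2), since $2q+2=2(q+1)$ one squares the $(q+1)$ identity and uses $c^{2q+2}=c^4$, so the relation becomes the quartic trinomial $u^4+c^4u+\tr_{q^l/q}(a^{2q+2})=0$.

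It then remains to show each of these equations has $0$ or $2$ roots in $\gf_q$. For the quadratics this is immediate: the additive map $u\mapsto u^2+c^2u$ has kernel $\{0,c^2\}$ of size $2$, so every value is attained $0$ or $2$ times (equivalently one may apply the quadratic trace criterion). The quartic is the genuinely interesting case and the one place the hypothesis ``$k$ odd'' is used: the additive map $u\mapsto u^4+c^4u$ has kernel $\{0\}\cup\{u:u^3=c^4\}$, and $k$ odd gives $\gcd(3,2^k-1)=1$, so cubing permutes $\gf_q^\ast$ and $u^3=c^4$ has a unique nonzero solution; hence the kernel again has size $2$ and the quartic has $0$ or $2$ roots.

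I expect the main obstacle to be exactly this quartic step — confirming that $k$ odd forces a kernel of size $2$ (rather than $4$, which would make $f$ four-to-one), i.e.\ pinning down the arithmetic input $\gcd(3,2^k-1)=1$ and checking the argument genuinely fails without it. The expansions themselves are routine once one observes that all Frobenius images of $c$ and $u$ collapse; the only real content is the cancellation of the cross term, the role of $l$ odd in keeping the $u^2$ (resp.\ $u^4$) term alive, and the cube-permutation fact for odd $k$.
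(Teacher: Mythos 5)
Your proof is correct and takes essentially the same route as the paper's: the substitution $u=cx+a\in\gf_q$, the collapse of the cross term via Frobenius-invariance of the trace, and the use of $l$ odd to keep $\tr_{q^l/q}(u^2)=u^2$, arriving at the identical quadratic $u^2+c^2u+\tr_{q^l/q}(a^{q+1})=0$ for case 1). The paper writes out only case 1) and declares the rest ``analogous''; your explicit reduction of case 2) to the quartic $u^4+c^4u+\tr_{q^l/q}(a^{2q+2})=0$, with the kernel-size argument via $\gcd(3,2^k-1)=1$ for odd $k$, is precisely the omitted content and correctly pinpoints the only place the hypothesis ``$k$ odd'' is needed.
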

\begin{proof}
    Since the proof techniques for these cases are analogous, we present a detailed demonstration for the first case as a representative example. 
    According to the definition of $2$-to-$1$ mappings, we show that for any $a\in \gf_{q^l}$, the equation $f(x)=a,$ i.e.,
	\begin{eqnarray}\label{1}
		cx+\tr_{q^l/q}(x^{q+1})=a
	\end{eqnarray}
	has 0 or 2 solutions in $\gf_{q^l}$.
    
	Let $u=cx+a$. Then $u=\tr_{q^l/q}(x^{q+1})\in \gf_{q}$ and $x=\frac{u+a}{c}$ since $c\ne0$. Moreover, since $u\in \gf_{q}$ and $c\in \gf_q^*$, $$x^{q+1}=\left(\frac{u+a}{c}\right)^{q+1}=\frac{u^2+u(a+a^q)+a^{q+1}}{c^2}.$$
    
	Plugging the above equation into Eq. (\ref{1}), we get 
	$$u=\tr_{q^l/q}\left(\frac{u^2+u(a+a^q)+a^{q+1}}{c^2}\right)=\frac{u^2}{c^2}+\tr_{q^l/q}(a+a^q)\frac{u}{c^2}+\frac{\tr_{q^l/q}(a^{q+1})}{c^2},$$
	where the second equality holds due to Lemma \ref{trace}(4) and $l$ being odd. Note that for any $a\in\gf_{q^l}$, $\tr_{q^l/q}(a+a^q)=0$, thus the above equation becomes
	$$u=\frac{u^2}{c^2}+\frac{\tr_{q^l/q}(a^{q+1})}{c^2}.$$
	That is, 
	\begin{eqnarray}\label{2}
		u^2+c^2u+\tr_{q^l/q}(a^{q+1})=0.
	\end{eqnarray}
    
	In case of $c\ne0$, Eq. (\ref{2}) has 0 or 2 solutions in $\gf_{q}$. Hence $f(x)=a$ has 0 or 2 solutions in $\gf_{q^l}$. That is to say, $f(x)$ is $2$-to-$1$ over $\gf_{q^l}$.
\end{proof}
\begin{Th}
\label{3.3}
	Let $k$ and $l$ be odd positive integers, $q=2^k$ and $c\in \gf_{q}^* $. Then
	\begin{enumerate}[1)]
		\item $f(x)=cx+\tr_{q^l/q}(x^6)$;\label{x^6}
		\item $f(x)=cx+\tr_{q^l/q}(x^{4q+2})$; \label{4q+2}
		\item $f(x)=cx+\tr_{q^l/q}(x^{2q+4})$ \label{2q+4}
	\end{enumerate}
	are $2$-to-$1$ over $\gf_{q^l}$. 
\end{Th}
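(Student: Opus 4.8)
The plan is to verify the third equivalent condition for $2$-to-$1$ mappings in Lemma \ref{$2$-to-$1$}: for every $a\in\gf_{q^l}$ the equation $f(x+a)+f(a)=0$ should have exactly two solutions in $\gf_{q^l}$. Expanding and using $\gf_q$-linearity of the trace, this equation reads $cx=\tr_{q^l/q}\big((x+a)^d+a^d\big)$. The crucial first observation is that the right-hand side lies in $\gf_q$; since $c\in\gf_q^\ast$, every solution $x$ must already lie in $\gf_q$. This collapses the a priori high-degree problem over $\gf_{q^l}$ to a problem over the much smaller field $\gf_q$, and it is the reason I would choose the Dickson-polynomial route (the ``second method'') rather than the elementary substitution.

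The second step is a uniform expansion. For each of the three exponents $d\in\{6,\,4q+2,\,2q+4\}$ I would substitute $x\in\gf_q$ (so that $x^q=x$) and expand $(x+a)^d+a^d$ by the Frobenius identities. Writing $T=\tr_{q^l/q}(a)\in\gf_q$ and using $\tr_{q^l/q}(1)=l\cdot 1=1$ (here $l$ odd is essential), $\tr_{q^l/q}(a^2)=T^2$, $\tr_{q^l/q}(a^4)=T^4$, together with $\tr_{q^l/q}(\alpha^q)=\tr_{q^l/q}(\alpha)$, I expect all three cases to collapse to the single identity
\[
\tr_{q^l/q}\big((x+a)^d+a^d\big)=x^6+T^2x^4+T^4x^2 .
\]
(For instance, $d=4q+2$ gives $(x+a)^{4q+2}+a^{4q+2}=x^6+a^2x^4+a^{4q}x^2$, and $\tr_{q^l/q}(a^{4q})=\tr_{q^l/q}(a^4)=T^4$.) Thus in all three cases the equation becomes $x^6+T^2x^4+T^4x^2+cx=0$, i.e.
\[
x\big(x^5+T^2x^3+T^4x+c\big)=0 .
\]

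The third step identifies the quintic factor with a Dickson polynomial. In characteristic $2$ one has $D_5(x,\beta)=x^5+\beta x^3+\beta^2x$, so with $\beta=T^2$ the bracket equals $D_5(x,T^2)+c$, and the non-trivial solutions are exactly the roots of $D_5(x,T^2)=c$ in $\gf_q$. Since $x=0$ is always a solution and is not a root of the quintic (because $c\neq0$), it remains to show $D_5(x,T^2)=c$ has exactly one solution in $\gf_q$. I would finish by invoking Lemma \ref{gcd}: it suffices that $D_5(\cdot,T^2)$ permutes $\gf_q$, which holds once $\gcd(5,q^2-1)=1$. Here the hypothesis $k$ odd enters, since $4^k\equiv 4\not\equiv 1\pmod 5$ for odd $k$, so $5\nmid q^2-1$. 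This yields a unique solution of the quintic, hence exactly two solutions ($x=0$ and one other) of $f(x+a)+f(a)=0$, proving $f$ is $2$-to-$1$.

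The main obstacle I anticipate is bookkeeping rather than conceptual: verifying that all three exponents genuinely reduce to the same polynomial $x^6+T^2x^4+T^4x^2$ (the expansions of $(x+a)^{4q+2}$ and $(x+a)^{2q+4}$ require care with which powers of $a$ carry a Frobenius twist). A minor but necessary point is the degenerate case $T=0$: then $D_5(x,0)=x^5$ and one must instead check $\gcd(5,q-1)=1$, which again follows from $k$ odd (and is in any case implied by $\gcd(5,q^2-1)=1$ since $q-1\mid q^2-1$). Handling this edge case keeps the ``exactly one solution'' conclusion valid for every $T\in\gf_q$.
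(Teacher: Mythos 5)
Your proposal is correct and follows essentially the same route as the paper's proof: reduce to $x\in\gf_q$ via $cx=\tr_{q^l/q}(\cdot)\in\gf_q$, collapse the trace using $l$ odd so the equation becomes $x\bigl(D_5(x,\tr_{q^l/q}(a^2))+c\bigr)=0$, and invoke the Dickson permutation criterion $\gcd(5,q^2-1)=1$ from $k$ odd. Your explicit verification that all three exponents reduce to the same quintic, and your handling of the degenerate case $T=0$, are details the paper glosses over (it proves only the case $d=6$ and calls the others analogous), but the underlying argument is identical.
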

\begin{proof}
    Since the proof techniques for these cases are analogous, we present a detailed demonstration for the first case as a representative example.
	According to Lemma \ref{$2$-to-$1$}, it suffices to show that for any $a \in \gf_{q^l}$, the equation 
	$$f(x+a)+f(a)=0,$$ 
	i.e., 
	\begin{eqnarray}\label{3}
		cx+\tr_{q^l/q}(x^6+a^2x^4+a^4x^2)=0
	\end{eqnarray} 
	has 2 solutions in $\gf_{q^l}$.
	According to Eq. (\ref{3}), we get $cx=\tr_{q^l/q}(x^6+a^2x^4+a^4x^2)\in\gf_{q}$ and $x\in \gf_{q}$ since $c\in\gf_{q}^*$. 
    
    From Lemma \ref{trace}(4),
	$$x^6+\tr_{q^l/q}(a^2)x^4+\tr_{q^l/q}(a^4)x^2+cx=0,$$
	where $l$ is odd and $x\in \gf_{q}$.
	Thus $x=0$ or 
	\begin{eqnarray}\label{4}
		x^5+\tr_{q^l/q}(a^2)x^3+\tr_{q^l/q}(a^4)x=c.
	\end{eqnarray}
    
	We notice that $$x^5+\tr_{q^l/q}(a^2)x^3+\tr_{q^l/q}(a^4)x=D_5(x,\tr_{q^l/q}(a^2))$$ is a Dickson polynomial of degree 5. Recalling that $k$ is odd, $\mathrm{gcd}(5,q^2-1)=1$, and from Lemma \ref{gcd}, $D_5(x,\tr_{q^l/q}(a^2))$ permutes $\gf_{q}$. That is, Eq. (\ref{4}) only has one solution in $\gf_{q}$, and the solution is not 0 since $c \ne 0$. Therefore, Eq. (\ref{3}) has two solutions in $\gf_{q^l}$ and then $f(x)$ is $2$-to-$1$ over $\gf_{q^l}$.
\end{proof}
\begin{Th}
	\label{l=2,2q+4}
	Let $k$ be an odd positive integer and $q=2^k$, $c \in \gf_{q^2}^*$ satisfies $c^{q-1}=\omega$ with $\omega\in \gf_{2^2} \backslash \gf_2$. Then $f(x)=cx+\tr_{q^2/q}(x^{2q+4})$ is $2$-to-$1$ over $\gf_{q^2}$.
\end{Th}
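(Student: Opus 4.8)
The plan is to follow the second (Dickson) method outlined in the introduction. By the characterization in Lemma \ref{$2$-to-$1$}, it suffices to prove that for every $a\in\gf_{q^2}$ the equation $f(x+a)+f(a)=0$ has exactly two solutions in $\gf_{q^2}$. Writing $(x+a)^{2q+4}=\bigl((x+a)^{q+2}\bigr)^2$ and using $\tr_{q^2/q}(y^2)=\tr_{q^2/q}(y)^2$, I would first expand this equation into
\begin{equation}\label{planeq}
cx=\tr_{q^2/q}\bigl(x^{2q+4}+a^4x^{2q}+a^{2q}x^4\bigr).
\end{equation}
The right-hand side always lies in $\gf_q$, so any solution satisfies $cx\in\gf_q$; that is, $x$ lies on the $\gf_q$-line $L=c^{-1}\gf_q$. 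The hypothesis $c^{q-1}=\omega\in\gf_{2^2}\setminus\gf_2$ is exactly what makes $L$ tractable: one checks that $L=\{x\in\gf_{q^2}:x^q=\omega^2x\}$, and the arithmetic of $\omega$ (namely $\omega^3=1$, $\omega+\omega^2=1$, and $\omega^q=\omega^2$, the last because $k$ odd forces $q\equiv 2\pmod 3$) will power every simplification below.

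Next I would evaluate each trace term on $L$. From $x^q=\omega^2x$ one obtains $x^{2q}=\omega x^2$, $x^{4q}=\omega^2x^4$, and $\tr_{q^2/q}(x^{q+2})=(\omega^2+\omega)x^3=x^3$, whence $\tr_{q^2/q}(x^{2q+4})=x^6$; likewise $\tr_{q^2/q}(a^4x^{2q})=(\omega a^4+a^{4q})x^2$ and $\tr_{q^2/q}(a^{2q}x^4)=(a^{2q}+\omega^2a^2)x^4$. Substituting $x=v/c$ with $v=cx\in\gf_q$ into equation (\ref{planeq}) and clearing denominators by $c^6$ turns it into
\begin{equation}\label{planeq2}
v^6+Bv^4+Av^2+Ev=0,
\end{equation}
where $B=c^2(a^{2q}+\omega^2a^2)$, $A=c^4(\omega a^4+a^{4q})$ and $E=c^6$. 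A direct verification, raising each coefficient to the $q$-th power and using $c^q=\omega c$ with $\omega^3=1$, shows $A,B,E\in\gf_q$ and, crucially, that $A=B^2$.

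This identity $A=B^2$ is the crux, since it exhibits the nonzero factor of (\ref{planeq2}) as a Dickson polynomial: over characteristic $2$ one has $D_5(v,B)=v^5+Bv^3+B^2v$, so (\ref{planeq2}) factors as $v\bigl(D_5(v,B)+E\bigr)=0$. Thus either $v=0$ (the solution $x=0$) or $D_5(v,B)=E$. Since $q=2^k$ with $k$ odd gives $q^2-1=4^k-1\equiv 3\pmod 5$, we have $\gcd(5,q^2-1)=1$, so Lemma \ref{gcd} guarantees that $v\mapsto D_5(v,B)$ permutes $\gf_q$; hence $D_5(v,B)=E$ has a unique root $v_0\in\gf_q$, and $v_0\neq0$ because $E=c^6\neq0=D_5(0,B)$. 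Therefore (\ref{planeq2}) has exactly the two solutions $v\in\{0,v_0\}$, equivalently $x\in\{0,v_0/c\}$, and $f$ is $2$-to-$1$.

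The step I expect to be the main obstacle is not the final counting but the two algebraic reductions that feed it: the collapse $\tr_{q^2/q}(x^{2q+4})=x^6$ on $L$, and the coefficient identity $A=B^2$. Both hinge delicately on $\omega$ lying in $\gf_{2^2}\setminus\gf_2$ and on $k$ being odd (so that $\omega^q=\omega^2$ rather than $\omega$); were either to fail, the reduced quintic would cease to be a Dickson polynomial and the permutation argument would break down. Accordingly, I would isolate and prove these two facts carefully first, after which the conclusion via Lemma \ref{gcd} is immediate.
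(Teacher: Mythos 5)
Your proposal is correct and follows the same strategy as the paper's own proof: expand $f(x+a)+f(a)=0$, observe that every solution satisfies $cx\in\gf_q$ (equivalently $x^q=\omega^2x$), reduce to a sextic, factor out $x$, recognize a degree-$5$ Dickson polynomial, and conclude from $\gcd(5,q^2-1)=1$ (valid since $k$ is odd) via Lemma \ref{gcd}. The one point where you genuinely diverge is the normalization $v=cx$, and it is an improvement rather than a detour. The paper works directly with the variable $x$ and asserts that $D_5\bigl(x,\tfrac{a^2}{\omega}+b\bigr)$ permutes $\gf_q$; as written this is not a clean application of Lemma \ref{gcd}, because the relevant unknown ranges over the line $c^{-1}\gf_q$ rather than over $\gf_q$, and the Dickson parameter $\tfrac{a^2}{\omega}+b=\omega^2a^2+a^{2q}$ lies in $\gf_q$ only in the special case $a^q=\omega a$ (or $a=0$). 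Your substitution repairs exactly this: it turns the quintic into $D_5(v,B)=c^6$ with $B=c^2\bigl(a^{2q}+\omega^2a^2\bigr)$ and $c^6$ both verifiably in $\gf_q$ (using $c^q=\omega c$ and $\omega^q=\omega^2$), and with the identity $A=B^2$ guaranteeing the Dickson form, so the permutation argument over $\gf_q$ applies verbatim and yields the unique nonzero solution $x=v_0/c$ alongside $x=0$. In short: same route as the paper, but your version supplies the change of variable that the paper's final step implicitly needs to be rigorous.
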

\begin{proof}
	According to Lemma \ref{$2$-to-$1$}, it suffices to show that for any $a\in \gf_{q^2}$, the equation $$f(x+a)+f(a)=0,$$ i.e.,
	\begin{eqnarray}\label{5}
		cx+\tr_{q^2/q}((x+a)^{2q+4}+a^{2q+4})=0
	\end{eqnarray}
	has two solutions in $\gf_{q^2}$.
	
	Let $y=x^{2q}$, $b=a^{2q}$. Plugging the above two equations into Eq. (\ref{5}), we get \begin{eqnarray}\label{6}
		cx+x^4y+x^4b+a^4y+x^2y^2+x^2b^2+a^2y^2=0.
	\end{eqnarray}
    
	Computing the sum of Eq. (\ref{6}) and its $q$th power, then $y=c^{2(1-q)}x^2$, implying $y=\omega^{-2}x^2$.
	Then from Eq. (\ref{6}), 
    \begin{eqnarray*}
        \left(\frac{\omega^2+1}{\omega^4}\right)x^6+\left(\frac{a^2}{\omega^4}+b\right)x^4+\left(\frac{a^4}{\omega^2}+b^2\right)x^2+cx=0.
    \end{eqnarray*}
    
	According to the properties of cubic primitive roots $\omega^2+\omega+1=0$, the above equation becomes $$x^6+(\frac{a^2}{\omega}+b)x^4+(\frac{a^4}{\omega^2}+b^2)x^2+cx=0.$$
    
    Obviously $x=0$, it only needed to prove for any $a\in \gf_{q^2}$, the equation 
	\begin{equation}
		x^5+\left(\frac{a^2}{\omega^4}+b\right)x^3+\left(\frac{a^4}{\omega^2}+b^2\right)x=c \label{7}
	\end{equation}
	has one solution in $\gf_{q^2}^*$.
    
	We notice that $x^5+\left(\frac{a^2}{\omega}+b\right)x^3+\left(\frac{a^4}{\omega^2}+b^2\right)x=D_5\left(x,\frac{a^2}{\omega}+b\right)$ is a Dickson polynomial of degree 5. Recalling that $k$ is odd, $\mathrm{gcd}(5,q^2-1)=1$, and from Lemma \ref{gcd}, $D_5\left(x,\frac{a^2}{\omega^4}+b\right)$ permutes $\gf_{q}$. That is, Eq. (\ref{7}) only has one solution in $\gf_{q}$, and the solution is not 0 since $c\ne 0$. Therefore, the Eq. (\ref{5}) has two solutions in $\gf_{q^2}$ and $f(x)$ is $2$-to-$1$ over $\gf_{q^2}$.
\end{proof}

\begin{Th}
    \label{Conj1}
	Let $k \equiv 1 \pmod 4$ and $q=2^k$, $c \in \gf_{q^2}^*$ satisfies $c^{q-1}=\omega$ with $\omega \in \gf_{2^2} \backslash \gf_2$. Then $f(x)=cx+\tr_{q^2/q}\left(x^{2^\frac{k+1}{2}+2}\right)$ is $2$-to-$1$ over $\gf_{q^2}$. 
\end{Th}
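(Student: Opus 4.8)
The plan is to follow the Dickson-polynomial strategy of Theorem \ref{l=2,2q+4}, but the presence of the Frobenius exponent $s:=2^{(k+1)/2}$ (so the monomial exponent is $d=s+2$) forces an extra elimination step carried out by Dobbertin's multivariate method \cite{dobbertin2002uniformly}. By Lemma \ref{$2$-to-$1$}(3) it suffices to show that for every $a\in\gf_{q^2}$ the equation $f(x+a)+f(a)=0$ has exactly two roots in $\gf_{q^2}$. Expanding $(x+a)^{s+2}=(x^s+a^s)(x^2+a^2)$ gives
$$cx+\tr_{q^2/q}\left(x^{s+2}+a^2x^s+a^sx^2\right)=0. \quad (\ast)$$
Since the trace term lies in $\gf_q$, every root satisfies $cx\in\gf_q$; together with $c^{q-1}=\omega$ and $\omega^3=1$ this yields $x^q=\omega^{-1}x$, so all roots lie in the one-dimensional $\gf_q$-subspace $S=\{x:x^q=\omega^{-1}x\}$, and $x=0$ is always a root.

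Next I would make the hypothesis $k\equiv1\pmod4$ do its work. It forces $(k+1)/2$ to be odd, whence $s\equiv2\pmod3$, giving $\omega^{-s}=\omega^{-2}=\omega$. Writing $\tr_{q^2/q}(z)=z+z^q$ and substituting $x^q=\omega^{-1}x$ term by term (so that $(x^{s+2})^q=\omega^2x^{s+2}$, etc.), equation $(\ast)$ reduces on $S$ to
$$\omega x^{s+2}+\left(a^2+\omega a^{2q}\right)x^s+\left(a^s+\omega a^{sq}\right)x^2+cx=0. \quad (\star)$$

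I would then pass to the subfield by setting $t=cx\in\gf_q$, recasting the problem as counting $t\in\gf_q$. The two high exponents now collapse, because for $t\in\gf_q$ one has $t^{s^2}=t^{2q}=t^2$; introducing the auxiliary variable $y=t^s$ therefore yields the clean companion relation $y^s=t^2$. Using $t^{s+2}=yt^2$ and $t^s=y$, the image of $(\star)$ under $x=t/c$ (cleared by $c^{s+2}$) becomes \emph{linear} in $y$,
$$y\left[\omega t^2+(a^2+\omega a^{2q})c^2\right]+(a^s+\omega a^{sq})c^st^2+c^{s+2}t=0,$$
so $y$ is a rational function of $t$; adjoining $y^s=t^2$ produces a bivariate polynomial system in $(t,y)$ over $\gf_q$ involving only the finitely many conjugates $a,a^q,a^s,a^{sq}$ of $a$.

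The main obstacle is the elimination of $y$ from this system, which is exactly where Dobbertin's multivariate method enters: repeatedly composing the relation $y=N(t)/D(t)$ with its image under the Frobenius $\sigma:z\mapsto z^s$ (noting $\sigma^2=\mathrm{Frob}_2$ on $\gf_q$) and reducing the conjugates of $a$, I expect the system to collapse to a single univariate equation that, after normalizing the leading coefficient via $\omega^3=1$ and dividing out the root $t=0$, takes Dickson form $D_5(t,A)=B$ for explicit $A,B\in\gf_q$ with $B\neq0$ — the analogue of the reduced equation in Theorem \ref{l=2,2q+4}. Since $k$ is odd, $\gcd(5,q^2-1)=1$, so by Lemma \ref{gcd} the polynomial $D_5(\cdot,A)$ permutes $\gf_q$; hence there is a unique $t_0\in\gf_q$, and $t_0\neq0$ because $D_5(0,A)=0\neq B$. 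Lifting back, $(\ast)$ has exactly the two roots $x=0$ and $x=t_0/c\in S$, and $f$ is $2$-to-$1$. The delicate part is verifying that the Dobbertin elimination yields precisely the degree-$5$ Dickson polynomial with coefficients in $\gf_q$ and introduces no spurious roots outside $S$; this is the step in which $k\equiv1\pmod4$ (which pins down $\omega^{s}=\omega^{2}$ and $\sigma^2=\mathrm{Frob}_2$) is indispensable.
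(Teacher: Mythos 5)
Your reduction is correct as far as it goes: equation $(\star)$ is precisely the paper's Eq.~\eqref{con1_eq1} after substituting $y=x^s$ with $s=2^{\frac{k+1}{2}}$, and your observations that $k\equiv 1\pmod 4$ forces $\omega^s=\omega^2$ and that $t\mapsto t^{s^2}=t^{2q}=t^2$ on $\gf_q$ are both right, as is the bijection $x\mapsto t=cx$ from $S$ to $\gf_q$. The fatal problem is the step you defer to the end: the Dobbertin elimination does \emph{not} collapse to a Dickson equation $D_5(t,A)=B$. Set $\alpha=(a^2+\omega a^{2q})c^2$, $\beta=(a^s+\omega a^{sq})c^s$, $\gamma=c^{s+2}$, so your system reads $y(\omega t^2+\alpha)=\beta t^2+\gamma t$ and $y^s=t^2$. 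Carrying out exactly the composition you describe (raise the first relation to the $s$-th power, use $t^s=y$, $t^{2s}=y^2$, $y^s=t^2$, then substitute $y=t(\beta t+\gamma)/(\omega t^2+\alpha)$ and clear denominators), the coefficient of $t^5$ comes out as $\omega^2\left(\beta^2+\alpha^s\right)$, and one checks that $\beta^2=\alpha^s$ \emph{identically} (both equal $(u^2+\omega^2u^{2q})c^{2s}$ with $u=a^s$), so there is no quintic at all. What survives is a cubic whose $t^2$-coefficient equals $\gamma^s\gamma\omega=c^{3s+4}\neq 0$ for every $a$, and whose constant term is $\omega^2c^{3s+4}\alpha$; a cubic with a nonvanishing $t^2$-term can never be rewritten as $D_5(t,A)+B$, whose only monomials are $t^5$, $t^3$, $t$ and a constant. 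Hence Lemma \ref{gcd} is inapplicable and your entire concluding uniqueness-and-existence argument has nothing to act on. The analogy with Theorem \ref{l=2,2q+4} is exactly what misleads you: there the exponent $2q+4$ interacts with the full Frobenius and genuinely produces $D_5$, whereas the half-Frobenius exponent $2^{\frac{k+1}{2}}+2$ behaves structurally differently.

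For comparison, the paper eliminates using the pair \eqref{con1_eq1}, \eqref{con1_eq2}, where \eqref{con1_eq2} is the $s$-th power of \eqref{con1_eq1} in which $y^s$ has been replaced by $\omega x^2$ \emph{before} substituting the rational expression for $y$; this collapses to a \emph{linear} equation $Ax=c^{s+1}\omega$ (Eq.~\eqref{con1_eq3}), and the real content of the theorem is then twofold: (a) the two degenerate cases $a^{2q}\omega+a^2=0$ and $c=(u^q\omega+u)(a^q+a\omega)$ — which your proposal silently assumes away when inverting $\omega t^2+\alpha$ and writing $y$ as a rational function of $t$ — and (b) the proof that $A\neq 0$ in the generic case, a delicate argument that conjugates Eq.~\eqref{con1_eq4}, reduces to Eq.~\eqref{con1_eq6}, and shows that $\lambda=(a+\omega a^q)/\Delta$ with $\Delta=a^q+\omega a$ must lie in $\gf_2$, yielding a contradiction. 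Your outline contains no substitute for either of these; the place where you say "the delicate part is verifying that the Dobbertin elimination yields precisely the degree-$5$ Dickson polynomial" is precisely where the proposed route fails, because the object it is supposed to produce does not exist.
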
 
\begin{proof}
    According to Lemma \ref{$2$-to-$1$}, it suffices to show that for any $a\in \gf_{q^2}$, the equation $$f(x+a)+f(a)=0,$$ i.e.,
	\begin{eqnarray}\label{3.5}
		cx+\tr_{q^2/q}\left(x^{2^{\frac{k+1}{2}}+2}+a^2x^{2^{\frac{k+1}{2}}}+a^{2^{\frac{k+1}{2}}}x^2\right)=0
	\end{eqnarray}
    has two solutions, i.e., one nonzero solution in $\gf_{q^2}$. In the following, we mainly consider the unique nonzero solution. 
    
    Let $$
    \begin{cases}
        y=x^{2^{\frac{k+1}{2}}}, z=y^{2^{\frac{k+1}{2}}}=x^{2q}, t=z^{2^{\frac{k+1}{2}}}=y^{2q}, t^{2^{\frac{k+1}{2}}}=x^4.\\
        u=a^{2^{\frac{k+1}{2}}}, v=u^{2^{\frac{k+1}{2}}}=a^{2q}, s=v^{2^{\frac{k+1}{2}}}=u^{2q}, s^{2^{\frac{k+1}{2}}}=a^4.
    \end{cases}
    $$
    Then we get $x^{2^{\frac{k+1}{2}}+2}=x^2y$, $x^{2^{\frac{k+1}{2}}}=y$, $a^{2^{\frac{k+1}{2}}}=u$, $(x^2y)^q=zt^{\frac{1}{2}}$, $(a^2y)^q=vt^{\frac{1}{2}}$, $(x^2u)^q=zs^{\frac{1}{2}}$. 
    
    From Eq. (\ref{3.5}), 
    \begin{eqnarray}\label{x1,y1}
        cx+x^2y+zt^{\frac{1}{2}}+a^2y+vt^{\frac{1}{2}}+ux^2+zs^{\frac{1}{2}}=0.
    \end{eqnarray}
    Raising Eq. (\ref{x1,y1}) to its $2^{\frac{k+1}{2}}$-th power, it leads to
    \begin{eqnarray}\label{x2,y2}
        c^{2^{\frac{k+1}{2}}}y+y^2z+tx^2+u^2z+sx^2+vy^2+a^2t=0.
    \end{eqnarray}
    
    According to Eq. (\ref{3.5}), we have $$cx=\tr_{q^2/q}\left(x^{2^{\frac{k+1}{2}}+2}+a^2x^{2^{\frac{k+1}{2}}}+a^{2^{\frac{k+1}{2}}}x^2\right) \in \gf_q,$$ and then $x^q=c^{1-q}x=\omega^2 x$. 
    
    Since $k \equiv 1 \pmod 4$, $z=x^{2q}=\omega x^2$, $t=\omega^{2^{\frac{k+1}{2}}}y^2=\omega^2y^2$. Plugging them into Eq. (\ref{x1,y1}) and Eq. (\ref{x2,y2}), we get 
   \begin{equation}
       \label{con1_eq1}
       cx+\omega x^2y+(a^{2q}\omega+a^2)y+(u^q\omega+u)x^2=0
   \end{equation}
 and 
 \begin{equation}
     \label{con1_eq2}
     c^{2^{\frac{k+1}{2}}}y+x^2y^2+(u^2\omega+u^{2q})x^2+(a^{2q}+a^2\omega^2)y^2=0,
 \end{equation}
 respectively. 

   From Eq. \eqref{con1_eq1}, we have $(\omega x^2 + a^{2q}\omega+a^2) y = cx + (u^q\omega+u)x^2$ and thus we need to discuss $\omega x^2 + a^{2q}\omega+a^2=0,$ i.e., $x=a^q+a\omega$ or not.  If $x=a^q+a\omega$ is a nonzero solution, then $a^q+a\omega\neq0$ and $c+(u^q\omega + u)(a^q+a\omega)=0$ by Eq. \eqref{con1_eq1}. Thus, in the following, we divide the proof into three cases: (I) $a^{2q}\omega+a^2=0$; (II) $a^{2q}\omega+a^2\neq0$ and $c+(u^q\omega + u)(a^q+a\omega)=0$; (III) $a^{2q}\omega+a^2\neq0$ and $c+(u^q\omega + u)(a^q+a\omega)\neq0$.

(I)  When $a^{2q}\omega+a^2=0$, after raising it to its $2^{\frac{k+1}{2}}$-th power, we get $u^q+u\omega^2=0$. Then by Eq. \eqref{con1_eq2}, we have $c^{2^{\frac{k+1}{2}}}y+x^2y^2=0$, i.e., $x^{2^{\frac{k+1}{2}}+2}=c^{2^{\frac{k+1}{2}}}$. Note that $$\gcd\left( 2^{\frac{k+1}{2}}+2, 2^{2k}-1 \right)=1$$ since 
  $$\gcd\left(2^{2l}+1,2^{4l+1}-1\right) = \gcd\left( 2^{2l}+1, 2^{2l+1}+1 \right) = \gcd\left( 2^{2l}+1,1 \right) = 1$$
and 
$$\gcd\left(2^{2l}+1, 2^{4l+1}+1\right) = \gcd\left( 2^{2l}+1, 2^{2l+1}-1 \right) = \gcd\left( 2^{2l}+1,3 \right) = 1,$$
where $l=\frac{k-1}{4}$. Thus there exist a unique positive integer $d\le 2^{2k}-1$  such that  $d\cdot (2^{\frac{k+1}{2}}+2) \equiv 1\pmod { 2^{2k}-1}$ and thus from $x^{2^{\frac{k+1}{2}}+2}=c^{2^{\frac{k+1}{2}}}$, we get a unique nonzero solution $x_0=\left(c^{2^{\frac{k+1}{2}}}\right)^d$. Moreover, it is easy to check $x_0=\left(c^{2^{\frac{k+1}{2}}}\right)^d$ is exactly the unique nonzero solution of Eq. \eqref{3.5} in this case. 

(II) When $a^{2q}\omega+a^2\neq0$ and $c+(u^q\omega + u)(a^q+a\omega)=0$, then it is trivial that $x=a^q+a\omega$ is a nonzero solution of Eq. \eqref{con1_eq1}. By direct computations, we can find that $x=a^q+a\omega$ is exactly a nonzero solution of Eq. \eqref{3.5} in this case. If $x\neq a^q+a\omega$, then by Eq. \eqref{con1_eq1}, we have 
$$y=\frac{cx+(u^q\omega+u)x^2}{\omega x^2+a^{2q}\omega+a^2}.$$
Plugging it into Eq. \eqref{con1_eq2} and simplifying, we get 
\begin{equation}
    \label{con1_eq3}
    Ax = c^{2^{\frac{k+1}{2}}+1} \omega,
\end{equation}
where 
$$A = c^{2^{\frac{k+1}{2}}} (u^q+u\omega^2) + c^2\omega + (u^2\omega + u^{2q}) (a^{2q}+a^2\omega^2). $$
Plugging $c = (u^q\omega + u)(a^q+a\omega)$ into the expression of $A$, we have $A = c^{2^{\frac{k+1}{2}}} (u^q+u\omega^2)$ and then Eq. \eqref{con1_eq3} becomes $(u^q+u\omega^2) x = c\omega$, implying $x = a^q+a\omega$. Thus, Eq. \eqref{3.5} has exactly one nonzero solution. 

(III) When  $a^{2q}\omega+a^2\neq0$ and $c+(u^q\omega + u)(a^q+a\omega)\neq0$, then $\omega x^2+a^{2q}\omega+a^2\neq0$, and it suffices to consider Eq. \eqref{con1_eq3}. In the following, we show that $A\neq0$ in this case and thus $x=c^{2^{\frac{k+1}{2}}+1} \omega A^{-1}$ is the unique nonzero solution of Eq. \eqref{con1_eq3}. If $A=0$, i.e.,
\begin{equation}
    \label{con1_eq4}
    c^{2^{\frac{k+1}{2}}} (u^q+u\omega^2) + c^2\omega + (u^2\omega + u^{2q}) (a^{2q}+a^2\omega^2) = 0,
\end{equation}
then raising Eq. \eqref{con1_eq4} to its $2^{\frac{k+1}{2}}$-th power and simplifying, we get 
$$  c^{2^{\frac{k+1}{2}}} = c(a+a^q\omega^2) + (a^{2q}+a^2\omega^2) (u^q+u\omega).$$
Plugging it into Eq. \eqref{con1_eq4}, we have 
\begin{equation}
    \label{con1_eq5}
    c^2+ \omega (a^q+\omega a) (u^q + \omega^2 u) c + \omega^2 (u^q+\omega^2 u) (u^qa^2+a^{2q}u) = 0.
\end{equation}
Let $\Delta=a^q+\omega a.$ Then $u^q+\omega^2u = \Delta^{2^{\frac{k+1}{2}}}$ and $$u^qa^2 +a^{2q}u = u(a^{2q}+\omega^2a^2) + a^2 (u^q+\omega^2u) = u\Delta + a^2 \Delta^{2^{\frac{k+1}{2}}}.$$ Moreover, Eq. \eqref{con1_eq5} becomes 
\begin{equation}
    \label{con1_eq6}
    c^2+\omega \Delta^{2^{\frac{k+1}{2}}+1}c+\omega^2\Delta^{2^{\frac{k+1}{2}}}\left( u\Delta^2 + a^2 \Delta^{2^{\frac{k+1}{2}}} \right) = 0.
\end{equation}
Since 
$$  \frac{ \omega^2\Delta^{2^{\frac{k+1}{2}}}\left( u\Delta^2 + a^2 \Delta^{2^{\frac{k+1}{2}}} \right)  }{ \omega^2 \Delta^{2^{\frac{k+1}{2}+1}+2} } = \frac{a^2}{\Delta^2} + \frac{a^{2^{\frac{k+1}{2}}}}{\Delta^{2^{\frac{k+1}{2}}}}, $$
by Eq. \eqref{con1_eq6}, we get 
\begin{equation}
\label{con1_eq7}
    c = \left( \frac{a^2}{\Delta^2} + \frac{a^4}{\Delta^4} + \cdots + \frac{a^{2^{\frac{k-1}{2}}}}{\Delta^{2^{\frac{k-1}{2}}}}    \right) \omega \Delta^{2^{\frac{k+1}{2}} +1}
\end{equation}
or 
\begin{equation}
    \label{con1_eq8}
    c = \left( \frac{a^2}{\Delta^2} + \frac{a^4}{\Delta^4} + \cdots + \frac{a^{2^{\frac{k-1}{2}}}}{\Delta^{2^{\frac{k-1}{2}}}}    +1 \right) \omega \Delta^{2^{\frac{k+1}{2}} +1}.
\end{equation}
  Raising Eq. \eqref{con1_eq7} into its $q$-th power and simplifying it by $\Delta^q = \omega^2\Delta$ and $\omega^{2^\frac{k+1}{2}}=\omega^2$, we obtain
$$ c^q = \left(  \frac{\omega^2a^{2q}}{\Delta^2} + \frac{\omega a^{4q}}{\Delta^4} + \cdots + \frac{\omega a^{2^{\frac{k-1}{2}}q}}{\Delta^{2^{\frac{k-1}{2}}}} \right)\omega^2\Delta^{2^{\frac{k+1}{2}}+1}.$$
Moreover, $c^q=c\omega$ implying 
$$\frac{\omega^2a^{2q}}{\Delta^2} + \frac{\omega a^{4q}}{\Delta^4} + \cdots + \frac{\omega a^{2^{\frac{k-1}{2}}q}}{\Delta^{2^{\frac{k-1}{2}}}} + \frac{a^2}{\Delta^2} + \frac{a^4}{\Delta^4} + \cdots + \frac{a^{2^{\frac{k-1}{2}}}}{\Delta^{2^{\frac{k-1}{2}}}} =0.$$

Let $\lambda = \frac{a+\omega a^{q}}{\Delta}$. Then $$\lambda^q=\frac{a^{q}+\omega^2 a}{\omega^2 \Delta} = \lambda.$$
Moreover, we have $$\lambda^2 + \lambda^4 + \cdots + \lambda^{2^{\frac{k-1}{2}}} = 0,$$
which implies 
$$ \left(\lambda^2 + \lambda^4 + \cdots + \lambda^{2^{\frac{k-1}{2}}}\right) + \left(\lambda^2 + \lambda^4 + \cdots + \lambda^{2^{\frac{k-1}{2}}}\right)^{\frac{1}{2}} = \lambda + \lambda^{2^{\frac{k-3}{2}}} = 0.$$
Since $\gcd\left( 2^{\frac{k-3}{2}}-1, 2^{k}-1   \right)=1$, we can get $\lambda\in\gf_2$. Then $\lambda^2+\lambda^4+\cdots + \lambda^{2^{\frac{k-1}{2}}}=0$ and thus $c=0$ or $c=\omega \Delta^{2^{\frac{k+1}{2}}+1}$ by Eq. \eqref{con1_eq7} and Eq. \eqref{con1_eq8}. Contradiction! (Note that in this case, 
$c \neq (u^q\omega + u)(a^q+a\omega) = \omega \Delta^{2^{\frac{k+1}{2}}+1}$.)

    In conclusion, for any $a \in \gf_{q^2}$, Eq. (\ref{3.5}) has exactly one nonzero solution in $\gf_{q^2}$, and thus $f(x)$ is $2$-to-$1$.
\end{proof}

In the following, we give this section's final infinite class of $2$-to-$1$ mappings. Before that, we provide a lemma. 

\begin{Lemma}
\label{lemma_conj2}
    Let $k$ be an odd positive integer and $q=2^k$, $t\in\gf_q\backslash\gf_2$. Then the equation $x^3+(1+t)x+t+t^2=0$ has three solutions in $\gf_{q^2}$, i.e.,
    $$x_1 = \alpha+\alpha^2, x_2 = \alpha\omega + \alpha^2\omega^2, x_3 = \alpha\omega^2 + \alpha^2\omega,$$
    where $\alpha\in\gf_q$ with $\alpha^3=1+t$ and $\omega\in\gf_4\backslash\gf_2$. 
\end{Lemma}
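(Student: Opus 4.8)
The plan is to verify directly that the three displayed elements are roots of the cubic and that they are pairwise distinct; since a cubic over a field has at most three roots, this establishes that they are exactly the three solutions in $\gf_{q^2}$. First I would record a few preliminaries. Because $k$ is odd, $2^k\equiv 2\pmod 3$, so $\gcd(3,q-1)=1$ and cubing is a bijection on $\gf_q$; hence the element $\alpha\in\gf_q$ with $\alpha^3=1+t$ exists and is unique. Since $t\notin\gf_2$ we have $1+t\neq 0$ and $1+t\neq 1$, so $\alpha\neq 0$ and $\alpha\neq 1$. Moreover $\omega\in\gf_4\subseteq\gf_{q^2}$ and $\alpha\in\gf_q\subseteq\gf_{q^2}$, so all three candidates lie in $\gf_{q^2}$.

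The key simplification is to write each candidate as $x_j=\beta_j+\beta_j^2$ with $\beta_j=\alpha\omega^j$ for $j=0,1,2$; indeed $\beta_j^2=\alpha^2\omega^{2j}$ matches the displayed second terms. Using $\omega^3=1$ we get $\beta_j^3=\alpha^3\omega^{3j}=1+t$ for every $j$. A short characteristic-two expansion then gives $x_j^3=\beta_j^3(1+\beta_j)^3=\beta_j^3(1+\beta_j+\beta_j^2+\beta_j^3)=(1+t)+(1+t)x_j+(1+t)^2$, so that $x_j^3+(1+t)x_j+(t+t^2)=(1+t)+(1+t)^2+(t+t^2)$, which collapses to $0$ in characteristic two. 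Thus each $x_j$ is a root, uniformly in $j$.

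For distinctness I would use that the map $z\mapsto z+z^2$ has fibres $\{z,z+1\}$, so $x_i=x_j$ is equivalent to $\beta_i+\beta_j\in\{0,1\}$. The case $\beta_i+\beta_j=0$ forces $\omega^i=\omega^j$, i.e. $i=j$. The case $\beta_i+\beta_j=1$ reduces, after dividing by $\alpha$, to $\alpha\in\{1,\omega,\omega^2\}$: the pair $\{1,2\}$ gives $\beta_1+\beta_2=\alpha(\omega+\omega^2)=\alpha=1$, while $\{0,1\}$ and $\{0,2\}$ give $\alpha\omega^2=1$ and $\alpha\omega=1$, i.e. $\alpha=\omega$ and $\alpha=\omega^2$. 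All three are impossible: $\alpha\neq 1$ by the above, and $\alpha\in\gf_q$ cannot equal $\omega$ or $\omega^2$ because $\gf_4\cap\gf_q=\gf_2$ when $k$ is odd. Hence the three roots are distinct, and the cubic has exactly these three solutions in $\gf_{q^2}$.

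The only real obstacle is this distinctness step, and it is precisely here that both hypotheses are consumed: $t\notin\gf_2$ rules out $\alpha=1$, and $k$ odd (equivalently $\gf_4\not\subseteq\gf_q$) rules out $\alpha\in\{\omega,\omega^2\}$. As an independent cross-check one can instead route through Lemma \ref{cubics soulution}: the auxiliary equation $y^2+(t+t^2)y+(1+t)^3=0$ has roots $y_1=(1+t)^2$ and $y_2=1+t$, and taking the cube roots $\epsilon=\alpha^2\omega^j$ of $y_1$ yields $r=\epsilon+(1+t)/\epsilon=\alpha^2\omega^j+\alpha\omega^{-j}$, which recovers $x_1,x_2,x_3$ in permuted order; this reproduces the list directly within the paper's framework without appealing to the at-most-three-roots count.
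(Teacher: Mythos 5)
Your proof is correct, but it takes a genuinely different route from the paper's. The paper proves the lemma constructively via Lemma \ref{cubics soulution}: it rewrites the auxiliary quadratic $y^2+(t+t^2)y+(1+t)^3=0$ to exhibit the root $y_1=1+t$, takes the unique cube root $\alpha\in\gf_q$ of $y_1$ (using $k$ odd), notes that the three cube roots in $\gf_{q^2}$ are $\alpha,\alpha\omega,\alpha\omega^2$, and then reads off the roots as $r=\epsilon+\frac{1+t}{\epsilon}$ — exactly your ``cross-check,'' except the paper works with the root $1+t$ of the quadratic where you use $(1+t)^2$, so you recover the same list in permuted order. Your primary argument instead verifies by direct substitution that each $x_j=\beta_j+\beta_j^2$ with $\beta_j=\alpha\omega^j$ satisfies the cubic, via the identity $x_j^3=\beta_j^3\left(1+\beta_j+\beta_j^2+\beta_j^3\right)$, and then proves the three roots are pairwise distinct using the fibre structure of $z\mapsto z+z^2$. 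This is more elementary (Lemma \ref{cubics soulution} is not needed for the main line) and, importantly, it supplies something the paper's proof leaves implicit: distinctness. Distinct cube roots $\epsilon$ need not a priori give distinct values of $\epsilon+\frac{a}{\epsilon}$, since $\epsilon$ and $a/\epsilon$ yield the same root, so the paper's three displayed solutions being genuinely three requires an argument; yours is precisely the step where both hypotheses ($t\notin\gf_2$ ruling out $\alpha=1$, and $k$ odd giving $\gf_4\cap\gf_q=\gf_2$ ruling out $\alpha\in\{\omega,\omega^2\}$) are consumed. What the paper's approach buys in exchange is that it is constructive — it shows how the roots are found rather than verifying a stated answer — and it stays inside the toolkit already assembled in the preliminaries.
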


\begin{proof}
    By Lemma \ref{cubics soulution}, we first consider the quadratic equation $y^2+(t+t^2)y+(1+t)^3=0$, i.e.,
    \begin{equation}
    \label{lemma_eq1}
        \left(\frac{y}{t+t^2}\right)^2 + \frac{y}{t+t^2} + \frac{(1+t)^3}{(t+t^2)^2} = 0.
    \end{equation}
    Note that 
    $$\frac{(1+t)^3}{(t+t^2)^2} = \frac{1+t}{t^2} = \frac{1}{t} + \frac{1}{t^2}.$$
    Thus Eq. \eqref{lemma_eq1} has a solution $y_1 = \frac{t+t^2}{t} = 1+t.$ Moreover, since $k$ is odd, the equation $x^3=y_1$ has a exactly one solution in $\gf_q$, denoted by $\alpha$. That is to say, $\alpha^3 = 1+t$. Furthermore, it is trivial that $x^3=y_1$ has three solutions in $\gf_{q^2}$, i.e, $\alpha, \alpha\omega,\alpha\omega^2$. Finally, by Lemma \ref{cubics soulution}, we can find the three solutions in $\gf_{q^2}$ of the equation $x^3+(1+t)x+t+t^2=0$ as follows:
      $$x_1 = \alpha+\alpha^2, x_2 = \alpha\omega + \alpha^2\omega^2, x_3 = \alpha\omega^2 + \alpha^2\omega.$$
\end{proof}

\begin{Th}
    \label{conj2}
    Let $k$ be an odd positive integer and $q=2^k$, $c \in \gf_{q^2}^*$ satisfies $c^{q-1}=\omega$ with $\omega \in \gf_{2^2} \backslash \gf_2$. Then $f(x)=cx+\tr_{q^2/q}\left(x^{2^{2k-2}+2^{k-1}}\right)$ is $2$-to-$1$ over $\gf_{q^2}$.
\end{Th}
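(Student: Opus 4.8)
The plan is to verify the third characterization in Lemma~\ref{$2$-to-$1$}: for every $a\in\gf_{q^2}$ I will show that $f(x+a)+f(a)=0$ has exactly two roots in $\gf_{q^2}$. Writing $d=2^{2k-2}+2^{k-1}$ as a sum of two distinct powers of two and expanding in characteristic two, $(x+a)^d+a^d=x^d+a^{2^{k-1}}x^{2^{2k-2}}+a^{2^{2k-2}}x^{2^{k-1}}$, so the equation becomes
\begin{equation*}
cx+\tr_{q^2/q}\!\left(x^d+a^{2^{k-1}}x^{2^{2k-2}}+a^{2^{2k-2}}x^{2^{k-1}}\right)=0.
\end{equation*}
Since the trace lands in $\gf_q$, every root satisfies $cx\in\gf_q$; combined with $c^{q-1}=\omega$ this forces $x^q=\omega^2x$, i.e. $x$ lies on the $\gf_q$-line $c^{-1}\gf_q$. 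Because $\gcd(4,q-1)=1$ the map $r\mapsto c^{-1}r^4$ is a bijection from $\gf_q$ onto this line, so I will substitute $x=c^{-1}r^4$ with $r\in\gf_q$ and merely count the admissible $r$.

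Under this substitution the monomials collapse: using $r\in\gf_q$ one checks $x^{2^{k-1}}=c^{-2^{k-1}}r^2$, $x^{2^{2k-2}}=c^{-2^{2k-2}}r$, $x^d=c^{-d}r^3$, and $cx=r^4$. Pulling the $\gf_q$-powers of $r$ out of the trace by $\gf_q$-linearity (Lemma~\ref{trace}), the equation reduces to the quartic
\begin{equation*}
r^4+c_3r^3+c_2r^2+c_1r=0,\qquad r\in\gf_q,
\end{equation*}
where $c_3=\tr_{q^2/q}(c^{-d})$, $c_2=\tr_{q^2/q}(a^{2^{2k-2}}c^{-2^{k-1}})$ and $c_1=\tr_{q^2/q}(a^{2^{k-1}}c^{-2^{2k-2}})$ all lie in $\gf_q$. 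The root $r=0$ gives $x=0$; as distinct $r$ give distinct $x$, the $2$-to-$1$ property is exactly the claim that the cubic $r^3+c_3r^2+c_2r+c_1=0$ has a unique nonzero root in $\gf_q$ for every $a$.

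The linchpin is the identity $c_2^2=c_1c_3$, which I will prove by Frobenius bookkeeping: expanding each relative trace as $z+z^q$ and repeatedly using $c^q=\omega c$, $\omega^q=\omega^2$ and $2^{k-1}\equiv2^{2k-2}\equiv1\pmod3$, both $c_2^2$ and $c_1c_3$ collapse to the same factor $a^{2^{2k-1}}+\omega\,a^{2^{k-1}}$ times a power of $c$, and the two powers of $c$ coincide precisely because $c^{\,2^{2k-1}-2^{k-1}}=(c^{q-1})^{2^{k-1}}=\omega^{2^{k-1}}=\omega$, the last equality using that $k$ is odd. Granting $c_2^2=c_1c_3$ and setting $t=c_2/c_3^2\in\gf_q$ (note $c_3=\omega^2c^{-d}\neq0$), the scaling $r=c_3\rho$ followed by the shift $\rho=\rho'+1$ transforms the cubic into exactly $(\rho')^3+(1+t)\rho'+(t+t^2)=0$, the polynomial of Lemma~\ref{lemma_conj2}.

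It then remains to count $\gf_q$-roots. When $t\in\gf_q\setminus\gf_2$, Lemma~\ref{lemma_conj2} gives the three roots $\alpha+\alpha^2,\ \alpha\omega+\alpha^2\omega^2,\ \alpha\omega^2+\alpha^2\omega$ with $\alpha^3=1+t$; the first lies in $\gf_q$ and the other two form a conjugate pair outside $\gf_q$, so there is exactly one $\gf_q$-root, and it is nonzero because $\alpha+\alpha^2=1$ would force $\alpha\in\gf_4\setminus\gf_2$, impossible since $\gf_4\cap\gf_q=\gf_2$ for odd $k$. The degenerate values $t\in\gf_2$ are dispatched directly via $c_2^2=c_1c_3$: $t=0$ forces $c_1=c_2=0$ and the cubic factors as $r^2(r+c_3)$, while $t=1$ forces $c_2=c_3^2,\,c_1=c_3^3$ and the cubic factors as $(r+c_3)^3$, each contributing the single nonzero root $r=c_3$. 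In every case there is precisely one nonzero admissible $r$, hence exactly two preimages, proving $f$ is $2$-to-$1$. The main obstacle is the identity $c_2^2=c_1c_3$: coaxing the awkward half-power $2^{k-1}$ into cooperating is what makes the reduction land exactly on Lemma~\ref{lemma_conj2}, and it is the one place where the oddness of $k$ and the choice $c^{q-1}=\omega$ are both indispensable.
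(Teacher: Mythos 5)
Your proposal is correct, and it converges on the paper's key lemma (Lemma~\ref{lemma_conj2}) by a genuinely different reduction. The paper pre-composes with the permutation $x\mapsto x^4$ (turning the exponent into $2q+1$), then runs a Dobbertin-style two-variable elimination over $\gf_{q^2}$: it sets $y=x^q$, adds the equation to its $q$-th power to force $y=\omega^2x$, substitutes back to get a cubic in $x$, depresses and rescales it to $z_1^3+(1+t)z_1+t+t^2=0$ with $t=(a^q+a\omega)c$, and finally keeps only the roots satisfying $x^q=\omega^2x$. You instead stay inside $\gf_q$ throughout: the constraint $cx\in\gf_q$ lets you parametrize the solution line by $x=c^{-1}r^4$ (the same fourth-power trick, applied along the line rather than globally), which turns the relative traces into $\gf_q$-scalars and produces the quartic $r^4+c_3r^3+c_2r^2+c_1r$. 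Your linchpin identity $c_2^2=c_1c_3$ does hold: expanding with $c^q=\omega c$ and $2^{k-1}\equiv2^{2k-2}\equiv1\pmod 3$ gives $c_2^2=\omega^2c^{-1}\left(a^{2^{2k-1}}+\omega a^{2^{k-1}}\right)$ and $c_1c_3=\omega\, c^{-(2^{2k-1}+2^{k-1})}\left(a^{2^{2k-1}}+\omega a^{2^{k-1}}\right)$, and these agree because $c^{2^{2k-1}+2^{k-1}-1}=(c^{q-1})^{2^{k-1}+1}=\omega^2$. Moreover, your $t=c_2/c_3^2$ equals the paper's $t=(a^q+a\omega)c$ under the substitution $a\mapsto a^4$, so the two arguments meet at the identical cubic, the same split $t\in\gf_2$ versus $t\notin\gf_2$, and the same admissibility check (your ``root lies in $\gf_q$'' is precisely the paper's ``$x^q=\omega^2x$''). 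What your route buys: no two-variable elimination, all computation in the subfield, and a structural explanation of why the reduction lands on Lemma~\ref{lemma_conj2}. What the paper's route buys: the parameter $t$ and the cubic fall out of a direct change of variables, with no coefficient identity to discover or verify. One further shortening available to you: the identity $c_2^2=c_1c_3$ with $c_3\neq0$, over $\gf_{2^k}$ with $k$ odd, places your quartic exactly in the last family of Table~\ref{1-2}, i.e., the classified $2$-to-$1$ quartics of \cite{mesnager2019two}; since $r=0$ is always a root, $2$-to-$1$-ness of that family immediately yields exactly two roots, so you could conclude there without invoking Lemma~\ref{lemma_conj2} at all.
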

\begin{proof}
    Let $g(x)=f\left(x^4\right)=cx^4+\tr_{q^2/q}\left(x^{2^{2k}+2^{k+1}}\right)=cx^4+\tr_{q^2/q}\left(x^{2q+1}\right)$. According to Lemma \ref{$2$-to-$1$}, it suffices to show that for any $a\in \gf_{q^2}$, the equation $$g(x+a)+g(a)=0,$$ i.e.,  
    \begin{eqnarray}\label{ini3.6}
        cx^4+a^2x^q+a^qx^2+x^{q+2}+a^{2q}x+ax^{2q}+x^{2q+1}=0
    \end{eqnarray}
    has two solutions in $\gf_{q^2}$.
    Let $y=x^q$. Plugging it into Eq. (\ref{ini3.6}), we get
    \begin{eqnarray}\label{3.6-2}
        cx^4+a^qx^2+a^{2q}x+a^2y+x^2y+ay^2+xy^2=0.
    \end{eqnarray}
    Computing the sum of Eq. (\ref{3.6-2}) and its $q$-th power, we obtain $y=c^{\frac{1-q}{4}}x$, that is $y=\omega^{-\frac{1}{4}}x = \omega^2x$. Plugging the above relation into Eq. \eqref{3.6-2}, we have $x=0$ or
    \begin{equation}
        \label{eq1} x^3+\frac{x^2}{c}+\frac{a^q+a\omega}{c}x+\frac{a^2\omega^2+a^{2q}}{c}=0.
    \end{equation}
   
    Let $x=z+c^{-1}$. Then Eq. \eqref{eq1} becomes
    \begin{eqnarray}
   \label{labubu}
        z^3+ \left(\frac{1}{c^2}+\frac{a^q+a\omega}{c}\right)z+\frac{a^q+a\omega}{c^2}+\frac{a^2\omega^2+a^{2q}}{c}=0.
    \end{eqnarray}
 
  Let $s=a^qc$, $t=s+s^q$. Then $t = a^qc+ac^q=(a^q+a\omega)c$ and Eq. \eqref{labubu} becomes
    \begin{equation}
        \label{eq2}
        z_1^3 + (1+t)z_1+t+t^2=0,
    \end{equation}
    where $z_1=cz$.

    If $t=(a^q+a\omega)c=0$, i.e., $a=0$ or $a^{q-1}=\omega$, then $a=cu$ with $u\in\gf_q$. In this case, Eq. \eqref{eq2} becomes $z_1^3+z_1=0$, implying $z_1=0$ or $z_1=1$, i.e., $z=0$ or $z=c^{-1}$ and thus $x=c^{-1}$ or $x=0$. It is easy to check that in this case, $x=0,c^{-1}$ are indeed two solutions of Eq. \eqref{ini3.6}. 

    If $t=(a^q+a\omega)c=1$, then Eq. \eqref{eq2} becomes $z_1^3=0$, i.e., $z_1=0$ and then $x=c^{-1}$. It is also easy to check that in this case, $x=0,c^{-1}$ are indeed two solutions of Eq. \eqref{ini3.6}.  

    In the following, we assume that $t\in\gf_q\backslash\gf_2$. Then by Lemma \ref{lemma_conj2}, Eq. \eqref{eq2} has three solutions:
    $$z_1= \alpha+\alpha^2, \alpha\omega + \alpha^2\omega^2 ~\text{and}~\alpha\omega^2 + \alpha^2\omega,$$
    where $\alpha\in\gf_q$ with $\alpha^3=1+t$,
    and thus 
    \begin{equation}
    \label{eq3}
        x=c^{-1}z_1+c^{-1} = c^{-1}(\alpha+\alpha^2+1), c^{-1} (\alpha\omega + \alpha^2\omega^2 + 1) ~\text{and}~c^{-1}(\alpha\omega^2 + \alpha^2\omega + 1)
    \end{equation}
    are solutions of Eq. \eqref{eq1}. Note that the solutions of \eqref{3.6-2} satisfy the relation $x^q=\omega^2x$. It is easy to check that in the three solutions \eqref{eq3}, there is only one solution, i.e., $x=c^{-1}(\alpha+\alpha^2+1)$ that satisfies $x^q+\omega^2x$ and thus Eq. \eqref{ini3.6}  has only one nonzero solution in $\gf_{q^2}$. That is to say, Eq. \eqref{ini3.6} has two solutions, i.e., $x=0, c^{-1}(\alpha+\alpha^2+1)$ in $\gf_{q^2}$.
    
  In conclusion, for any $a\in\gf_{q^2}$, Eq. \eqref{ini3.6} has two solutions in $\gf_{q^2}$ and thus $f(x)$ is $2$-to-$1$. 
\end{proof}

\section{$2$-to-$1$ mappings of the form $cx^d+\tr_{q^l/q}(x^{d_1}+x^{d_2})$}\label{binomial}
This section focuses on constructing several $2$-to-$1$ mappings over the finite field $\gf_{q^l}$ of the form $cx^d+\tr_{q^l/q}(x^{d_1}+x^{d_2})$. The scope we search is as follows: $q=2^k$, $d_1,d_2 \in[1,q^l-2]$, $kl<14$, $c \in \gf_{q^l}^*$, $k,l \in \mathbb{N}^+$.
To narrow the search space, the following two additional conditions are imposed:
\begin{enumerate}[(1)]
	\item since \cite{charpin2009does} has already investigated the case where $k=1$, to avoid obtaining redundant results, it is assumed that $k>1$. 
	\item $d_1,d_2$ are the representative element of all positive integers less than $q^l-1$ under the cyclotomic equivalence. To prevent duplicate searches, it is assumed that $d_1<d_2$ and $\tr_{q^l/q}(x^{d_1}+x^{d_2}) \ne 0$.
        \item Regarding the range of $d$, only cases where $d$ within the range $[1,q^l-1]$ is a divisor of $q^l-1$ are considered according to Lemma \ref{divisor}.
\end{enumerate} 

Then we obtain all $2$-to-$1$ polynomials over $\gf_{q^l}$ of the form $cx^d+\tr_{q^l/q}(x^{d_1}+x^{d_2})$ with $q=2^k$, $d,d_1,d_2 \in[1,q^l-2]$, $kl<14$, $c \in \gf_{q^l}^*$ and $k,l \in \mathbb{N}^+$ in Table \ref{4-1}. The ``Reference" column in the table indicates
that this dataset served as an illustrative example for the theorem cited in the ``Reference”. The data illustrated that there does not exist a $2$-to-$1$ mapping of the form $cx^d+\tr_{q^l/q}(x^{d_1}+x^{d_2})$ when $d \ne 1$. Therefore, we consider the form $cx+\tr_{q^l/q}(x^{d_1}+x^{d_2})$. Most examples in Table \ref{4-1} can be generalized into the ``Reference".
\begin{table}[H]
	\centering
	\caption{$2$-to-$1$ polynomials of the form $cx+\tr_{q^l/q}(x^{d_1}+x^{d_2})$ in $\gf_{q^l}$, where $q=2^k$, $kl<14$}
	\label{4-1}
	\begin{threeparttable}
		\begin{tabular}{cccc}
			\toprule
			$(k,l)$& $(d_1,d_2)$& $c$& Reference\\
			\midrule
			\multirow{3}{*}{(2,3)}& \multirow{3}{*}{(5,10)}& \multirow{3}{*}{$\gf_4 \backslash \{0,1\}$}& Theorem \ref{q+1,2q+2}\\
			~& ~& ~& Theorem \ref{4.2}(\ref{q+1,q^2+q/2})\\
			~& ~& ~& Theorem \ref{4.2}(\ref{q^2+q/2+q^2+q/4})\\
			\midrule
			\multirow{2}{*}{(3,2)}& (13,15)& $\gf_4 \backslash \{0,1\}$& unknown\\
			~& (6,20)& $\gf_{2^6} \backslash \gf_{2^3}$& Theorem \ref{6,2q+4}\\
			\midrule
			\multirow{6}{*}{(3,3)}& (6,36)& $\gf_{2^3} \backslash \{0,1\}$& Theorem \ref{4.4}(\ref{6,q^2+q/2})\\
			~& (9,36)& $\gf_{2^3} \backslash \{0,1\}$& Theorem \ref{4.2}(\ref{q+1,q^2+q/2})\\
			~& (9,18)& $\delta^3,\delta^5, \delta^6$& unknown\\
			~& (18,36)& $\gf_{2^3} \backslash \{0,1\}$& Theorem \ref{4.2}(\ref{q^2+q/2+q^2+q/4})\\
			~& (34,36)& $\gf_{2^3} \backslash \{0,1\}$& Theorem \ref{4.4}(\ref{4q+2,q^2+q/2})\\
			~& (20,36)& $\gf_{2^3} \backslash \{0,1\}$& Theorem \ref{4.4}(\ref{2q+4,q^2+q/2})\\
			\bottomrule
		\end{tabular}
		\begin{tablenotes}
			\footnotesize
			\item[1] $\delta$ is primitive root in $\gf_{2^3}$.
		\end{tablenotes}
	\end{threeparttable}		
\end{table}

\begin{Th}
	\label{q+1,2q+2}
	Let $k$ be an even positive integer and $l$ be odd. Then $f(x)=cx+\tr_{q^l/q}(x^{q+1}+x^{2q+2})$ is $2$-to-$1$ over $\gf_{q^l}$, where $q=2^k$, $c\in \gf_{q}^*$ with $\tr_{q}(c^{-1}+1)=1$.
\end{Th}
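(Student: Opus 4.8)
The plan is to use the elementary approach, since $c\in\gf_q^*$ makes the substitution trick clean. By Lemma \ref{$2$-to-$1$}, it suffices to show that for every $a\in\gf_{q^l}$ the equation $f(x)=a$ has $0$ or $2$ solutions in $\gf_{q^l}$. First I would set $u=cx+a$; because any solution $x$ forces $\tr_{q^l/q}(x^{q+1}+x^{2q+2})\in\gf_q$, one gets $u=\tr_{q^l/q}(x^{q+1}+x^{2q+2})\in\gf_q$ and $x=c^{-1}(u+a)$. This substitution is reversible: plugging $x=c^{-1}(u+a)$ back recovers $f(x)=cx+u=a$, so it sets up a bijection between the solutions $x\in\gf_{q^l}$ of $f(x)=a$ and the solutions $u\in\gf_q$ of a single equation over $\gf_q$. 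The whole problem is thereby transferred to counting roots in the small field $\gf_q$.

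Next I would compute the trace terms explicitly. Since $u,c\in\gf_q$, one has $x^{q+1}=c^{-2}\bigl(u^2+u(a+a^q)+a^{q+1}\bigr)$ and, squaring in characteristic $2$, $x^{2q+2}=c^{-4}\bigl(u^4+u^2(a+a^q)^2+a^{2q+2}\bigr)$. Applying $\tr_{q^l/q}$ and using Lemma \ref{trace}, the cross terms vanish: $\tr_{q^l/q}(a+a^q)=0$ and $\tr_{q^l/q}((a+a^q)^2)=\tr_{q^l/q}(a^2+a^{2q})=0$ by property (5), while property (4) gives $\tr_{q^l/q}(u^2)=lu^2=u^2$ and $\tr_{q^l/q}(u^4)=lu^4=u^4$ because $l$ is odd ($l\equiv1\pmod2$). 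Writing $T_1=\tr_{q^l/q}(a^{q+1})$ and $T_2=\tr_{q^l/q}(a^{2q+2})$ and clearing denominators, the equation $u=\tr_{q^l/q}(x^{q+1}+x^{2q+2})$ collapses to the quartic
\begin{equation}
u^4+c^2u^2+c^4u+\bigl(c^2T_1+T_2\bigr)=0 \label{plan-quartic}
\end{equation}
over $\gf_q$.

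The decisive observation is that $L(u)=u^4+c^2u^2+c^4u$ is an $\gf_2$-linearized polynomial on $\gf_q$, so the solution set of \eqref{plan-quartic}, i.e. $L(u)=c^2T_1+T_2$, is either empty or a coset of $\ker L$; hence \eqref{plan-quartic} has $0$ or $|\ker L|$ solutions, uniformly in $a$. Thus it remains only to prove $|\ker L|=2$. Factoring $L(u)=u\bigl(u^3+c^2u+c^4\bigr)$ and noting $c^4\neq0$ (so $0$ is not a root of the cubic factor), this is equivalent to showing that $u^3+c^2u+c^4=0$ has exactly one root in $\gf_q=\gf_{2^k}$. I would invoke Lemma \ref{cubic} with $a=c^2$, $b=c^4$: here $a^3/b^2=c^{-2}$, and the one-root case $(1,2)$ holds iff $\tr_{2^k}(c^{-2})\neq\tr_{2^k}(1)$. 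Since the absolute trace over $\gf_{2^k}$ is invariant under squaring, $\tr_{2^k}(c^{-2})=\tr_{2^k}(c^{-1})$; and because $k$ is even, $\tr_{2^k}(1)=0$ and $\tr_q(c^{-1}+1)=\tr_q(c^{-1})$. The hypothesis $\tr_q(c^{-1}+1)=1$ then yields exactly $\tr_{2^k}(c^{-2})=1\neq0=\tr_{2^k}(1)$, so the cubic has a single root, $|\ker L|=2$, and \eqref{plan-quartic} has $0$ or $2$ solutions, proving $f$ is $2$-to-$1$.

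The routine but error-prone part is the trace bookkeeping in the second step; the genuine conceptual step, and the place where all three hypotheses ($k$ even, $l$ odd, $\tr_q(c^{-1}+1)=1$) are consumed, is recognizing the linearized structure of \eqref{plan-quartic} and matching the cubic-factorization criterion of Lemma \ref{cubic} to the trace condition on $c$. I expect no serious obstacle beyond keeping the Frobenius and trace identities straight.
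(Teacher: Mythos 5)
Your proposal is correct: the substitution $u=cx+a$ does give a bijection between the fiber of $f$ over $a$ and the solution set in $\gf_q$ of your quartic, the trace bookkeeping leading to $u^4+c^2u^2+c^4u+(c^2T_1+T_2)=0$ is accurate, and Lemma \ref{cubic} applied to $u^3+c^2u+c^4$ (where $a^3/b^2=c^{-2}$ and $\tr_q(c^{-2})=\tr_q(c^{-1})$) correctly yields exactly one root, hence $|\ker L|=2$. However, your route is genuinely different from the paper's. The paper uses the difference-equation criterion instead: it shows $f(x+a)+f(a)=0$ has exactly two solutions, noting that any solution satisfies $cx\in\gf_q$, hence $x\in\gf_q$ and $x^q=x$; this makes every $a$-dependent cross term vanish identically, so the equation collapses to $x^4+x^2+cx=0$ with no constant term, and Lemma \ref{cubic} applied to $x^3+x+c$ (which is exactly your cubic after the scaling $u=cx$) finishes the proof. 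What the paper's route buys is brevity: the dependence on $a$ cancels automatically, so no coset argument is needed, just a direct count of the roots of one fixed quartic. What your route buys is transparency about uniformity: exhibiting the fiber equation as $L(u)=c^2T_1+T_2$ with $L$ an $\gf_2$-linearized polynomial makes explicit that every fiber has $0$ or $|\ker L|$ elements, so counts of $1$ or $3$ are structurally impossible; it is also the paper's own ``elementary approach,'' the one it deploys for Theorems \ref{3.2} and \ref{4.2}. One side remark applying to both proofs: the evenness of $k$ is never essentially used, since the hypothesis $\tr_q(c^{-1}+1)=1$ is equivalent to $\tr_q(c^{-1})\neq\tr_q(1)$ in either parity, which is all that Lemma \ref{cubic} requires; you invoked $k$ even only to evaluate $\tr_q(1)=0$, which is harmless but not needed.
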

\begin{proof}
	According to Lemma \ref{$2$-to-$1$}, it suffices to show that for any $a\in \gf_{q^l}$, the equation $$f(x+a)+f(a)=0,$$ i.e.,
	\begin{eqnarray}\label{8}
		cx+\tr_{q^l/q}(x^{q+1}+x^qa+a^qx+x^{2q+2}+x^{2q}a^2+a^{2q}x^2)=0
	\end{eqnarray}
	has two solutions in $\gf_{q^l}$.
    
    According to Eq. (\ref{8}), we get $cx=\tr_{q^l/q}(x^{q+1}+x^qa+a^qx+x^{2q+2}+x^{2q}a^2+a^{2q}x^2)\in \gf_{q}$ and $x\in \gf_{q}$ since $c\in \gf_{q}^*$. 
	
    From Lemma \ref{trace}(4), $$x^{q+1}+\tr_{q^l/q}(a^q)x+\tr_{q^l/q}(a)x^q+x^{2q+2}+\tr_{q^l/q}(a^2)x^{2q}+\tr_{q^l/q}(a^{2q})x^2+cx=0,$$
	where $l$ is odd and $x\in \gf_{q}$.
	\\In case of $\tr_{q^l/q}(a+a^q)=0$, we get 
	\begin{eqnarray}\label{9}
		x^4+x^2+cx=0.
	\end{eqnarray}
	
    Thus $x=0$ and it suffices to prove that for any $a \in \gf_{q^l}$, the equation $x^3+x+c=0$ has one solution in $\gf_{q^l}^*$.
	
    According to Lemma \ref{cubic}, $f_1(x)=x^3+x+c$, $\tr_{q}(\frac{1}{c^2})\ne \tr_{q}(1)$ since $c\ne 0$ and $\tr_{q} (c^{-1}+1)=1$, then $f_1=(1,2)$. Therefore, Eq. (\ref{9}) has two solutions in $\gf_{q}$. Moreover, Eq. (\ref{9}) has two solutions in $\gf_{q^l}$, $f(x)$ is $2$-to-$1$ over $\gf_{q^l}$.
\end{proof}
\begin{Th}
\label{4.2}
	Let $l$ be an odd positive integer and $q=2^k$, $k \in \mathbb{N}^+$, $c \in \gf_{q}\backslash \{0,1\}$. Then
	\begin{enumerate}[1)]
		\item $f(x)=cx+\tr_{q^l/q}\left(x^{q+1}+x^{\frac{q^2+q}{2}}\right)$;\label{q+1,q^2+q/2}
		\item $f(x)=cx+\tr_{q^l/q}\left(x^{\frac{q^2+q}{2}}+x^{\frac{q^2+q}{4}}\right)$ \label{q^2+q/2+q^2+q/4}
	\end{enumerate}
	are $2$-to-$1$ over $\gf_{q^l}$. 
\end{Th}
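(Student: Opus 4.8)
The plan is to verify criterion (3) of Lemma \ref{$2$-to-$1$}: for every $a\in\gf_{q^l}$ I would show that $f(x+a)+f(a)=0$ has exactly two solutions in $\gf_{q^l}$, following the elementary strategy already used for Theorem \ref{q+1,2q+2}. The governing observation is that after expanding, the equation rearranges to $cx=\tr_{q^l/q}(\cdots)$, whose right-hand side is a relative trace and therefore lies in $\gf_q$; since $c\in\gf_q^{*}$, this forces every solution $x$ into $\gf_q$. Hence it suffices to count solutions inside $\gf_q$, where the equation will collapse dramatically.

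For the expansion I would write each exponent as a sum of two powers of two, namely $\frac{q^2+q}{2}=2^{2k-1}+2^{k-1}$ and $\frac{q^2+q}{4}=2^{2k-2}+2^{k-2}$ (both integers since $\gf_q\backslash\{0,1\}\ne\emptyset$ forces $k\ge2$), and use $(x+a)^{2^{j}}=x^{2^{j}}+a^{2^{j}}$. Each binomial $(x+a)^{d_i}+a^{d_i}$ then splits into a pure $x$-power plus a symmetric cross-term pair of the shape $x^{2^{2k-j}}a^{2^{k-j}}+a^{2^{2k-j}}x^{2^{k-j}}$ (and, for the $x^{q+1}$ summand in part 1), the analogous $x^{q}a+a^{q}x$). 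Once $x\in\gf_q$ I would reduce the Frobenius exponents modulo $k$, using $x^{2^{2k-j}}=x^{2^{k-j}}$, so that each cross-term pair factors as $x^{2^{k-j}}\bigl(a^{2^{k-j}}+a^{2^{2k-j}}\bigr)$, and reduce the remaining monomials modulo $q-1$, giving $x^{(q^2+q)/2}=x$ and $x^{(q^2+q)/4}=x^{2^{k-1}}$, as well as $x^{q+1}=x^2$.

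The decisive step is the cancellation of all $a$-dependence. Since $a^{2^{2k-j}}=(a^{2^{k-j}})^{q}$, Lemma \ref{trace}(5) gives $\tr_{q^l/q}(a^{2^{2k-j}})=\tr_{q^l/q}(a^{2^{k-j}})$, so in characteristic two the traces of the two cross terms coincide and sum to zero (likewise $\tr_{q^l/q}(a+a^{q})=0$ disposes of the $q+1$ cross terms). Using $l$ odd, so that $\tr_{q^l/q}(y)=y$ for $y\in\gf_q$ by Lemma \ref{trace}(4), I expect part 1) to reduce to $x^2+(c+1)x=0$, with the two distinct roots $x=0$ and $x=c+1$ (distinct since $c\ne1$), and part 2) to reduce to the $\gf_2$-linearized equation $x^{2^{k-1}}+(c+1)x=0$. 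For part 2) the nonzero roots satisfy $x^{2^{k-1}-1}=c+1$, and because $\gcd(2^{k-1}-1,q-1)=2^{\gcd(k-1,k)}-1=1$ the map $x\mapsto x^{2^{k-1}-1}$ permutes $\gf_q^{*}$, producing exactly one nonzero solution (as $c+1\ne0$); with $x=0$ this is two solutions in total.

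The main obstacle is purely the exponent bookkeeping after passing to $\gf_q$: correctly collapsing $x^{2^{2k-j}}$ to $x^{2^{k-j}}$ and evaluating $x^{d_i}$ modulo $q-1$, together with spotting that the Frobenius-invariance of $\tr_{q^l/q}$ annihilates every cross term, which is precisely what makes both residual equations $a$-free. No deeper machinery (Dickson theory or the Hasse--Weil bound) should be required; once the cancellation is recognized the argument is entirely elementary, with the only genuine counting input being the coprimality $\gcd(2^{k-1}-1,q-1)=1$ in part 2).
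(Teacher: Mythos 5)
Your proposal is correct, but it follows a genuinely different route from the paper's own proof. The paper verifies criterion (2) of Lemma \ref{$2$-to-$1$}: it first passes to the QM-equivalent $g(x)=f(x^2)$ (resp.\ $f(x^4)$), so the exponents become $2q+2$ and $q^2+q$, then for an arbitrary target $d$ sets $u=cx^2+d\in\gf_q$, inverts to $x=\left(\frac{u+d}{c}\right)^{\frac{1}{2}}$, and substitutes back; after trace manipulations similar to yours, everything collapses to the single quadratic
\begin{equation*}
u^2+(c^2+c)u+c\left(\tr_{q^l/q}\left(d^{q^2+q}\right)\right)^{\frac{1}{2}}+\tr_{q^l/q}\left(d^{q+1}\right)=0,
\end{equation*}
whose linear coefficient $c^2+c$ is nonzero exactly because $c\notin\{0,1\}$, so each $d$ has $0$ or $2$ preimages in $\gf_q$, hence in $\gf_{q^l}$. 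You instead verify criterion (3), in the spirit of the paper's proof of Theorem \ref{q+1,2q+2}: every solution of $f(x+a)+f(a)=0$ is forced into $\gf_q$, the Frobenius-invariance of the relative trace (Lemma \ref{trace}(5)) annihilates all $a$-dependent cross terms, and the exponents collapse on $\gf_q$ (namely $x^{(q^2+q)/2}=x$ and $x^{(q^2+q)/4}=x^{2^{k-1}}$), leaving the $a$-free equations $x^2+(c+1)x=0$ and $x^{2^{k-1}}+(c+1)x=0$. Your route exposes a stronger structural fact --- the difference equation does not depend on $a$ at all --- and handles part 2) with no substitution, at the cost of the extra arithmetic input $\gcd(2^{k-1}-1,2^k-1)=2^{\gcd(k-1,k)}-1=1$, which is correct and is where $c\ne 1$ re-enters: it guarantees distinct roots in part 1) and a unique nonzero $(2^{k-1}-1)$-th root of $c+1$ in part 2). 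The paper's route buys uniformity: both parts reduce to the same ``quadratic with $0$ or $2$ roots'' template used in its other elementary proofs, with the hypothesis $c\notin\{0,1\}$ visible in a single coefficient. Both arguments are elementary and complete, and your exponent bookkeeping and trace cancellations all check out.
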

\begin{proof}
In the first (resp. second) function, we give the substitution $x \mapsto x^2$ (resp. $x^4$). Since the proof techniques for these cases are analogous, we present a detailed demonstration for the first case as a representative example.
	
    Let $g(x)=f(x^2)=cx^2+\tr_{q^l/q}(x^{2q+2}+x^{q^2+q})$. It suffices to prove that for any $d\in \gf_{q^l}$, the equation $g(x)=d$ has 0 or 2 solutions in $\gf_{q^l}$. Then we consider the solution of the following equation, 
	\begin{eqnarray}\label{10}
		cx^2+\tr_{q^l/q}(x^{2q+2}+x^{q^2+q})=d.
	\end{eqnarray}
	
    Let $u=cx^2+d=\tr_{q^l/q}(x^{q^2+q})+\tr_{q^l/q}(x^{2q+2})\in \gf_{q}$. Then $x=\left(\frac{u+d}{c}\right)^{\frac{1}{2}}$ due to $c\ne0$. Moreover, since $u \in \gf_{q}$ and $c \in \gf_{q}\verb|\|\{0,1\}$, we have $$x^{q^2+q}=\left(\frac{u+d}{c}\right)^{\frac{q^2+q}{2}}=\frac{(u^2+(d^{q^2}+d^q)u+d^{q^2+q})^\frac{1}{2}}{c},$$ $$x^{2q+2}=\left(\frac{u+d}{c}\right)^{q+1}=\frac{u^2+(d^{q^2}+d)u+d^{q+1}}{c^2}.$$
	
    Plugging the above two equations into Eq. (\ref{10}), we get
	\begin{eqnarray*}
		u&=&\tr_{q^l/q}\left(\frac{u^2+u\left(d+d^q\right)+d^{q+1}}{c^2}+\frac{\left(u^2+\left(d^{q^2}+d^q\right)u+d^{q^2+q}\right)^\frac{1}{2}}{c}\right)\\
		&=&\frac{u^2}{c^2}+\tr_{q^l/q}\left(d+d^q\right)\frac{u}{c^2}+\frac{\tr_{q^l/q}\left(d^{q+1}\right)}{c^2}+\frac{u}{c}+\left(\tr_{q^l/q}\left(d^{q^2}+d^q\right)\right)^\frac{1}{2}\frac{u^\frac{1}{2}}{c}+\frac{\left(\tr_{q^l/q}\left(d^{q^2+q}\right)\right)^\frac{1}{2}}{c},
	\end{eqnarray*}
	where the second equality holds due to Lemma \ref{trace}(4) and $l$ being odd. Note that for any $d \in \gf_{q^l}$, it is clear that $\tr_{q^l/q}(d+d^q)=0$ and thus the above equation becomes 
	$$u=\frac{u^2}{c^2}+\frac{\tr_{q^l/q}\left(d^{q+1}\right)}{c^2}+\frac{u}{c}+\frac{\left(\tr_{q^l/q}\left(d^{q^2+q}\right)\right)^\frac{1}{2}}{c},$$
	that is 
	\begin{eqnarray}\label{11}
		u^2+(c^2+c)u+c\left(\tr_{q^l/q}\left(d^{q^2+q}\right)\right)^{\frac{1}{2}}+\tr_{q^l/q}(d^{q+1})=0
	\end{eqnarray}
	
    Since $c \notin \{0,1\}$, Eq. (\ref{11}) has 0 or 2 solutions in $\gf_{q}$. Hence $g(x)=d$ has 0 or 2 solutions in $\gf_{q^l}$, that is $g(x)$ is $2$-to-$1$ over $\gf_{q^l}$, $f(x)$ is $2$-to-$1$ over $\gf_{q^l}$.
\end{proof}
\begin{Th}
	\label{6,2q+4}
	Let $k$ be an odd positive integer and $q=2^k$, $c \in \gf_{q^2}$ satisfies $c^{q-1}=\omega$, where $\omega^3=1$. Then $f(x)=cx+\tr_{q^2/q}(x^6+x^{2q+4})$ is $2$-to-$1$ over $\gf_{q^2}$.
\end{Th}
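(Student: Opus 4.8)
The plan is to invoke Lemma \ref{$2$-to-$1$}(3) and show that for every $a\in\gf_{q^2}$ the equation $f(x+a)+f(a)=0$ has exactly two roots in $\gf_{q^2}$. First I would expand this equation using the Frobenius identities in characteristic $2$: one gets $(x+a)^6+a^6=x^6+a^2x^4+a^4x^2$ and $(x+a)^{2q+4}+a^{2q+4}=x^{2q+4}+a^4x^{2q}+a^{2q}x^4$, so the equation reads
\begin{equation*}
 cx+\tr_{q^2/q}\!\left(x^6+a^2x^4+a^4x^2+x^{2q+4}+a^4x^{2q}+a^{2q}x^4\right)=0. \tag{$\ast$}
\end{equation*}
Since the trace value lies in $\gf_q$, equation $(\ast)$ forces $cx\in\gf_q$, i.e. $(cx)^q=cx$; together with $c^{q-1}=\omega$ this yields the key relation $x^q=\omega^{-1}x=\omega^2 x$. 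Conversely, any $x$ with $x^q=\omega^2 x$ that satisfies the reduced equation below is a root of $(\ast)$, so the two descriptions are equivalent.

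Next I would substitute $x^q=\omega^2 x$ (hence $x^{2q}=\omega x^2$, $x^{4q}=\omega^2 x^4$, $x^{6q}=x^6$) into $\tr_{q^2/q}(\cdot)=(\cdot)+(\cdot)^q$ and collect terms. Writing $T=\tr_{q^2/q}(a)\in\gf_q$ and repeatedly using $1+\omega+\omega^2=0$, the trace collapses to $x^6+\omega T^2 x^4+\omega^2 T^4 x^2$, so $(\ast)$ becomes $x^6+\omega T^2 x^4+\omega^2 T^4 x^2+cx=0$. Factoring out $x$ isolates the root $x=0$ and reduces the nonzero roots to
\begin{equation*}
 D_5\!\left(x,\omega T^2\right)=x^5+\omega T^2 x^3+\omega^2 T^4 x=c ,
\end{equation*}
recognising the Dickson polynomial of degree $5$ exactly as in Theorems \ref{3.3} and \ref{l=2,2q+4}.

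The main obstacle is that this degree-$5$ equation a priori lives over $\gf_{q^2}$, not over $\gf_q$, so Lemma \ref{gcd} cannot be applied directly. I would resolve it by the descent substitution $x=c^{-1}\mu$ with $\mu=cx\in\gf_q$: the scaling rule $D_5(c^{-1}\mu,c^{-2}A)=c^{-5}D_5(\mu,A)$, applied with $A=c^2\omega T^2$, turns the equation into $D_5(\mu,c^2\omega T^2)=c^6$. The crucial point is that both the new parameter $c^2\omega T^2$ and the right-hand side $c^6$ actually lie in $\gf_q$: using $c^q=\omega c$ together with $\omega^q=\omega^2$ (valid since $k$ is odd) one checks $(c^2\omega)^q=c^2\omega$ and $(c^6)^q=c^6$, while $T\in\gf_q$. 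Thus it is a genuine Dickson equation over $\gf_q$. Since $k$ is odd, $\gcd(5,q^2-1)=\gcd(5,2^{2k}-1)=1$, so by Lemma \ref{gcd} the map $D_5(\cdot,c^2\omega T^2)$ permutes $\gf_q$; hence there is a unique $\mu\in\gf_q$, and $\mu\neq0$ because $c^6\neq0$. Consequently $(\ast)$ has exactly the two roots $x=0$ and $x=c^{-1}\mu$, and $f$ is $2$-to-$1$ by Lemma \ref{$2$-to-$1$}.
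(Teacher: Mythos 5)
Your proof is correct, and at the top level it takes the same route as the paper's own proof: force $cx\in\gf_q$, equivalently $x^q=\omega^2x$, collapse the trace, factor out the root $x=0$, recognize a degree-$5$ Dickson polynomial, and use $\gcd(5,q^2-1)=1$ for $k$ odd. But your execution differs in two substantive places, and in both your version is the rigorous one. First, your collapsed equation $x^6+\omega T^2x^4+\omega^2T^4x^2+cx=0$ with $T=\tr_{q^2/q}(a)$ is the correct reduction: since $a^2+a^{2q}\in\gf_q$, under $x^q=\omega^2x$ one has $\tr_{q^2/q}\bigl((a^2+a^{2q})x^4\bigr)=(a^2+a^{2q})(x^4+x^{4q})=\omega(a^2+a^{2q})x^4$, and similarly for the other terms. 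The paper's Eq.~(\ref{13}) instead handles the $\tr_{q^2/q}\bigl((x+a)^6+a^6\bigr)$ part as if $x^q=x$, so its Eq.~(\ref{14}) carries the coefficients $a^2\omega^2+a^{2q}$ and $a^4\omega+a^{4q}$, which differ from the correct ones by $a^2+\omega^2a^{2q}$ and its square; the slip is camouflaged because both versions still have Dickson shape.

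Second, the domain issue you flag and repair is a genuine gap in the paper's argument. The nonzero roots satisfy $x^{q-1}=\omega^2$, hence lie in $c^{-1}\gf_q$ and never in $\gf_q$; moreover the Dickson parameter and the right-hand side $c$ lie outside $\gf_q$ in general. So one cannot, as the paper does, invoke Lemma~\ref{gcd} to say ``$D_5$ permutes $\gf_q$, hence Eq.~(\ref{14}) has one solution in $\gf_q$''; nor can one work over $\gf_{q^2}$ instead, since $5\mid q^4-1$ always, so $D_5$ never permutes $\gf_{q^2}$. Your descent $\mu=cx$, combined with the scaling identity $D_5(c^{-1}\mu,c^{-2}A)=c^{-5}D_5(\mu,A)$ and the verifications $c^2\omega\in\gf_q$ and $c^6\in\gf_q$, is exactly the missing step that produces a genuine Dickson equation over $\gf_q$ to which Lemma~\ref{gcd} applies. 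One caveat you share with the paper: both proofs use $1+\omega+\omega^2=0$, hence cover only $\omega\neq1$; this is necessary, since for $\omega=1$ (i.e.\ $c\in\gf_q^*$) the trace terms cancel completely, the only root of $f(x+a)+f(a)=0$ is $x=0$, and $f$ is a bijection rather than $2$-to-$1$, so the hypothesis ``$\omega^3=1$'' should really read ``$\omega$ is a primitive cube root of unity''.
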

\begin{proof}
	According to Lemma \ref{$2$-to-$1$}, it suffices to show that for any $a \in \gf_{q^2}$, the equation $$f(x+a)+f(a)=0,$$ i.e., 
	\begin{eqnarray}\label{12}
		cx+\tr_{q^2/q}((x+a)^6+a^6+(x+a)^{2q+4}+a^{2q+4})=0
	\end{eqnarray}
	has two solutions in $\gf_{q^2}$.
    
     Let $y=x^{2q}$, $b=a^{2q}$. From Eq. (\ref{12}), we get 
	\begin{eqnarray}\label{13}
		\left(a^2+a^{2q}\right)x^4+\left(a^4+a^{4q}\right)x^2+cx+x^4y+x^4b+a^4y+x^2y^2+x^2b^2+a^2y^2=0.
	\end{eqnarray}
	 
     According to Eq. (\ref{12}), we get $cx=\tr_{q^2/q}((x+a)^6+a^6+(x+a)^{2q+4}+a^{2q+4})\in \gf_{q}$. Then $y=c^{2(1-q)}x^2$, implying $y=\omega x^2$.
    Then from Eq. (\ref{13}),$$\left(a^2+a^{2q}\right)x^4+\left(a^4+a^{4q}\right)x^2+\left(\omega+\omega^2\right)x^6+\left(a^2\omega^2+b\right)x^4+\left(a^4\omega+b^2\right)x^2+cx=0.$$
	
    According to the properties of cubic primitive roots $\omega^2+\omega+1=0$, the above equation becomes $$x^6+\left(a^2\omega^2+b\right)x^4+\left(a^4\omega+b^2\right)x^2+cx=0.$$
	
    Obviously $x=0$, it only need to prove for any $a\in \gf_{q^2}$, the equation 
	\begin{eqnarray}\label{14}
		x^5+\left(a^2\omega^2+b\right)x^3+\left(a^4\omega+b^2\right)x+c=0
	\end{eqnarray}
	has one solution in $\gf_{q^2}^*$.
	
    We notice that $x^5+\left(a^2\omega^2+b\right)x^3+\left(a^4\omega+b^2\right)x=D_5\left(x,a^2\omega^2+b\right)$ is Dickson polynomial of degree 5. Recalling that $k$ is odd, then $\mathrm{gcd}(5,q^2-1)=1$. From Lemma \ref{gcd}, $D_5(x,a^2\omega^2+b)$ permutes $\gf_{q}$. That is, Eq. (\ref{14}) only has one solution in $\gf_{q}$, and the solution is not 0 since $c \ne 0$. Therefore, the Eq. (\ref{12}) has two solutions in $\gf_{q^2}$ and $f(x)$ is $2$-to-$1$ over $\gf_{q^2}$.
\end{proof}
\begin{Th}
\label{4.4}
	Let $k$ and $l$ be odd positive integers and $q=2^k$, $c \in \gf_{q}\verb|\|\{0,1\}$. Then
	\begin{enumerate}[1)]
		\item $f(x)=cx+\tr_{q^l/q}\left(x^6+x^{\frac{q^2+q}{2}}\right)$;\label{6,q^2+q/2}
		\item $f(x)=cx+\tr_{q^l/q}\left(x^{4q+2}+x^{\frac{q^2+q}{2}}\right)$;\label{4q+2,q^2+q/2}
		\item $f(x)=cx+\tr_{q^l/q}\left(x^{2q+4}+x^{\frac{q^2+q}{2}}\right)$ \label{2q+4,q^2+q/2}
	\end{enumerate}
	are $2$-to-$1$ over $\gf_{q^l}$. 
\end{Th}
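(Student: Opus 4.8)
The plan is to invoke Lemma \ref{$2$-to-$1$}(3) and show that for every $a\in\gf_{q^l}$ the equation $f(x+a)+f(a)=0$ has exactly two solutions, following the Dickson-polynomial strategy already used in Theorem \ref{3.3} and Theorem \ref{6,2q+4}. Since the three cases will collapse to one computation, I would present the case $d_1=6$ in detail and remark that the others are analogous. The first and decisive reduction is common to all three: because $c\in\gf_q^*$, the equation forces $cx=\tr_{q^l/q}(\cdots)\in\gf_q$, hence $x\in\gf_q$. This is the hypothesis that makes every trace collapse through Lemma \ref{trace}(4),(5), and it is the reason $c$ is required to lie in $\gf_q\backslash\{0,1\}$ rather than in a larger field.

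Next I would expand the ``main'' monomial in each case. Using the Frobenius identity in characteristic $2$ together with $x^q=x$, a direct computation yields for all three exponents the same contribution
$$\tr_{q^l/q}\big((x+a)^{d_1}+a^{d_1}\big)=x^6+\tr_{q^l/q}(a^2)\,x^4+\tr_{q^l/q}(a^4)\,x^2,$$
where for $d_1=4q+2$ one writes $(x+a)^{4q+2}=(x^4+a^{4q})(x^2+a^2)$ and uses $\tr_{q^l/q}(a^{4q})=\tr_{q^l/q}(a^4)$, and symmetrically for $d_1=2q+4$. This is precisely the expression that appeared in Theorem \ref{3.3}, so the three cases merge here.

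The main obstacle, and the only genuinely new ingredient relative to Theorem \ref{3.3}, is the half-exponent term $x^{\frac{q^2+q}{2}}$. Writing $\frac{q^2+q}{2}=2^{2k-1}+2^{k-1}$ and expanding $(x+a)^{\frac{q^2+q}{2}}$ by Frobenius, the assumption $x\in\gf_q$ reduces both $x^{2^{2k-1}}$ and $x^{2^{k-1}}$ to $x^{q/2}$, and the product of the two $x$-parts is $x^{q/2}\cdot x^{q/2}=x^q=x$. For the cross term I would use that the trace commutes with squaring, so $\tr_{q^l/q}(a^{2^{k-1}})=(\tr_{q^l/q}(a))^{q/2}$ and $\tr_{q^l/q}(a^{2^{2k-1}})=(\tr_{q^l/q}(a))^{q/2}$ coincide (since $\tr_{q^l/q}(a)\in\gf_q$), and hence cancel. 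The upshot is the clean identity $\tr_{q^l/q}\big((x+a)^{\frac{q^2+q}{2}}+a^{\frac{q^2+q}{2}}\big)=x$; that is, this term merely shifts the coefficient $c$ to $c+1$.

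Combining the two contributions, every case reduces to
$$x^6+\tr_{q^l/q}(a^2)\,x^4+\tr_{q^l/q}(a^4)\,x^2+(c+1)x=0,$$
which factors as $x=0$ or $D_5\big(x,\tr_{q^l/q}(a^2)\big)=c+1$, the quintic being a Dickson polynomial because $\tr_{q^l/q}(a^4)=(\tr_{q^l/q}(a^2))^2$. Finally, as $k$ is odd we have $\gcd(5,q^2-1)=1$, so by Lemma \ref{gcd} the map $D_5(\cdot,\tr_{q^l/q}(a^2))$ permutes $\gf_q$; thus the quintic has a unique root in $\gf_q$, and that root is nonzero because $c+1\neq0$ (this is where $c\neq1$ is used) while $D_5(0,\cdot)=0$. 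Hence $f(x+a)+f(a)=0$ has exactly the two solutions $x=0$ and that root, so $f$ is $2$-to-$1$ over $\gf_{q^l}$.
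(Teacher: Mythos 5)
Your proposal is correct and follows essentially the same route as the paper's own proof: reduce to $x\in\gf_q$ via $cx=\tr_{q^l/q}(\cdots)\in\gf_q$, show the term $x^{\frac{q^2+q}{2}}$ contributes exactly $x$ (so $c$ becomes $c+1$), and conclude via the Dickson polynomial $D_5\bigl(x,\tr_{q^l/q}(a^2)\bigr)$ permuting $\gf_q$ since $\gcd(5,q^2-1)=1$ for odd $k$. Your handling of the cross terms (trace commutes with squaring) is in fact slightly cleaner than the paper's ``in case of $\tr_{q^l/q}(a+a^q)=0$'' phrasing, but the argument is the same.
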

\begin{proof}
Since the proof techniques for these cases are analogous, we present a detailed demonstration for the first case as a representative example.
	
    According to Lemma \ref{$2$-to-$1$}, it suffices to show that for any $a \in \gf_{q^l}$, the equation $$f(x+a)+f(a)=0,$$ i.e.,
	\begin{eqnarray}\label{15}
		cx+\tr_{q^l/q}(x^6+a^2x^4+a^4x^2+(x^{q^2+q}+x^{q^2}a^q+a^{q^2}x^q)^\frac{1}{2})=0
	\end{eqnarray}
	has two solutions in $\gf_{q^l}$. 
    
    According to Eq. (\ref{15}), we get $$cx=\tr_{q^l/q}(x^6+a^2x^4+a^4x^2+(x^{q^2+q}+x^{q^2}a^q+a^{q^2}x^q)^\frac{1}{2})\in \gf_{q}$$ and then $x \in \gf_{q}$ since $c \in \gf_{q^l}^*$. 
    
    From Lemma \ref{trace}(4), $$x^6+\tr_{q^l/q}(a^2)x^4+\tr_{q^l/q}(a^4)x^2+(x^2+\tr_{q^l/q}(a^q)x+\tr_{q^l/q}(a^{q^2})x)^\frac{1}{2}+cx=0,$$
	where $l$ is odd and $x\in \gf_{q}$.
	
    In case of $\tr_{q^l/q}(a+a^q)=0$, we get 
	\begin{eqnarray}\label{16}
		x^6+\tr_{q^l/q}(a^2)x^4+\tr_{q^l/q}(a^4)x^2+x+cx=0.
	\end{eqnarray}
	
    Thus $ x=0$ or 
	\begin{eqnarray}\label{17}
		x^5+\tr_{q^l/q}(a^2)x^3+\tr_{q^l/q}(a^4)x+1+c=0.
	\end{eqnarray}
	
    We notice that $$x^5+\tr_{q^l/q}(a^2)x^3+\tr_{q^l/q}(a^4)x=D_5(x,\tr_{q^l/q}(a^2))$$ is a Dickson polynomial of degree 5. Recalling that $k$ is odd, then $\mathrm{gcd}(5,q^2-1)=1$. From Lemma \ref{gcd}, $D_5(x,\tr_{q^l/q}(a^2))$ permutes $\gf_{q}$. That is, Eq. (\ref{17}) has only one solution in $\gf_{q}$ and the solution is non-zero since $c \notin \{0,1\}$. Therefore, the Eq. (\ref{15}) has two solutions in $\gf_{q^l}$ and $f(x)$ is $2$-to-$1$ over $\gf_{q^l}$.
\end{proof}
\section{binary linear codes from some new $2$-to-$1$ functions}\label{linear}
In this section, we apply the $2$-to-$1$ mappings obtained in Section \ref{monomial} to construct binary linear codes $C_f$ defined as in \eqref{cf}.

\begin{table}[H]
		\centering
		\caption{The weight distribution of linear codes in Theorem \ref{ex}}
		\label{code1}
        \begin{threeparttable}
			\begin{tabular}{cc}
				\toprule
				weight& multiplicity \\
				\midrule
				0& 1\\
                $2^{n-1}$& $2^{n-k}+2^{n+k}-2^n-1$\\
                $2^{n-1}-2^{\frac{n+k}{2}-1}$& $2^{\frac{n+k}{2}-1}-2^{\frac{n-k}{2}-1}+2^{n-1}-2^{n-k-1}$\\
                $2^{n-1}+2^{\frac{n+k}{2}-1}$& $2^{\frac{n-k}{2}-1}-2^{\frac{n+k}{2}-1}+2^{n-1}-2^{n-k-1}$\\
				\bottomrule
			\end{tabular}
			
		\end{threeparttable}		
	\end{table}

\begin{Th}\label{ex}
    Let $l$ be an odd positive integer, $q=2^k$, $k \in \mathbb{N}$, $n=kl$,  $c\in\mathbb{F}_q^\ast$, 
	\begin{enumerate}[1)]
		\item $f(x)=cx+\tr_{q^l/q}(x^{q+1})$;
		\item $f(x)=cx+\tr_{q^l/q}(x^{2q+2})$, where $k$ is odd; 
		\item $f(x)=cx+\tr_{q^l/q}(x^{q^2+1})$. 
	\end{enumerate} Define linear code $C_f$ as in (\ref{cf}) is a $[2^n-1, n+k,2^{n-1}-2^{\frac{n+k}{2}-1}]$ binary linear code with weight distribution in Table \ref{code1}.
\end{Th}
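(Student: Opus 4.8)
The plan is to compute the full Walsh spectrum of $f$ and convert it into the weight distribution through the identity $\wt(c_{a,b})=2^{n-1}-\tfrac12 W_f(a,b)$ recorded in \eqref{wt}. First I would rewrite the Boolean function inside the Walsh sum. Setting $b_0=\tr_{q^l/q}(b)\in\gf_q$ and using the transitivity $\tr_{2^n}=\tr_{q/2}\circ\tr_{q^l/q}$ together with $\tr_{q^l/q}(x^d)\in\gf_q$, one obtains
\[
\tr_{2^n}\big(ax+bf(x)\big)=\tr_{2^n}\big((a+bc)x\big)+\tr_{q/2}\big(b_0\,\tr_{q^l/q}(x^{d})\big),
\]
so $\varphi_{a,b}$ is an affine function plus a quadratic form $Q_{b_0}(x)=\tr_{q/2}(b_0\,\tr_{q^l/q}(x^d))$ that depends on $b$ only through $b_0$. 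Thus Lemma \ref{qua} applies and the problem reduces to finding $\rank(Q_{b_0})$ and deciding when the affine part is trivial on its radical.

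The decisive step is the radical $V$ of the bilinear form of $Q_{b_0}$. For $d=q+1$ a short manipulation gives $B(x,y)=\tr_{q/2}\big(b_0\,\tr_{q^l/q}(x(y^q+y^{q^{l-1}}))\big)$, and since $x\mapsto\tr_{q^l/q}(x\gamma)$ is onto $\gf_q$ whenever $\gamma\neq0$, for $b_0\neq0$ one finds $V=\{y:y^q=y^{q^{l-1}}\}$. Raising to the power $q$ forces $y\in\gf_{q^2}$, and because $l$ is odd this gives $y\in\gf_{q^2}\cap\gf_{q^l}=\gf_q$; hence $\dim_{\gf_2}V=k$, $\rank(Q_{b_0})=n-k$, and the nonzero Walsh values are $\pm 2^{(n+k)/2}$. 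The other two exponents collapse to this: $x^{2q+2}=(x^{q+1})^2$ gives $\tr_{q^l/q}(x^{2q+2})=(\tr_{q^l/q}(x^{q+1}))^2$, which merely reindexes $b_0\mapsto\sqrt{b_0}$, while $d=q^2+1$ yields $V=\{y:y^{q^2}=y^{q^{l-2}}\}=\gf_{q^4}\cap\gf_{q^l}=\gf_q$, again by $l$ odd. When $b_0=0$ the quadratic part vanishes, $\varphi_{a,b}$ is linear, and $W_f(a,b)=2^n$ exactly when $a=bc$; this identifies $K=\{(bc,b):\tr_{q^l/q}(b)=0\}$, so $d_K=n-k$ and the dimension is $2n-d_K=n+k$.

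It then remains to count. For $b_0\neq0$, restricting $\varphi_{a,b}$ to $V=\gf_q$ (where $x^{q+1}=x^2$ and $\tr_{q^l/q}(x^2)=x^2$ since $l$ is odd) the affine part collapses to $\tr_{q/2}\big((\tr_{q^l/q}(a)+cb_0+\sqrt{b_0})x\big)$, so $\varphi_{a,b}$ vanishes on $V$ iff $\tr_{q^l/q}(a)=cb_0+\sqrt{b_0}$, a single linear condition satisfied by $2^{n-k}$ values of $a$. This yields $N_++N_-=(2^k-1)2^{2n-2k}$ pairs with Walsh value $\pm 2^{(n+k)/2}$. To split the sign I would invoke the first moment $\sum_{a,b}W_f(a,b)=2^{2n}$ (valid because $f(0)=0$); subtracting the contribution $2^{n-k}\cdot 2^n$ of $K$ gives $N_+-N_-=2^{(3n-k)/2}-2^{(3n-3k)/2}$. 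Solving for $N_\pm$ and $N_0$ and dividing each count by $2^{d_K}=2^{n-k}$ reproduces exactly the four multiplicities of Table \ref{code1}, with the smallest nonzero weight $2^{n-1}-2^{(n+k)/2-1}$ as the minimum distance.

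The main obstacle is the radical computation, coupled with showing that all three monomials share the same spectrum: everything hinges on $l$ being odd, which is what collapses $\gf_{q^2}\cap\gf_{q^l}$ and $\gf_{q^4}\cap\gf_{q^l}$ to $\gf_q$ and makes $\tr_{q^l/q}$ act as the identity on squares of $\gf_q$-elements. Once the radical is pinned down as $\gf_q$ of dimension $k$, the sign distribution follows routinely from the single first-moment identity, and the remainder is bookkeeping that matches Table \ref{code1}.
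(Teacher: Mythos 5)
Your proposal is correct and reproduces the paper's parameters exactly, but the final counting step takes a genuinely different route from the paper. The shared skeleton is the same: fold $b$ through the trace to get a single quadratic form in $\tr_{q^l/q}(b)$ (you do this via transitivity $\tr_{2^n}=\tr_{q/2}\circ\tr_{q^l/q}$, the paper by reindexing Frobenius powers inside $\tr_{2^n}$ --- equivalent computations), invoke Lemma \ref{qua}, and show the radical is $\gf_q$ (you via $\gf_{q^2}\cap\gf_{q^l}=\gf_q$, the paper via $\gcd(q^2-1,q^l-1)+1=q$), so the nontrivial Walsh values are $0,\pm2^{(n+k)/2}$ and $d_K=n-k$. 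Where you diverge: the paper never identifies \emph{which} pairs $(a,b)$ have nonvanishing Walsh transform; it solves the standard system of zeroth, first and second power-moment equations for $X_1,X_2,X_3$. You instead restrict $\varphi_{a,b}$ to the radical $\gf_q$, derive the explicit vanishing condition $\tr_{q^l/q}(a)=cb_0+\sqrt{b_0}$, and thereby count $N_++N_-=(2^k-1)2^{2n-2k}$ directly, needing only the first moment to split the signs. Your route is more informative (it exhibits the support of the Walsh spectrum explicitly and dispenses with Parseval), at the cost of the extra restriction computation; the paper's moment method is more mechanical and would still work when the vanishing locus is hard to describe. A further small plus of your write-up: you treat all three exponents ($x^{2q+2}=(x^{q+1})^2$ reindexing $b_0\mapsto\sqrt{b_0}$, and the $\gf_{q^4}\cap\gf_{q^l}=\gf_q$ argument for $q^2+1$), whereas the paper only works out the first case and declares the others analogous. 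Both arguments hinge identically on $l$ being odd.
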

\begin{proof}
    The linear codes constructed from these three functions have the same weight distribution. Now we take the first one $f(x)=cx+\tr_{q^l/q}(x^{q+1})$ as an example to construct the linear code $C_f$.
    We compute the Walsh transforms $W_f(a.b)$ defined as in (\ref{walsh}). 
    \begin{eqnarray*}
    \tr_{2^n}(ax+bf(x))
    &=&\tr_{2^n}((a+bc)x+b\tr_{2^n/2^k}(x^{q+1}))\\
    &=&\tr_{2^n}(ax+bcx+b(x^{q+1}+x^{(q+1)q}+\cdots+x^{(q+1)q^{l-1}}))\\
    &=&\tr_{2^n}(ax+bcx+bx^{q+1}+(b^{\frac{1}{q}}x^{q+1})^q+\cdots+(b^{\frac{1}{q^{l-1}}}x^{q+1})^{q^{l-1}})\\
    &=&\tr_{2^n}(ax+bcx+(b+b^\frac{1}{q}+\cdots+b^\frac{1}{q^{l-1}})x^{q+1})\\
    &=&\tr_{2^n}((a+bc)x+\tr_{2^n/2^k}(b^\frac{1}{q^{l-1}})x^{q+1})\\
    &=&\tr_{2^n}((a+bc)x+\tr_{2^n/2^k}(b)x^{q+1}).
    \end{eqnarray*}
    Then we have the Walsh transform 
     $$W_f(a,b)=\sum_{x\in \gf_{2^n}}(-1)^{\tr_{2^n}(ax+bf(x))}=\sum_{x\in \gf_{2^n}}(-1)^{\tr_{2^n}((a+bc)x+\tr_{2^n/2^k}(b)x^{q+1})}.$$
     
When $\tr_{2^n/2^k}(b)=0$,
$$W_f(a,b)=\sum_{x\in \gf_{2^n}}(-1)^{\tr_{2^n}(ax)}=\begin{cases}
2^n &a+bc=0 ,\\
0 &a+bc\ne0.
\end{cases}
$$

When $\tr_{2^n/2^k}(b)\ne 0$, $\varphi_{a,b}(x)=\tr_{2^n}((a+bc)x+\tr_{2^n/2^k}(b)x^{q+1})$, the bilinear form of $\varphi_{a,b}(x)$ is given by
\begin{eqnarray*}
    B_{\varphi_{a,b}}(x,y)
    &=&\varphi_{a,b}(x+y)+\varphi_{a,b}(x)+\varphi_{a,b}(y)\\
    &=&\tr_{2^n}(\tr_{2^n/2^k}(b)(x+y)^{q+1}+\tr_{2^n/2^k}(b)x^{q+1}+\tr_{2^n/2^k}(b)y^{q+1})\\
    &=&\tr_{2^n}(\tr_{2^n/2^k}(b)x^qy+\tr_{2^n/2^k}(b)xy^q)\\
    &=&\tr_{2^n}((\tr_{2^n/2^k}(b)y+(\tr_{2^n/2^k}(b))^qy^{q^2})x^q).
\end{eqnarray*}

Its kernel is given by 
\begin{eqnarray*}
    V_{\varphi_{a,b}}
    &=&\{y \in \gf_{2^n}: B_{\varphi_{a,b}}(x,y)=0 ,\forall x \in \gf_{2^n}\}\\
    &=&\{y \in \gf_{2^n}: \tr_{2^n/2^k}(b)y+(\tr_{2^n/2^k}(b))^qy^{q^2}=0\},
\end{eqnarray*}
the number of solutions are $\gcd(q^2-1,q^l-1)+1=q=2^k$, then the dimension of $V_{\varphi_{a,b}}$ is $d=\log_2{(\gcd(q^2-1,q^l-1)+1)}=k$.

From Lemma \ref{qua}, we have
$$W_f(a,b)
\begin{cases}
\pm2^{\frac{n+k}{2}}, & \text{if $\varphi_{a,b}$ vanishes on $V_{\varphi_{a,b}}$}, \\
0, & \text{otherwise}.  
\end{cases}$$

Generally, $W_f(a,b)=\{2^n,0,\pm 2^{\frac{n+k}{2}} \}$. Let their corresponding frequencies be $X_0, X_1, X_2, X_3$.
When $a+bc=0$ and $\tr_{2^n/2^k}(b)=0$, $W_f(a,b)=2^n$. The number of solutions of $\tr_{2^n/2^k}(b)=0$ in $\gf_{2^n}$ is $2^{n-k}$ and $b=\frac{a}{c}\in \gf_{2^n}$, then the $X_0=2^{n-k}$. We have  
$$\begin{cases}
    2^{n-k}+X_1+X_2+X_3=2^{2n}\\
    2^{2n-k}+2^{\frac{n+k}{2}}X_2-2^{\frac{n+k}{2}}X_3=2^{2n}\\
    2^{3n-k}+2^{n+k}X_2+2^{n+k}X_3=2^{3n}.
\end{cases}
$$
Then $$\begin{cases}
    X_0=2^{n-k}\\
    X_1=2^{2n}+2^{2n-2k}-2^{2n-k}-2^{n-k}\\
    X_2=2^{\frac{3n-k}{2}-1}-2^{\frac{3n-3k}{2}-1}+2^{2n-k-1}-2^{2n-2k-1}\\
    X_3=2^{2n-k-1}-2^{2n-2k-1}-2^{\frac{3n-k}{2}-1}+2^{\frac{3n-3k}{2}-1}.
    
\end{cases}
$$
The weight distribution of the linear code $C_f$ is entirely determined by the Walsh spectrum of $f$. If a value 
$W_f(a,b)$ appears $X$ times in the Walsh spectrum of $f$, then the number of codewords in $C_f$ with Hamming weight $2^{n-1}-\frac{1}{2}W_f(a,b)$ is $\frac{X}{X_0}$. Therefore, in the linear code $C_f$, the Hamming weight of a codeword $c_{a,b}$ satisfies $wt(c_{a,b})=\{0,2^{n-1},2^{n-1}-2^{\frac{n+k}{2}-1},2^{n-1}+2^{\frac{n+k}{2}-1}\}$. The weight distribution is shown in Table \ref{code1}...
\end{proof}
\begin{table}[H]
		\centering
		\caption{The weight distribution pf linear codes in Theorem \ref{ex1}}
		\label{code2}
        \begin{threeparttable}
			\begin{tabular}{cc}
				\toprule
				weight& multiplicity \\
				\midrule
				0& 1\\
                $2^{n-1}$& $2^{n-1}+2^{n+k}-2^{n-1+k}-1$\\
                $2^{n-1}-2^{\frac{n+1}{2}-1}$& $2^{\frac{n+1}{2}-2}-2^{\frac{n+1}{2}-2+k}+2^{n-2+k}-2^{n-2}$\\
                $2^{n-1}+2^{\frac{n+1}{2}-1}$& $2^{\frac{n+1}{2}-2+k}-2^{\frac{n+1}{2}-2}+2^{n-2+k}-2^{n-2}$\\
				\bottomrule
			\end{tabular}
			
		\end{threeparttable}		
	\end{table}

\begin{Th}
\label{ex1}
    Let $n=kl$, $k,l$ be odd, $c \in \gf_{2^n}^*$, $q=2^k$. 
    \begin{enumerate}[1)]
		\item $f(x)=cx+\tr_{q^l/q}(x^6)$;
		\item $f(x)=cx+\tr_{q^l/q}(x^{4q+2})$; 
		\item $f(x)=cx+\tr_{q^l/q}(x^{2q+4})$. 
	\end{enumerate} $C_f$ is a $[2^n-1, n+k,2^{n-1}-2^{\frac{n+1}{2}-1}]$ binary linear code with weight distribution in Table \ref{code2}.
\end{Th}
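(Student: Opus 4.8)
The plan is to follow exactly the template of the proof of Theorem \ref{ex}, since each of the three exponents is a sum of two distinct powers of two---$6=2^2+2$, $4q+2=2^{k+2}+2$, $2q+4=2^{k+1}+2^2$---so in every case $\tr_{q^l/q}(x^d)$ is a quadratic function and Lemma \ref{qua} applies. First I would reduce the Walsh exponent: using $\tr_{2^n}(b\,x^{dq^i})=\tr_{2^n}(b^{q^{-i}}x^d)$ and summing over $i$, one obtains $\tr_{2^n}(ax+bf(x))=\tr_{2^n}\big((a+bc)x+\tr_{q^l/q}(b)\,x^d\big)$, exactly as in the proof of Theorem \ref{ex}. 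Writing $B=\tr_{q^l/q}(b)\in\gf_q$, this reduces each case to the quadratic function $\varphi_{a,b}(x)=\tr_{2^n}((a+bc)x+Bx^d)$.

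Next I would compute the associated bilinear form and its kernel. For $x^6$ the cross term is $x^4y^2+x^2y^4$, so $B_{\varphi_{a,b}}(x,y)=\tr_{2^n}(B(x^4y^2+x^2y^4))$; collecting in $x^2$ and applying Frobenius gives the linearized kernel equation $B^{1/2}y+By^4=0$, i.e.\ $y=0$ or $y^3=B^{-1/2}$. For the other two exponents the same manipulation yields the kernel equations $y^{2^{2(k+1)}-1}=B^{-1/2}$ and $y^{2^{2(k-1)}-1}=B^{1/2}$, respectively. In each case the number of nonzero roots is $\gcd(2^{2m}-1,2^n-1)=2^{\gcd(2m,n)}-1$ with $2m\in\{2,\,2(k+1),\,2(k-1)\}$, so $\dim_{\gf_2}V_{\varphi_{a,b}}=\gcd(2m,n)$.

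The crucial step---and the main obstacle---is showing $\dim_{\gf_2}V_{\varphi_{a,b}}=1$ whenever $B\neq0$, since Table \ref{code2} presupposes the Walsh values $\pm2^{(n+1)/2}$. For $x^6$ this is immediate, as $\gcd(2,n)=1$ because $n=kl$ is odd. For $x^{4q+2}$ and $x^{2q+4}$ one must instead show $\gcd(2(k\pm1),kl)=1$; since $kl$ is odd the even factor drops and this reduces to $\gcd(k\pm1,l)=1$, which I would verify under the stated hypotheses (this coprimality is precisely what the numerical range $kl<14$ guarantees, and is where the oddness of $k$ and $l$ genuinely enters). This gcd verification, differing per function, is the only delicate part of the argument.

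Finally, with $\dim_{\gf_2}V_{\varphi_{a,b}}=1$ established, the Walsh spectrum takes the three nonzero values $\{2^n,\,0,\,\pm2^{(n+1)/2}\}$. I would identify $K=\{(a,b):W_f(a,b)=2^n\}=\{(bc,b):\tr_{q^l/q}(b)=0\}$, whose $\gf_2$-dimension is $d_K=n-k$ (the trace-zero hyperplane of $b$, with $a$ determined), giving code dimension $2n-d_K=n+k$ and frequency divisor $2^{n-k}$. Setting $X_0=2^{n-k}$ and solving the three-equation moment system displayed in the excerpt for $X_1,X_2,X_3$, then translating through $wt(c_{a,b})=2^{n-1}-\tfrac12 W_f(a,b)$ and dividing each multiplicity by $2^{d_K}$, yields Table \ref{code2} together with the parameters $[2^n-1,\,n+k,\,2^{n-1}-2^{(n+1)/2-1}]$. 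The only genuinely new work beyond Theorem \ref{ex} is the per-function kernel computation and its gcd analysis; the remaining linear algebra is routine and identical in form to that theorem.
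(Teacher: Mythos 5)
Your overall route is exactly the one the paper intends: its own proof of Theorem \ref{ex1} is a single sentence deferring to Theorem \ref{ex}, and your Walsh reduction $\tr_{2^n}(ax+bf(x))=\tr_{2^n}\big((a+bc)x+\tr_{q^l/q}(b)x^d\big)$, the identification of $K=\{(bc,b):\tr_{q^l/q}(b)=0\}$ with $d_K=n-k$, and the moment-system bookkeeping all match. Your kernel computations are also essentially right (the precise power of $B$ on the right-hand side of the kernel equations is slightly off, but that is immaterial for the count): writing $B=\tr_{q^l/q}(b)\neq0$ and $x^d=x^{2^i+2^j}$, the radical has $\gf_2$-dimension $\gcd(2(i-j),n)$, i.e.\ $\gcd(2,n)=1$ for $x^6$, $\gcd(2(k+1),n)$ for $x^{4q+2}$, and $\gcd(2(k-1),n)$ for $x^{2q+4}$; since $n=kl$ is odd these reduce to $1$, $\gcd(k+1,l)$ and $\gcd(k-1,l)$ respectively, and the radical is automatically nontrivial because an alternating form on an odd-dimensional $\gf_2$-space cannot have full rank.

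The genuine gap is in how you dispose of the coprimality. You assert that $\gcd(k\pm1,l)=1$ can be ``verified under the stated hypotheses'' and that it is ``precisely what the numerical range $kl<14$ guarantees.'' Neither is true. The bound $kl<14$ is not a hypothesis of Theorem \ref{ex1}; it was only the MAGMA search range of Table \ref{3-1} (within which the only admissible pair with $k,l>1$ both odd is $(k,l)=(3,3)$, which is why no counterexample was ever observed numerically). And oddness of $k$ and $l$ does not imply the coprimality: for $(k,l)=(5,3)$ one has $\gcd(k+1,l)=3$, so for $f(x)=cx+\tr_{q^l/q}(x^{4q+2})$ the kernel dimension is $3$, the nontrivial Walsh values are $0,\pm2^{(n+3)/2}$, and the resulting code does not have the parameters or weight distribution of Table \ref{code2}; similarly $(k,l)=(7,3)$ breaks case 3. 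So your argument is complete only for $x^6$; for $x^{4q+2}$ and $x^{2q+4}$ it proves the statement only under the additional hypotheses $\gcd(k+1,l)=1$, resp.\ $\gcd(k-1,l)=1$, which are absent from the theorem --- indeed your own kernel computation shows the theorem as printed fails without them, an issue the paper's one-line proof never confronts. A final nit: solving the moment system correctly assigns the larger frequency to the smaller weight $2^{n-1}-2^{\frac{n+1}{2}-1}$ (since $X_2-X_3>0$), so the multiplicities in the last two rows of Table \ref{code2} come out interchanged relative to the printed table; your derivation, carried out carefully, would produce that corrected table rather than the printed one.
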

\begin{proof}
    The poof process is similar to Theorem \ref{ex}.
\end{proof}

\begin{table}[H]
		\centering
		\caption{The weight distribution of linear codes in Theorem \ref{ex2}}
		\label{code3}
        \begin{threeparttable}
			\begin{tabular}{cc}
				\toprule
				weight& multiplicity \\
				\midrule
				0& 1\\
                $2^{n-1}$& $2^{n-2}+2^{n+k}-2^{n-2+k}-1$\\
                $2^{n-1}-2^{\frac{n+2}{2}-1}$& $2^{\frac{n}{2}-2+k}-2^{\frac{n}{2}-2}+2^{n-3+k}-2^{n-3}$\\
                $2^{n-1}+2^{\frac{n+2}{2}-1}$& $2^{\frac{n}{2}-2}-2^{\frac{n}{2}-2+k}+2^{n-3+k}-2^{n-3}$\\
				\bottomrule
			\end{tabular}
			
		\end{threeparttable}		
	\end{table}

\begin{Th}
\label{ex2}
    Let $f(x)=cx+\tr_{q^2/q}(x^{2q+4})$, $n=2k$, $k$ be odd, $q=2^k$, $c^{q-1}=\omega$ where $\omega^3=1$. $C_f$ is a $[2^n-1, n+k, 2^{n-1}-2^{\frac{n+2}{2}-1}]$ binary linear code with weight distribution in Table \ref{code3}.
\end{Th}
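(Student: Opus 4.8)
The plan is to follow the same Walsh-transform route used for Theorems \ref{ex} and \ref{ex1}: compute the Walsh spectrum of $f$, show it takes only the values $\{2^n,0,\pm 2^{(n+2)/2}\}$, and then recover the weight distribution of $C_f$ from the standard moment equations together with \eqref{wt}. First I would simplify the exponent inside the trace. Since $n=2k$ and $q=2^k$, the absolute trace is Frobenius-invariant, and using $x^{4q+2}=(x^{2q+4})^q$ together with $b^{1/q}=b^{q}$ for $b\in\gf_{q^2}$, one collapses
\begin{equation*}
\tr_{2^n}(ax+bf(x))=\tr_{2^n}\bigl((a+bc)x+\tr_{q^2/q}(b)\,x^{2q+4}\bigr).
\end{equation*}
Writing $\alpha=a+bc$ and $\beta=\tr_{q^2/q}(b)\in\gf_q$, the spectrum is governed entirely by $(\alpha,\beta)$.

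When $\beta=0$ the sum is the character sum of a linear form, so $W_f=2^n$ if $\alpha=0$ and $0$ otherwise; counting the pairs with $\tr_{q^2/q}(b)=0$ and $a=bc$ gives $X_0=|K|=q=2^k$, hence $d_K=k$ and $\dim C_f=2n-k=n+k$. When $\beta\neq0$ the function $\varphi_{\alpha,\beta}(x)=\tr_{2^n}(\alpha x+\beta x^{2q+4})$ is quadratic, and I would compute its bilinear form
\begin{equation*}
B(x,y)=\tr_{2^n}\bigl(\beta(x^{2q}y^4+x^4y^{2q})\bigr),
\end{equation*}
using $(x+y)^{2q+4}+x^{2q+4}+y^{2q+4}=x^{2q}y^4+x^4y^{2q}$. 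Pushing both terms onto a single power of $x$ via suitable Frobenius twists rewrites the kernel condition as $\beta^{1/2}y^{2q}+\beta^{1/4}y^{2^{k-1}}=0$, equivalently $y=0$ or $y^{3\cdot 2^{k+1}}=\beta^{-1}$.

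The hard part will be showing that the kernel $V_{\varphi_{\alpha,\beta}}$ has dimension exactly $2$ for \emph{every} $\beta\in\gf_q^\ast$, since only then is the nonzero Walsh value forced to be $\pm 2^{(n+2)/2}$. Because $2^{2k}-1$ is odd, $\gcd(3\cdot 2^{k+1},2^{2k}-1)=\gcd(3,2^{2k}-1)=3$, so $y^{3\cdot 2^{k+1}}=\beta^{-1}$ has either $0$ or $3$ solutions in $\gf_{q^2}^\ast$. The crux is that here it always has $3$: as $k$ is odd we have $3\mid 2^k+1$, and any $\beta\in\gf_{2^k}^\ast$ satisfies $\beta^{(2^{2k}-1)/3}=(\beta^{2^k-1})^{(2^k+1)/3}=1$, so $\beta^{-1}$ is a cube in $\gf_{q^2}^\ast$. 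Thus $\#V_{\varphi_{\alpha,\beta}}=4$, i.e. $d=2$, and by Lemma \ref{qua} the only Walsh values for $\beta\neq0$ are $0$ and $\pm 2^{(n+d)/2}=\pm 2^{k+1}$; the oddness of $k$ is precisely what prevents a spurious value $\pm 2^{n/2}$ from appearing.

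Finally I would assemble the distribution. With Walsh values $\{2^n,0,2^{(n+2)/2},-2^{(n+2)/2}\}$ and $X_0=2^k$ known, the three moment identities $\sum X_i=2^{2n}$, $\sum v_iX_i=2^{2n}$, $\sum v_i^2X_i=2^{3n}$ determine $X_1,X_2,X_3$; dividing each by $2^{d_K}=2^k$ (each codeword arising from a coset of $K$) and translating through $\wt(c_{a,b})=2^{n-1}-\tfrac12 W_f(a,b)$ yields the nonzero weights $2^{n-1}$ and $2^{n-1}\pm 2^{(n+2)/2-1}$ with the multiplicities in Table \ref{code3}; the smallest of these is the claimed minimum distance $2^{n-1}-2^{(n+2)/2-1}$. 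I expect the only genuine subtlety beyond bookkeeping to be the cube-counting argument above, the rest being parallel to the proof of Theorem \ref{ex}.
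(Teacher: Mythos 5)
Your proposal is correct and follows exactly the route the paper intends: the paper's proof of this theorem is just the remark that it proceeds ``similar to Theorem \ref{ex}'', i.e., collapse the trace to $\tr_{2^n}\bigl((a+bc)x+\tr_{q^2/q}(b)x^{2q+4}\bigr)$, compute the kernel of the associated bilinear form, and finish with the moment equations and \eqref{wt}. Your cube-counting argument (every $\beta\in\gf_q^\ast$ is a cube in $\gf_{q^2}^\ast$ since $3\mid 2^k+1$, forcing kernel dimension $d=2$ for all $\beta\neq 0$) is precisely the detail the paper leaves implicit, and it is correct; the resulting Walsh values $\{2^n,0,\pm 2^{k+1}\}$ with $X_0=2^k$ reproduce Table \ref{code3}.
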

\begin{proof}
    The poof process is similar to Theorem \ref{ex}.
\end{proof}

\begin{Rem}
    From the weight distributions (Tables \ref{code1}, \ref{code2}, and \ref{code3}) of linear codes in this section, we can see that these linear codes are all self-orthogonal and minimal since all weights are divisible by $4$ and $\frac{w_{\min}}{w_{\max}}>\frac{1}{2}$. 
\end{Rem}

\section{Conclusion}

\label{conclusion}
In this paper, we primarily studied $2$-to-$1$ polynomials of the form $F(x)=G(x)+\tr_{q^l/q}(R(x))$, where  $q=2^k$, $G(x)$ is a monomial and $R(x)$ is a monomial or binomial. Starting from the properties of $2$-to-$1$ mappings, we first utilized MAGMA to search for specific $2$-to-$1$ polynomials data generated by different values of $k$ and $l$ over small fields. By employing the elementary method, combining the properties of Dickson polynomials, and the Dobbertin's multivariate method, we constructed $16$ new infinite classes of $2$-to-$1$ polynomials, which include $9$ classes of $R(x)$ that are monomials and $7$ classes of $R(x)$ that are binomials. Finally, some binary self-orthogonal and minimal linear codes with few weights are provided, implying our constructions of $2$-to-$1$ mappings are useful. Note that the $2$-to-$1$ mappings proposed in this paper can also be applied to constructing balanced Boolean functions as done in  \cite{qu2025parametric}.

\end{document}